\documentclass[a4paper,10pt]{article}
\usepackage[bbgreekl]{mathbbol}
\usepackage{soul}
\usepackage[small]{titlesec}
\usepackage[utf8]{inputenc}
\usepackage[T1]{fontenc}  
\usepackage[]{mdframed}
\usepackage{mathrsfs}
\usepackage{caption}
\usepackage[noerroretextools,backend=biber,style=alphabetic,natbib=false,giveninits=true,doi=true,isbn=false,url=false,date=year,maxbibnames=99,sorting=nty,defernumbers=true]{biblatex}
\DeclareNameAlias{default}{family-given}

\DeclareFieldFormat[article]{title}{#1}
\renewbibmacro{in:}{}
\DeclareFieldFormat[article]{pages}{#1}
\DeclareFieldFormat{journal}{#1}
\renewcommand\bf\bfseries

\renewbibmacro*{volume+number+eid}{%
	\printfield{volume}%
	\setunit*{\addnbspace}% NEW (optional); there's also \addnbthinspace
	\printfield{number}%
	\setunit{\addcomma\space}%
	\printfield{eid}}
\DeclareFieldFormat[article]{number}{\mkbibparens{#1}}
\DeclareFieldFormat[article]{volume}{\textbf{#1}}
\DeclareFieldFormat{year}{\mkbibparens{#1}}
\DeclareBibliographyDriver{article}{%
	\printnames{author}:%
	\newunit\newblock
	\printfield{title}%
	\newunit\newblock
	\printfield{journaltitle}
	\newunit
	\iffieldundef{number}{\printfield{volume}}{\printfield{volume}\addspace\printfield{number}}
	\addcomma\addspace\printfield{pages}\addspace
	\printfield{year}
}
\AtEveryBibitem{\clearfield{month}}
\AtEveryBibitem{\clearfield{day}}
\usepackage[english]{babel}
\usepackage[T1]{fontenc}
\usepackage{csquotes}
\usepackage{bbm}
\usepackage[leqno]{amsmath}
\usepackage{amsfonts,amsthm,amsbsy,amssymb,dsfont,stmaryrd}
\usepackage{aliascnt}
\usepackage[dvipsnames]{xcolor}
\usepackage{braket}

\usepackage{caption}
\usepackage{subcaption}

\usepackage{enumitem}
\usepackage{environ}

\makeatletter
\newcommand{\leqnomode}{\tagsleft@true\let\veqno\@@leqno}
\newcommand{\reqnomode}{\tagsleft@false\let\veqno\@@eqno}
\makeatother

\usepackage{mathtools}
\usepackage[makeroom]{cancel}
\numberwithin{equation}{section}

\newcommand\myshade{85}
\colorlet{mylinkcolor}{violet}
\colorlet{mycitecolor}{YellowOrange}
\colorlet{myurlcolor}{Aquamarine}

\usepackage[left=2.5cm,right=2.5cm,top=2.5cm,bottom=2.5cm]{geometry}
\usepackage[unicode=true,pdfusetitle,
bookmarks=true,bookmarksnumbered=false,bookmarksopen=false,
breaklinks=false,pdfborder={0 0 1},backref=false,
linkcolor  = mylinkcolor!\myshade!black,
citecolor  = mycitecolor!\myshade!black,
urlcolor   = myurlcolor!\myshade!black,
colorlinks = true,
]
{hyperref}

\usepackage{ifthen}

\usepackage{graphicx}
\usepackage{caption}
\usepackage{slashed}
\definecolor{ct_black}{HTML}{000000}
\definecolor{ct_orange}{HTML}{ED872D}
\definecolor{ct_purple}{HTML}{7A68A6}
\definecolor{ct_blue}{HTML}{348ABD}
\definecolor{ct_turquoise}{HTML}{188487}
\definecolor{ct_red}{HTML}{E32636}
\definecolor{ct_pink}{HTML}{CF4457}
\definecolor{ct_green}{HTML}{467821}

\definecolor{ct2_green}{HTML}{9FF781}
\definecolor{ct2_green_dark}{HTML}{088A08}

\theoremstyle{plain}
\newtheorem{thm}{\protect\theoremname}[section]

\newaliascnt{lem}{thm}
\theoremstyle{plain}
\newtheorem{lem}[lem]{\protect\lemmaname}
\aliascntresetthe{lem}

\newaliascnt{cor}{thm}
\theoremstyle{plain}
\newtheorem{cor}[cor]{\protect\corollaryname}
\aliascntresetthe{cor}

\newaliascnt{prop}{thm}
\theoremstyle{plain}

\aliascntresetthe{prop}

\newaliascnt{claim}{thm}
\theoremstyle{plain}
\newtheorem{claim}[claim]{\protect\claimname}
\aliascntresetthe{claim}

\newaliascnt{assumption}{thm}

\aliascntresetthe{assumption}

\theoremstyle{remark}
\newaliascnt{rem}{thm}
\newtheorem{rem}[rem]{\protect\remarkname}
\aliascntresetthe{rem}

\theoremstyle{definition}
\newaliascnt{defn}{thm}
\newtheorem{defn}[defn]{\protect\definitionname}
\aliascntresetthe{defn}

\newaliascnt{example}{thm}
\newtheorem{example}[example]{\protect\examplename}
\aliascntresetthe{example}
\providecommand{\assumptionname}{Assumption}
\providecommand{\claimname}{Claim}
\providecommand{\corollaryname}{Corollary}
\providecommand{\definitionname}{Definition}
\providecommand{\lemmaname}{Lemma}
\providecommand{\propositionname}{Proposition}
\providecommand{\remarkname}{Remark}
\providecommand{\theoremname}{Theorem}
\providecommand{\examplename}{Example}

\usepackage[nameinlink]{cleveref}

\crefname{section}{Section}{Sections}
\crefname{example}{Example}{Examples}
\crefname{appendix}{Appendix}{Appendices}
\crefname{figure}{Figure}{Figures}
\crefname{assumption}{Assumption}{Assumptions}
\crefname{thm}{Theorem}{Theorems}
\crefname{lem}{Lemma}{Lemmas}
\crefname{cor}{Corollary}{Corollaries}
\crefname{prop}{Proposition}{Propositions}
\crefname{claim}{Claim}{Claims}
\crefname{rem}{Remark}{Remarks}
\crefname{defn}{Definition}{Definitions}
\Crefname{assumption}{Assumption}{Assumptions}
\Crefname{thm}{Theorem}{Theorems}
\Crefname{lem}{Lemma}{Lemmas}
\Crefname{cor}{Corollary}{Corollaries}
\Crefname{prop}{Proposition}{Propositions}
\Crefname{claim}{Claim}{Claims}
\Crefname{rem}{Remark}{Remarks}
\Crefname{defn}{Definition}{Definitions}
\Crefname{example}{Example}{Examples}
\crefformat{equation}{(#2#1#3)}
\crefname{table}{Table}{Tables}

\crefrangelabelformat{equation}{(#3#1#4--#5#2#6)}

\crefmultiformat{equation}{(#2#1#3}{, #2#1#3)}{#2#1#3}{#2#1#3}
\Crefmultiformat{equation}{(#2#1#3}{, #2#1#3)}{#2#1#3}{#2#1#3}

\crefname{appendix}{Appendix}{Appendices}
\Crefname{appendix}{Appendix}{Appendices}

\newtheorem*{lem*}{\protect\lemmaname}

\newcommand{\ee}{\operatorname{e}}
\newcommand{\ii}{\operatorname{i}}

\newcommand{\ZZ}{\mathbb{Z}}

\newcommand{\bS}{\mathbb{S}}

\newcommand{\NN}{\mathbb{N}}
\newcommand{\RR}{\mathbb{R}}

\newcommand{\CC}{\mathbb{C}}

\newcommand{\calA}{\mathcal{A}}
\newcommand{\calB}{\mathcal{B}}
\newcommand{\calC}{\mathcal{C}}
\newcommand{\calF}{\mathcal{F}}
\newcommand{\calN}{\mathcal{N}}
\newcommand{\calG}{\mathcal{G}}

\newcommand{\calU}{\mathcal{U}}

\newcommand{\calH}{\mathcal{H}}

\newcommand{\calK}{\mathcal{K}}

\newcommand{\calJ}{\mathcal{J}}
\newcommand{\calP}{\mathcal{P}}

\newcommand{\Compacts}[1]{\mathcal{K}(#1)}

\newcommand{\Automorphisms}[1]{\operatorname{Aut}(#1)}

\newcommand\norm[1]{\left\lVert#1\right\rVert}
\newcommand{\ip}[2]{\langle #1, #2 \rangle}

\newcommand{\dif}{\operatorname{d}\!} % shorten the space afterwards
\newcommand{\tr}{\operatorname{tr}}

\newcommand{\szpan}{\operatorname{span}}
\newcommand{\Ad}[1]{\operatorname{Ad}_{#1}}

\newcommand{\ve}{\varepsilon}
\newcommand{\vf}{\varphi}
\newcommand{\Id}{\mathds{1}}
\newcommand{\dist}{\mathrm{dist}}

\newcommand{\sgn}{\operatorname{sgn}}
\newcommand{\findex}{\operatorname{ind}}

\newcommand{\rank}{\operatorname{rank}}

\newcommand{\coker}{\operatorname{coker}}
\newcommand{\supp}{\operatorname{supp}}

\DeclarePairedDelimiter\floor{\lfloor}{\rfloor}

\newcommand{\im}{\operatorname{im}}

\newcommand{\SO}{\bS^1}

\newcommand{\diam}{\operatorname{diam}}

\NewEnviron{malign}{%
	\begin{align}\begin{split}
			\BODY
	\end{split}\end{align}
}

\newcommand{\eq}[1]{\begin{align*}#1\end{align*}}

\newcommand{\eql}[1]{\begin{align}#1\end{align}}

\newcommand{\onehalf}{\frac{1}{2}}

\newcommand{\br}[1]{\left(#1\right)}

\newcommand{\Cstar}{C*}

\newcommand{\Mod}[1]{\ (\mathrm{mod}\ #1)}

\title{The $\ZZ_2$-Index of a Pair of Pure States and the\\Topology of Interacting 1D Superconductors}
\author{\href{mailto:am1864@princeton.edu}{Anna Mazhar}\\
	{\footnotesize Program in Applied and Computational Mathematics, Princeton University }\\
	\href{mailto:jacobshapiro@princeton.edu}{Jacob Shapiro}\\
	{\footnotesize Department of Mathematics, Princeton University}
}

\begin{document}
\reqnomode
\maketitle
\begin{abstract}
We define a relative $\ZZ_2$-index $\calN(\omega_1,\omega_2)$ for locally-comparable, parity-invariant pure states on a unital $\Cstar$-algebra. We prove that the index is well-defined, multiplicative, invariant under parity-preserving automorphisms, and locally constant in the norm topology. For the one-dimensional self-dual CAR algebra, we apply this construction to the half-chain automorphism $\sigma$ and define the many-body Majorana number
$\calN_\sigma(\omega):=\calN(\omega,\omega\circ\sigma)$
for parity-invariant $\sigma$-local pure states. In the quasi-free Hilbert--Schmidt regime, this index agrees with the usual single-body Majorana number. We then introduce symmetric local automorphism paths and prove that $\calN_\sigma$ completely classifies the resulting automorphic-path-components. This automorphic-path equivalence retains the bulk Majorana number but may forget a relative zero-dimensional parity obstruction.
\end{abstract}
\tableofcontents
\section{Introduction}\label{sec:introduction}

One-dimensional topological superconductors provide a basic setting in which a $\ZZ_2$-valued invariant distinguishes two phases. In the quasi-free Bogoliubov--de Gennes description, this invariant is the Majorana number of the Kitaev chain \cite{Kitaev2001Majorana}. For non-interacting systems, it may be expressed as a mod-$2$ Fredholm index associated with a half-space cut, and its nontrivial value predicts unpaired Majorana boundary modes.

The one-body index has several operator-algebraic formulations. Araki and Evans introduced a relative $\ZZ_2$-index for pairs of basis projections in the self-dual CAR algebra \cite{ArakiEvans1983Ising}. Related quasi-free and state-theoretic formulations were developed in \cite{AzaReyesLegaSequera2022QuasiFree,AzaMussnichReyesLega2022StateIndex}, while the norm-topological classification of one-dimensional particle-hole symmetric Fermi projections was obtained in \cite{ChungShapiro2023}. These constructions are intrinsically quasi-free: their input consists of single-body projections.

There is also a substantial operator-algebraic literature on interacting Fermionic chains. Bourne and Schulz-Baldes introduced $\ZZ_2$-indices for parity-invariant ground states of Fermionic chains \cite{BourneSchulzBaldes2020Z2}. Bourne and Ogata classified one-dimensional Fermionic symmetry-protected topological phases with finite on-site symmetry \cite{BourneOgata2021FermionSPT}, building on the split-state approach to one-dimensional phases; see also \cite{Matsui2013Split,Ogata2021PureStates}. Other recent state-based constructions include the many-body Fu--Kane--Mele index of \cite{BachmannBolsRahnama2024ManyBodyFKM}. Our aim is complementary: first introduce an abstract index capturing the topological obstruction associated with a pair of pure states (as in \cite{BachmannShapiroTauber2026}) and then study how it applies to a physical system. Our notion of locality ($\sigma$-locality below) is morally similar to the split property but is meaningfully weaker than it. It allows for a rather abstract classification, echoing the fact that without locality and symmetry, all pure states are automorphic-path-connected for separable simple C-star-algebras  \cite{KishimotoOzawaSakai2003Homogeneity}. It is moreover similar in spirit to \cite{ChungShapiro2023} in the sense that our definition of locality may seem unphysical initially, but the point is precisely that in order to understand the \emph{topological} structure of the problem we exhibit the minimal requirements necessary to make the indices well-defined. 

Let $\calA$ be a unital $\Cstar$-algebra equipped with an involutive automorphism $\theta$. Two pure states $\omega_1,\omega_2$ are \emph{locally-comparable} if
$\omega_2=\omega_1\circ\Ad{u}$
for some $u\in\calU(\calA)$. Following the relative-state perspective of \cite{BachmannShapiroTauber2026}, for two locally-comparable $\theta$-invariant pure states we define
\eq{
    \calN(\omega_1,\omega_2)
    :=
    \omega_1\br{u^\ast\theta(u)}
    \in\Set{\pm1}\,.
}
We prove in \cref{thm:main theorem for properties of the Z2 index of a pair} that this index is well-defined, multiplicative, invariant under automorphisms commuting with $\theta$, and locally constant in the norm topology. If $\omega_P,\omega_Q$ are the quasi-free states associated with particle-hole symmetric basis projections satisfying
$P-Q\in\calJ_2(\calH)$, then \cref{thm:Hilbert-space CAR correspondence} gives
\eq{
    \calN(\omega_P,\omega_Q)
    =
    \br{-1}^{\dim\im P\cap\ker Q}\,.
}
Thus the relative pure-state index extends the usual mod-$2$ index of a pair of projections. 

%We also record in \cref{sec:time reversal pair index} a time-reversal analogue of the relative index. Let $\tau$ be a conjugate-linear $*$-automorphism satisfying $\tau^2=\theta$. For two $\theta$-invariant pure states which are exchanged by $\tau$ and related by an even inner automorphism, we define a $\ZZ_2$-valued time-reversal pair index. In the quasi-free Hilbert--Schmidt regime, this index agrees with the Katsura--Koma index of a pair of projections \cite{KatsuraKoma2016}. We further explain how the half-line flux-insertion construction of \cite{BachmannShapiroTauber2026} produces the pair of $\pm\pi$-flux states and recovers the many-body Fu--Kane--Mele index of \cite{BachmannBolsRahnama2024ManyBodyFKM}.

We next specialize to the one-dimensional self-dual CAR algebra
\eq{
    \calA
    :=
    \operatorname{CAR}_{\rm sd}(\calH,\Xi),
    \qquad
    \calH
    :=
    \ell^2(\ZZ)\otimes\CC^{2N}
}
Here particle-hole symmetry is built into the self-dual CAR relations, whereas the involution $\theta$ describes Fermionic parity. Let $\sigma$ be the Bogoliubov automorphism induced by
$\Sigma=\sgn(X)$. We call a pure state $\omega$ \emph{$\sigma$-local} if $\omega$ and $\omega\circ\sigma$ are locally-comparable. For a $\theta$-invariant, $\sigma$-local pure state we define
\eq{
    \calN_\sigma(\omega)
    :=
    \calN(\omega,\omega\circ\sigma)\,.
}
Every parity-invariant pure split state is $\sigma$-local, although the converse does not hold. The invariant is unchanged by a finite displacement of the cut, is multiplicative under stacking, and is invariant under parity-preserving locally generated automorphisms. For the Kitaev $k$-chain appearing in \cref{example:Kitaev chain},
\eq{
    \calN_\sigma(\omega_k)
    =
    \br{-1}^k\,.
}

For a particle-hole symmetric basis projection $P$ satisfying
$[P,\Sigma]\in\calJ_2(\calH)$, the corresponding state $\omega_P$ is $\sigma$-local and
\eq{
    \calN_\sigma(\omega_P)
    =
    \br{-1}^{\dim\ker\br{P\Sigma P+P^\perp}}
}
see \cref{eq:single-body many-body correspondence}. The Hilbert--Schmidt condition is precisely what permits the single-body cut to be implemented as a local comparison of the associated pure states; compactness of $[P,\Sigma]$ alone is not sufficient.

The topology of the resulting state space is subtler. The index is locally constant in the norm topology, but it is not weak-$*$ continuous, even on quasi-free states. On the other hand, ordinary norm-path connectivity retains the full GNS sector and is therefore too restrictive for the desired classification. This motivates replacing paths of states by suitably constrained paths of automorphisms.

A symmetric local automorphism path is a point-norm continuous path
\eq{
    [0,1]\ni t
    \longmapsto
    \alpha_t\in\Automorphisms{\calA}
}
starting at the identity, commuting with $\theta$, and satisfying
\eq{
    \partial_\sigma(\alpha_t)
    :=
    \alpha_t\circ\sigma\circ\alpha_t^{-1}\circ\sigma^{-1}
    =
    \Ad{v_t}
}
for a norm-continuous path of $\theta$-even unitaries $v_t$. This condition preserves both $\sigma$-locality and $\calN_\sigma$. The converse is the main classification result. Following the pure-state homogeneity theorem of Kishimoto--Ozawa--Sakai \cite{KishimotoOzawaSakai2003Homogeneity} and Ogata's equivariant adaptation of its fixed-point-algebra argument \cite{Ogata2021PureStates}, we prove
\eq{
    \omega_1
    \text{ and }
    \omega_2
    \text{ are }
    \text{automorphic-path-connected in a local symmetric way}
    \quad\Longleftrightarrow\quad
    \calN_\sigma(\omega_1)
    =
    \calN_\sigma(\omega_2)\,;
}
see \cref{cor:index classifies automorphism path components}. The auxiliary finite-group action used in the proof is generated by parity and the locality involution, or its Majorana-twisted counterpart in the nontrivial sector.

This equivalence is coarser than the relative pair index. In particular, the $k=0$ Kitaev state and its image under a local Majorana flip away from the cut have the same $\sigma$-index and are automorphic-path-connected, although their \emph{relative} pair index is $-1$. Thus $\calN_\sigma$ classifies the bulk obstruction for symmetric local automorphism paths, while the relative index may additionally detect a zero-dimensional parity obstruction. In particular, \emph{just because two pure symmetric local states $\omega_1,\omega_2$ have the same $\calN_\sigma$ index does not imply that they have the trivial relative index}.

The relative index and its basic properties are developed in
\cref{sec:pair index}. The one-dimensional construction for class D superconductors and the
correspondence with the single-body Majorana number are treated in
\cref{sec:Majorana index}. Finally, we prove the
automorphic-path classification in
\cref{sec:automorphism path connectedness}. To complete the discussion on $\ZZ_2$ indices of pairs, in \cref{sec:time reversal pair index}, we discuss the time-reversal pair index, its correspondence with the Katsura--Koma index, and its relation to the many-body Fu--Kane--Mele invariant or class DIII superconducting chains. \cref{sec:a childs garden of locality} compares our notion of $\sigma$-locality with other notions of one-dimensional locality appearing in the literature; ours is the weakest. \cref{sec:LGAs SREs} derives some properties of our index when specialized to short-range entangled states and locally-generated automorphisms.

\paragraph{Acknowledgements}
JS was supported in part by NSF grant DMS-2510207 and the "ChatGPT for Academic Researchers program" of OpenAI. The authors wish to thank Christopher Bourne, Nikita Sopenko, Clement Tauber, Sven Bachmann, and Gian Michele Graf for stimulating discussions.

\section{The $\ZZ_2$-index of a pair of pure states}
\label{sec:pair index}
Let $\calA$ be a unital C*-algebra; its space of pure states is denoted $\calP(\calA)$. Let $\theta:\calA\to\calA$ be an involutive automorphism on it. 
We recall the notion of locally-comparable pure states from \cite{BachmannShapiroTauber2026}.
               
\begin{defn}[Locally-comparable pure states]
    Let $\omega_1,\omega_2$ be two pure states on $\calA$. We say that they are \emph{locally-comparable} iff there exists some unitary $u\in\calA$ such that $\omega_2 = \omega_1\circ\operatorname{Ad}_u$
\end{defn}

Clearly that means $\omega_1,\omega_2$ are in the same super selection sector. Be that as it may, they may still be separated by a topological obstruction.

\begin{defn} Let $\omega_1,\omega_2$ be a pair of locally-comparable pure states such that both are $\theta$-invariant ($\omega_i\circ\theta=\omega_i$ for $i=1,2$). Then the $\ZZ_2$-index of the pair is given by
\eql{
    \calN(\omega_1,\omega_2)\equiv \calN_\theta(\omega_1,\omega_2)  := \omega_1(u^\ast\theta u) \,.
}    Here and in the sequel, we may drop the dependence of the index on the involution $\theta$ since it is fixed once and for all. It is however understood to enter implicitly into all formulas.
\end{defn}
\begin{thm}\label{thm:main theorem for properties of the Z2 index of a pair} The index $\calN$ has the following properties:
\begin{enumerate}
    \item It is well-defined (i.e. it does not depend on the choice of the implementing unitary $u$).
    \item It is $\ZZ_2$-valued: $\calN(\omega_1,\omega_2) \in \mathbb{S}^1\cap\RR=\Set{\pm 1}$.
    \item It is multiplicative: Let $\omega_1, \omega_2, \text{ and }\omega_3$ be pure, $\theta$-invariant states which are all pairwise locally comparable. Then, 
    \[\calN(\omega_1, \omega_2)\calN(\omega_2, \omega_3) = \calN(\omega_1, \omega_3)\,.\]

    In particular, \eql{\calN(\omega,\omega)=+1} and \eql{\calN(\omega_1,\omega_2)\calN(\omega_2,\omega_1)=+1\,.}
    \item If $\omega_1,\omega_2$ are pure and $\theta$ invariant, then if $||\omega_1 - \omega_2||<2 $ then $\omega_1,\omega_2$ are locally comparable and \eql{\calN(\omega_1, \omega_2) = +1\,.}
    \item 
    If $\alpha$ is an automorphism such that $\alpha \circ \theta = \theta \circ \alpha$ then, assuming $\omega_1,\omega_2$ are pure $\theta$-invariant states which are locally-comparable, so are $\omega_1\circ\alpha,\omega_2\circ\alpha$ and \eql{\calN(\omega_1, \omega_2 ) = \calN(\omega_1 \circ \alpha, \omega_2 \circ \alpha)\,.}
    \item Multiplicativity with respect to stacking: if $\omega_1,\omega_2$ are $\theta$-invariant pure states on $\calA$ which are locally-comparable and $\widetilde{\omega_1},\widetilde{\omega_2}$ are $\widetilde{\theta}$-invariant pure states on $\widetilde{\calA}$ which are locally-comparable then $\omega_1\hat{\otimes}\widetilde{\omega_1},\omega_2\hat{\otimes}\widetilde{\omega_2}$ are locally-comparable and \eq{
    \calN(\omega_1\hat{\otimes}\widetilde{\omega_1},\omega_2\hat{\otimes}\widetilde{\omega_2}) = \calN(\omega_1,\omega_2)\calN(\widetilde{\omega_1},\widetilde{\omega_2})\,.
    } 
    % \item Relation to other index if there exists full $U(1)$ invariance. \JS{Perhaps we drop this for now since this is the abstract setting.}
\end{enumerate}
\end{thm}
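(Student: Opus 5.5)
The plan is to move everything into the GNS representation of $\omega_1$ and read $\calN$ off as an eigenvalue of the unitary implementing $\theta$. Let $(\calH,\pi,\Omega)$ be the GNS triple of $\omega_1$; it is irreducible since $\omega_1$ is pure. Because $\omega_1\circ\theta=\omega_1$, the map $\pi(a)\Omega\mapsto\pi(\theta(a))\Omega$ extends to a unitary $\Theta$ with $\Theta\pi(a)\Theta^\ast=\pi(\theta(a))$, $\Theta\Omega=\Omega$ and $\Theta^2=\Id$.

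\textbf{Items 1 and 2.} Fix the convention on $\Ad{u}$ so that $\omega_2$ is the vector state of $\xi:=\pi(u)\Omega$. Since $\omega_2\circ\theta=\omega_2$, the vector $\Theta\xi$ also represents $\omega_2$. In an irreducible representation, two unit vectors inducing the same pure state differ by a phase, so $\Theta\xi=\lambda\xi$ with $\lambda\in\bS^1$. From $\Theta^2=\Id$ we get $\lambda=\pm1$. Moreover
\[
\lambda=\ip{\xi}{\Theta\xi}=\omega_1(u^\ast\theta(u))=\calN(\omega_1,\omega_2).
\]
If $u'$ is another implementer, then $\pi(u')\Omega$ also represents $\omega_2$, hence equals $\mu\xi$ for some phase $\mu$. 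It therefore has the same $\Theta$-eigenvalue, which gives well-definedness.

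\textbf{Item 3 (multiplicativity).} Let $\xi_2,\xi_3\in\calH$ be vectors representing $\omega_2,\omega_3$, with $\Theta\xi_i=\lambda_i\xi_i$. By items 1 and 2, $\lambda_i=\calN(\omega_1,\omega_i)$. Such vectors exist because the three states are pairwise locally comparable. The triple $(\calH,\pi,\xi_2)$ is a GNS triple for $\omega_2$, and there the $\theta$-implementer fixing the cyclic vector is $\Theta'=\lambda_2\Theta$. Hence $\calN(\omega_2,\omega_3)$ is the $\Theta'$-eigenvalue of $\xi_3$, namely $\lambda_2\lambda_3$. Using $\lambda_2^2=1$, we get $\calN(\omega_1,\omega_2)\calN(\omega_2,\omega_3)=\lambda_3=\calN(\omega_1,\omega_3)$. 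The two special cases follow immediately.

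\textbf{Item 4 (norm closeness).} If $\norm{\omega_1-\omega_2}<2$, then by the Glimm--Kadison result $\omega_2$ is a vector state $\xi$ in $\calH$. Kadison transitivity then gives a unitary $u$ with $\pi(u)\Omega=\xi$, so the states are locally comparable. For pure vector states one has $\norm{\omega_1-\omega_2}=2\sqrt{1-|\ip{\Omega}{\xi}|^2}$, so $\ip{\Omega}{\xi}\neq0$. Since $\ip{\Omega}{\xi}=\ip{\Theta\Omega}{\Theta\xi}=\lambda\ip{\Omega}{\xi}$, it follows that $\lambda=1$.

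\textbf{Item 5 (automorphism invariance).} We have $\omega_2\circ\alpha=\omega_1\circ\Ad{u}\circ\alpha=(\omega_1\circ\alpha)\circ\Ad{\alpha^{-1}(u)}$, so the transformed states are locally comparable via $\alpha^{-1}(u)$. Because $\alpha$ commutes with $\theta$, they remain $\theta$-invariant. Then
\[
(\omega_1\circ\alpha)\bigl(\alpha^{-1}(u)^\ast\,\theta(\alpha^{-1}(u))\bigr)=\omega_1(u^\ast\theta(u)),
\]
which is the claim.

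\textbf{Item 6 (stacking).} This is the step I expect to be the main obstacle, because in the graded tensor product $u\hat\otimes\widetilde u$ need not implement the right comparison when $u$ or $\widetilde u$ is not even. I would therefore avoid algebraic implementers and again work with vectors. The GNS representation of $\omega_1\hat\otimes\widetilde{\omega_1}$ is realized on $\calH\otimes\widetilde\calH$ by
\[
a\hat\otimes b\mapsto\pi(a)\Theta^{|b|}\otimes\widetilde\pi(b)
\]
for homogeneous $b$, with cyclic vector $\Omega\otimes\widetilde\Omega$ and grading $\Theta\otimes\widetilde\Theta$. One must check that this representation is irreducible; it generates the bounded operators on $\calH\otimes\widetilde\calH$, since it contains $\pi(\calA)\otimes\Id$ and, after multiplying odd elements by $\Theta$, also $\Id\otimes\widetilde\pi(\widetilde\calA)$. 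Next, the vector $\xi\otimes\widetilde\xi$ induces $\omega_2\hat\otimes\widetilde{\omega_2}$. The extra factor $\lambda^{|b|}$ is harmless because $\widetilde{\omega_2}$ vanishes on odd elements by $\widetilde\theta$-invariance. Kadison transitivity then supplies a unitary implementer, giving local comparability. Finally, $\xi\otimes\widetilde\xi$ is an eigenvector of $\Theta\otimes\widetilde\Theta$ with eigenvalue $\lambda\widetilde\lambda$, and by items 1 and 2 this eigenvalue is the stacked index.
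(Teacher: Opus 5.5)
Your proposal is correct and follows essentially the same route as the paper. In both, the index is read off as the eigenvalue of the canonical parity implementer $\Theta$ on a vector representing the second state; item 4 uses Glimm--Kadison plus the Powers--St{\o}rmer distance formula, and item 6 uses the graded tensor-product representation $\pi(a)\Theta^{|b|}\otimes\widetilde\pi(b)$ with Kadison transitivity. The only difference is in items 1 and 2: you get both at once from $\Theta\pi(u)\Omega=\lambda\pi(u)\Omega$ and $\Theta^2=\Id$, whereas the paper cites an external lemma for $|\calN|=1$ and proves reality by a separate conjugation computation, so your version is slightly more self-contained.
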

In particular, we learn that if $\calN(\omega_1,\omega_2)=-1$ then although the states are inner-equivalent, every parity-preserving norm-continuous path trying to identify them must encounter distance 2 at the pair level.
\begin{proof}
We demonstrate the various points in the order which they were stated.
\begin{enumerate}
    \item We begin with well-definedness. Let $u,v$ be two unitaries such that $\omega_2 = \omega_1\circ\Ad{u}=\omega_1\circ\Ad{v}$.  With $\pi_1$ being the GNS representation associated to $\omega_1$, this implies that 
    \eq{\pi_1(u)\Omega_1 = \lambda \pi_1(v)\Omega_1 } for some $\lambda\in\CC$ such that $|\lambda| =1$ since $\omega_1$ is a pure state. Then
    \eq{
        \omega_1(v^* \theta(v)) &= \langle \Omega_1, \pi_1 \left(v^* \theta(v) \right)\Omega_1\rangle  \\ 
        &= |\lambda|^2\langle \Omega_1, \pi_1 \left(u^* \theta(u) \right)\Omega_1\rangle \\
        &= \omega_1(u^* \theta(u))\,.
    }
%%%%%%%%%%%%%%%%%%%%%%%%%%%%%%%%%%%%%%%%%%%%%%%%%%%%%%%%%%%%%%%%%
    \item We now tend to the $\ZZ_2$-valuedness. Note that\eq{
        \omega_1 \circ \text{Ad}_{\theta(u)} &= \omega_1 \circ \theta \circ \text{Ad}_u \circ \theta \\ 
        &= \omega_1 \circ\text{Ad}_u \circ \theta  \\ 
        &= \omega_2 \circ \theta \\ 
        &= \omega_1 \circ\text{Ad}_u\,.
    }
    This implies \[\omega_1 = \omega_1 \circ \text{Ad}_{\theta(u)} \circ\text{Ad}_{u^*}  = \omega_1 \circ \text{Ad}_{u^*\theta(u)}\,.\]
    Thus, by \cite[Lemma 2.7]{BachmannShapiroTauber2026}, we know that 
    $|\omega_1(u^* \theta(u))| = 1$. 
    In fact, this quantity is real by the following calculation: 
    \begin{align*}
        \overline{\omega_1( u^* \theta(u))} &= \omega((u^* \theta(u))^*) \\ 
        &= \omega_1(\theta(u^*)u) \\ 
        &= \omega_1(\theta (\theta(u^*)u)) \\
        &= \omega_1(u^*\theta(u))\,.
    \end{align*}
    Therefore, $\calN(\omega_1, \omega_2) = \pm 1$.
%%%%%%%%%%%%%%%%%%%%%%%%%%%%%%%%%%%%%%%%%%%%%%%%%%%%%%%%%%%%%%%%%
    \item For multiplicativeness: let $\omega_2 = \omega_1 \circ \text{Ad}_u$, $\omega_3 = \omega_2 \circ \text{Ad}_v$, so $\omega_3 = \omega_1 \circ \text{Ad}_u \circ\text{Ad}_v = \omega_1 \circ \text{Ad}_{vu} $ and so
    \begin{align*}
        \calN(\omega_1, \omega_3) &= \omega_1((vu)^* \theta (vu)) \\ 
        &= \langle \Omega_1, \pi(u^*) \pi(v^*) \Theta \pi(v) \Theta \Theta \pi(u) \Theta \Omega_1\rangle \\ 
        &= \langle \Omega_2,  \pi(v^*) \Theta \pi(v) \Theta \Theta \Omega_2 \rangle \\ 
        &= \calN(\omega_1, \omega_2)\langle \Omega_2,  \pi(v^*) \Theta \pi(v) \Theta \Omega_2 \rangle \\  
        &= \calN(\omega_1, \omega_2) \omega_2 (v^* \theta(v)) \\
         &= \calN(\omega_1, \omega_2)\calN(\omega_2, \omega_3)\,.
    \end{align*}
%%%%%%%%%%%%%%%%%%%%%%%%%%%%%%%%%%%%%%%%%%%%%%%%%%%%%%%%%%%%%%%%%
    \item We now tend to continuity. By
    \cite[Corollaries~8 and~9]{GlimmKadison1960}, if
    \(\norm{\omega_1-\omega_2}<2\), then the GNS representations of
    \(\omega_1\) and \(\omega_2\) are unitarily equivalent, and hence
    there exists a unitary \(u\in\calU(\calA)\) such that
    \eq{
        \omega_2=\omega_1\circ\Ad{u}\,.
    }
    Thus the two states are locally comparable.

    Let \((\pi,\calH,\Omega_1)\) be the GNS representation of
    \(\omega_1\), and set
    \eq{
        \Omega_2:=\pi(u)\Omega_1\,.
    }
    Then \(\Omega_2\) implements \(\omega_2\) in the same representation.
    Since \(\omega_1\) is \(\theta\)-invariant, there is a unitary
    involution \(\Theta\) on \(\calH\) defined by
    \eq{
        \Theta\pi(a)\Omega_1=\pi(\theta(a))\Omega_1,
        \qquad a\in\calA\,.
    }
    In particular, \(\Theta\Omega_1=\Omega_1\). Moreover,
    \eq{
        \Theta\Omega_2
        =\Theta\pi(u)\Omega_1
        =\pi(\theta(u))\Omega_1\, ,
    }
    and therefore
    \eq{
        \calN(\omega_1,\omega_2)
        =
        \omega_1(u^*\theta(u))
        =
        \ip{\pi(u)\Omega_1}{\pi(\theta(u))\Omega_1}
        =
        \ip{\Omega_2}{\Theta\Omega_2}\,.
    }
    Suppose, for contradiction, that
    \(\calN(\omega_1,\omega_2)=-1\). Since \(\Theta\) is unitary and
    \(\Omega_2\) is a unit vector, the equality
    \(\ip{\Omega_2}{\Theta\Omega_2}=-1\) implies
    \eq{
        \Theta\Omega_2=-\Omega_2\,.
    }
    Thus \(\Omega_1\) and \(\Omega_2\) lie in the \(+1\) and \(-1\)
    eigenspaces of the self-adjoint unitary \(\Theta\), respectively, and
    hence
    \eq{
        \ip{\Omega_1}{\Omega_2}=0\,.
    }
    By the Powers--St{\o}rmer vector-state distance formula
    \cite[Lemma~2.4]{PowersStormer1970FreeStates},
     \eql{
        \norm{\omega_1-\omega_2}
        =
        2\sqrt{1-\left|\ip{\Omega_1}{\Omega_2}\right|^2}
        =
        2\, ,
    }
    contradicting the hypothesis \(\norm{\omega_1-\omega_2}<2\). Hence
    \(\calN(\omega_1,\omega_2)\neq -1\). Since the previous item shows
    that \(\calN(\omega_1,\omega_2)\in\Set{\pm1}\), we conclude that
    \eq{
        \calN(\omega_1,\omega_2)=+1\,.
    }
%%%%%%%%%%%%%%%%%%%%%%%%%%%%%%%%%%%%%%%%%%%%%%%%%%%%%%%%%%%%%%%%%
\item $\omega_1, \, \omega_2$ are locally comparable via unitary $u \in \calU(\calA)$, so 
\eq{\omega_2 \circ \alpha = (\omega_1 \circ \alpha) \circ \text{Ad}_{\alpha^{-1}(u)} } 
and 
\eq{
\calN(\omega_1 \circ \alpha, \omega_2 \circ \alpha) = \omega_1 \circ \alpha (\alpha^{-1}(u)^* \theta(\alpha^{-1}(u)))  = \omega_1 (u^*\theta(u)) = \calN(\omega_1, \omega_2)\,.
}

\item Let \(u\in\calU(\calA)\) and
\(\widetilde u\in\calU(\widetilde{\calA})\) implement the respective
local-comparability relations. Set
\eq{
    \varepsilon
    &:=
    \calN(\omega_1,\omega_2),&
    \widetilde\varepsilon
    &:=
    \calN(\widetilde{\omega}_1,\widetilde{\omega}_2)\,.
}
Let \((\pi,\calH,\Omega_1)\) and
\((\widetilde\pi,\widetilde{\calH},\widetilde\Omega_1)\) be the GNS
representations of \(\omega_1\) and \(\widetilde{\omega}_1\),
respectively, and let \(\Theta\) and \(\widetilde\Theta\) be their
canonical parity implementers. Thus
\eq{
    \Theta\Omega_1=\Omega_1,
    \qquad
    \widetilde\Theta\widetilde\Omega_1=\widetilde\Omega_1\,.
}
Define
\eq{
    \Omega_2:=\pi(u)\Omega_1,
    \qquad
    \widetilde\Omega_2
    :=
    \widetilde\pi(\widetilde u)\widetilde\Omega_1\,.
}
As in the calculation above,
\eq{
    \ip{\Omega_2}{\Theta\Omega_2}
    =\varepsilon,
    \qquad
    \ip{\widetilde\Omega_2}
       {\widetilde\Theta\widetilde\Omega_2}
    =\widetilde\varepsilon\,.
}
Since \(\varepsilon,\widetilde\varepsilon\in\Set{\pm1}\), equality in
Cauchy--Schwarz gives
\eq{
    \Theta\Omega_2
    =\varepsilon\Omega_2,
    \qquad
    \widetilde\Theta\widetilde\Omega_2
    =\widetilde\varepsilon\widetilde\Omega_2\,.
}

Set
\eq{
    \calC:=\calA\hat{\otimes}\widetilde{\calA},
    \qquad
    \vartheta:=\theta\hat{\otimes}\widetilde\theta\,.
}
The graded tensor-product representation \(\Pi\) on
\(\calH\otimes\widetilde{\calH}\) is given, for homogeneous
\(a\in\calA\) and \(\widetilde a\in\widetilde{\calA}\), by
\eq{
    \Pi(a\hat{\otimes}\widetilde a)
    :=
    \pi(a)\Theta^{|\widetilde a|}
    \otimes\widetilde\pi(\widetilde a)\,.
}
This representation is irreducible. Indeed, its bicommutant contains
\(\calB(\calH)\otimes\Id_{\widetilde{\calH}}\), and hence also
\(\Theta\otimes\Id_{\widetilde{\calH}}\); it therefore contains
\(\Id_{\calH}\otimes\calB(\widetilde{\calH})\) as well. Consequently,
\(\Pi(\calC)''=\calB(\calH\otimes\widetilde{\calH})\)

The vectors
\eq{
    \Psi_1
    &:=
    \Omega_1\otimes\widetilde\Omega_1,&
    \Psi_2
    &:=
    \Omega_2\otimes\widetilde\Omega_2
}
represent, respectively,
\(\omega_1\hat{\otimes}\widetilde{\omega}_1\) and
\(\omega_2\hat{\otimes}\widetilde{\omega}_2\). For the second assertion,
it suffices to observe that
\eq{
    \ip{\Psi_2}
       {\Pi(a\hat{\otimes}\widetilde a)\Psi_2}
    &=
    \varepsilon^{|\widetilde a|}
    \omega_2(a)\widetilde{\omega}_2(\widetilde a)\\
    &=
    \omega_2(a)\widetilde{\omega}_2(\widetilde a)\,.
}
Indeed, when \(\widetilde a\) is odd,
\(\widetilde{\omega}_2(\widetilde a)=0\) by
\(\widetilde\theta\)-invariance. In particular, both graded product
states are pure. Since they are vector states in the same irreducible
representation, Kadison transitivity provides a unitary
\(w\in\calU(\calC)\) such that
\eq{
    \Pi(w)\Psi_1=\Psi_2\,.
}
It follows that
\eq{
    \omega_2\hat{\otimes}\widetilde{\omega}_2
    =
    \left(
    \omega_1\hat{\otimes}\widetilde{\omega}_1
    \right)\circ\Ad{w}\, ,
}
so the two product states are locally comparable.

Finally, \(\vartheta\) is implemented in this representation by
\(\Theta\otimes\widetilde\Theta\), which fixes \(\Psi_1\). Therefore,
\eq{
    \calN\left(
        \omega_1\hat{\otimes}\widetilde{\omega}_1,
        \omega_2\hat{\otimes}\widetilde{\omega}_2
    \right)
    &=
    \left(
        \omega_1\hat{\otimes}\widetilde{\omega}_1
    \right)\left(w^*\vartheta(w)\right)\\
    &=
    \ip{\Psi_2}
       {(\Theta\otimes\widetilde\Theta)\Psi_2}\\
    &=
    \varepsilon\widetilde\varepsilon\\
    &=
    \calN(\omega_1,\omega_2)
    \calN(\widetilde{\omega}_1,\widetilde{\omega}_2)\,.
}

\end{enumerate}
\end{proof}

\begin{example}[$\theta$ but not $U(1)$ symmetric state]
    Let $\calA = \operatorname{CAR(\ell^2(\NN)})$ and define the $U(1)$ gauge automorphism as $\rho_t a(\vf) \equiv \ee^{-\ii t}a(\vf)$ for any generator, with $\vf\in\ell^2(\NN)$. Then the involution can be taken as the parity automorphism, $\theta:=\rho_{\pi}$. Then, a pure state which is $\theta$ invariant but not fully $\rho_t$ invariant is described as follows. Let $\Set{e_j}_j$ be the standard basis for $\ell^2(\NN)$ and $\omega_0$ the quasi-free state associated with the zero-projection on $\ell^2(\NN)$. Define then an automorphism $\beta$ by $\beta(a(e_1)):=\frac{1}{\sqrt{2}}\br{a(e_1)+a(e_2)^\ast}$ and $\beta(a(e_2)):=\frac{1}{\sqrt{2}}\br{a(e_2)-a(e_1)^\ast}$. Finally $\beta(a(\vf)):=a(\vf)$ for all $\vf\perp\szpan(\Set{e_1,e_2})$. Then $\omega:=\omega_0\circ\beta$ is the desired state. Indeed, $\beta$ commutes with $\theta$ because it maps generators to (odd linear combination of) generators. However, one may calculate
    \eq{
    \omega(\rho_t(a(e_1)a(e_2))) = -\frac12\ee^{-2\ii t}\, ,
    } and taking $t=\pi/2$ yields non invariance.
\end{example}
\begin{example}[Non-trivial index]
Let $\calA = \operatorname{CAR}(\ell^2(\NN))$ and $\theta$ be the parity automorphism, i.e., $\theta a(\vf)\equiv -a(\vf)$ on generators. Consider then the pure state $\omega_\Id$ which corresponds to the quasi-free state associated with the identity projection, and also $\omega_o:=\omega_\Id\circ\Ad{u}$ where $u := a(\vf)+a(\vf)^\ast$ for any normalized $\vf\in\ell^2(\NN)$. Then $\omega_\Id,\omega_o$ are locally comparable, both $\theta$-invariant, and 
\eq{
\calN(\omega_\Id,\omega_o) = -1\,.
}
\end{example}

\subsection{Relation to the $\ZZ_2$-index of a pair of projections}

To relate the above index of a pair notion on an abstract C-star algebra with a Hilbert space notion, we specify to the case that the unital C-star algebra $\calA$ arises as the CAR algebra associated to a given Hilbert space $\calH$. One then needs slightly more structure on $\calH$: an anti-unitary $\Xi$ which square to $+\Id$.

Let then $P,Q$ be two projections on a given Hilbert space $\calH$, and $\Xi$ be an anti-unitary such that $\Xi^2=+\Id$. Assume further that $P-Q\in\calK$ and that $P,Q$ are both $\Xi$-symmetric, in the sense that $\Xi P \Xi = P^\perp$ and $\Xi Q \Xi = Q^\perp$. Then, in this context, it is well-known \cite{KatsuraKoma2016} (or the original \cite{AtiyahSinger1969} as well as \cite{SchulzBaldes2015Z2Indices,FonsecaEtAl2020,ChungShapiro2023}) that there is a $\ZZ_2$-index associated with the pair of projections, given by 
\eql{
    \findex(P,Q) \equiv \dim \im P \cap\ker Q\mod 2 = \dim \chi_{\Set{+1}}\br{P-Q}\mod 2\,.
} One notes that the usual ($\ZZ$-valued) index of a pair of projections, as in e.g. \cite{Avron1994220}, is always zero in this context. Moreover, \cite{SchulzBaldes2015Z2Indices,KatsuraKoma2016,FonsecaEtAl2020} deal with a somewhat different anti-unitary structure: an anti-unitary that squares to $-\Id$; see \cref{sec:time reversal pair index} below.

We can also lift these projections to make them quasi-free $\theta$-invariant pure states $\omega_P,\omega_Q$ on the self-dual CAR algebra $\calA := \operatorname{CAR}_{\rm sd}(\calH,\Xi)$ \cite{Araki1970QuasifreeCAR} via the usual Gaussian law. Then 
\begin{thm}\label{thm:Hilbert-space CAR correspondence}
    If additionally $P-Q\in \calJ_2$ then $\omega_P$ and $\omega_Q$ are locally-comparable and 
    \eql{
    \calN(\omega_P,\omega_Q) = \br{-1}^{\dim \im P \cap\ker Q}\,.
    }
\end{thm}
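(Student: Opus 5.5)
Throughout, $\theta$ denotes the parity automorphism of $\calA=\operatorname{CAR}_{\rm sd}(\calH,\Xi)$, acting as $B(f)\mapsto -B(f)$ on the self-dual generators. The plan is to go through Shale--Stinespring/Araki implementability: first reduce to a Bogoliubov unitary relating $P$ and $Q$, then read off the parity of the implementer. Concretely, we will construct a $\Xi$-commuting unitary $U$ on $\calH$ with $UPU^\ast=Q$ (or its inverse, depending on convention) such that $U-\Id\in\calJ_2$ after correcting the kernel $\im P\cap\ker Q$. The natural candidate is the Kato-type intertwiner
\[
U_0 := \br{QP+Q^\perp P^\perp}\abs{\,\cdot\,}^{-1}
\]
(polar decomposition), where $\abs{\,\cdot\,}^{-1}$ is the inverse of the modulus of $QP+Q^\perp P^\perp$. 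Since $\br{QP+Q^\perp P^\perp}^\ast\br{QP+Q^\perp P^\perp}=\Id-(P-Q)^2$, this operator is invertible exactly off the $\pm1$ eigenspaces of $P-Q$. Those eigenspaces are $\im P\cap\ker Q$ and $\ker P\cap \im Q$, and they are finite-dimensional because $P-Q\in\calJ_2$.

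The finite-dimensional part is handled next. $\Xi$ maps $\im P\cap\ker Q$ antiunitarily onto $\ker P\cap\im Q$, so both have the same dimension $n$. On the $2n$-dimensional subspace $K$ spanned by these two spaces, we define $U$ as a $\Xi$-symmetric unitary exchanging them. Off $K$, we take $U=U_0$, which commutes with $\Xi$ because $\Xi P\Xi=P^\perp$ and $\Xi Q\Xi=Q^\perp$. Then $U-\Id\in\calJ_2$ holds because $U_0-\Id$ is Hilbert--Schmidt, being controlled by $(P-Q)$. The Bogoliubov automorphism $\beta_U$ therefore satisfies $\omega_Q=\omega_P\circ\beta_U$ (or its inverse, again by convention). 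Moreover, Araki's implementability criterion, that $P U P^\perp$ be Hilbert--Schmidt, shows $\beta_U$ is unitarily implementable in the GNS representation of $\omega_P$. Since $\pi_P$ is irreducible, the implementer gives a vector representative $\Omega_Q$ in the same irreducible representation. Kadison transitivity (as in item 6 of \cref{thm:main theorem for properties of the Z2 index of a pair}) then produces $u\in\calU(\calA)$ with $\omega_Q=\omega_P\circ\Ad{u}$, which gives local comparability.

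For the index, we use the formula from the proof of \cref{thm:main theorem for properties of the Z2 index of a pair}: $\calN(\omega_P,\omega_Q)=\ip{\Omega_Q}{\Theta\Omega_Q}$, where $\Theta$ is the parity implementer fixing $\Omega_P$ and $\Omega_Q=\Gamma(U)\Omega_P$. A Bogoliubov implementer is either even or odd, and its parity equals $\det$ of the restriction of $U$ in the sense of the $\mathrm{O}(2)$ component. Equivalently, it equals the Pfaffian sign or spinor-norm sign relative to the polarization given by $P$. The part $U_0$ is path-connected to $\Id$ within $\Xi$-symmetric unitaries with $U-\Id\in\calJ_2$, via $P_t$ interpolating along the angle operator. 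Along this path, the index is locally constant by item 4 of \cref{thm:main theorem for properties of the Z2 index of a pair} together with norm continuity of $t\mapsto\omega_{P_t}$ in the Hilbert--Schmidt topology, so this part contributes $+1$. Each exchange of a pair $\phi\in\im P\cap\ker Q$ with $\Xi\phi$ is implemented by a single Majorana-type unitary $B(\phi+\Xi\phi)/\sqrt2$, which is odd, exactly as in the non-trivial-index example. Combining these contributions multiplicatively via item 3 of \cref{thm:main theorem for properties of the Z2 index of a pair} gives $(-1)^n$.

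The main obstacle will be the continuity step. It requires showing that $t\mapsto\omega_{P_t}$ is norm continuous when $P_t$ varies in the $\calJ_2$ topology, for instance via the Powers--Størmer bound $\norm{\omega_P-\omega_Q}\le C\norm{P-Q}_2$, and that no $\pm1$ eigenvalues of $P_t-P$ appear along the path. The alternative is a direct Fock-space computation of the parity of the implementer. If the continuity route proves awkward, we would instead decompose $P$ and $Q$ via Halmos two-projection theory into $2\times2$ angle blocks. On each such block the quasi-free states are related by an explicit even rotation $\exp$ of a quadratic element, so only the $n$ exchanged dimensions contribute odd factors.
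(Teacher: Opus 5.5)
Your proposal is correct, but for the sign it takes a genuinely different route from the paper. For local comparability you and the paper agree: Shale--Stinespring implementability, then Kadison transitivity.

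For the index, the paper builds the GNS intertwiner $W$ and shows $\Theta_PW=\calN(\omega_P,\omega_Q)\,W\Theta_Q$. It then imports the identification of this sign with $(-1)^{\dim\im P\cap\ker Q}$ from the parity-sector theorem of \cite{AzaMussnichReyesLega2022StateIndex}.

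You instead compute the sign using only the properties in \cref{thm:main theorem for properties of the Z2 index of a pair}. Split off the finite-dimensional, $\Xi$-invariant defect space $K=(\im P\cap\ker Q)\oplus(\ker P\cap\im Q)$, and pass through the intermediate projection $R:=P\Pi_{K^\perp}+Q\Pi_K$, with $\Pi_K$ the orthogonal projection onto $K$. This $R$ is a basis projection precisely because $\Xi K=K$. You then get $\calN(\omega_P,\omega_R)=(-1)^n$ from $n$ odd Majorana flips, exactly as in the Kitaev-chain example. The generic part gives $\calN(\omega_R,\omega_Q)=+1$, and item 3 combines the two.

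Your route is self-contained and makes the origin of the sign transparent. The paper's route is shorter and also records that the intertwiner preserves or swaps the parity sectors, which is what ties $\calN$ to the Araki--Evans-type formulations.

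Three clean-ups:
\begin{itemize}
\item The angle-operator path and the continuity issue you flag as the main obstacle are unnecessary. Since $R-Q=(P-Q)\Pi_{K^\perp}$ is Hilbert--Schmidt with no eigenvalues $\pm1$, you have $\det(\Id-(R-Q)^2)>0$. The transition-probability formula in the proof of \cref{lem:HS-topology-qf-norm-convergence} then gives $\norm{\omega_R-\omega_Q}<2$, and item 4 applies directly.
\item The Majorana unitary is $B(\phi+\Xi\phi)$ itself, since $\norm{\phi+\Xi\phi}^2=2$ gives $B(\phi+\Xi\phi)^2=\Id$. It is not $B(\phi+\Xi\phi)/\sqrt2$.
\item Drop the remark identifying the implementer's parity with a determinant of $U$. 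It is not used, and it is imprecise when $U-\Id$ is only Hilbert--Schmidt.
\end{itemize}
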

Note that as in \cite{BachmannShapiroTauber2026}, currently we cannot cover the full generality of the Hilbert space theorem: we can only manage with the assumption that $P-Q\in\calJ_2$ rather than $P-Q\in\calK$.
\begin{proof}
Since
\eq{
    P-Q\in\calJ_2(\calH),
}
the Shale--Stinespring criterion implies that the GNS representations
of the pure quasi-free states $\omega_P$ and $\omega_Q$ are unitarily
equivalent. By the Kadison transitivity theorem, there is therefore a
unitary $u\in\calU(\calA)$ such that
\eql{
    \omega_Q=\omega_P\circ\Ad{u}\,.
}
Thus $\omega_P$ and $\omega_Q$ are locally-comparable.

We now identify their many-body index. Let
$(\calH_P,\pi_P,\Omega_P)$ and $(\calH_Q,\pi_Q,\Omega_Q)$ be the GNS
representations of $\omega_P$ and $\omega_Q$, respectively, and let
$\Theta_P$ and $\Theta_Q$ be the self-adjoint unitaries implementing
$\theta$ in these representations. Thus
\eq{
    \Theta_P\pi_P(a)\Theta_P&=\pi_P(\theta a),&
    \Theta_P\Omega_P&=\Omega_P,\\
    \Theta_Q\pi_Q(a)\Theta_Q&=\pi_Q(\theta a),&
    \Theta_Q\Omega_Q&=\Omega_Q
}
for all $a\in\calA$.

Set
\eql{
    \widetilde{\Omega_Q}:=\pi_P(u)\Omega_P\,.
}
Then $\widetilde{\Omega_Q}$ implements $\omega_Q$ in the GNS
representation of $\omega_P$:
\eql{
    \omega_Q(a)
    =
    \ip{\widetilde{\Omega_Q}}
        {\pi_P(a)\widetilde{\Omega_Q}}
    \qquad(a\in\calA)\,.
}
Since $\omega_P$ and $\omega_Q$ are both $\theta$-invariant, $\theta u$
is also an implementing unitary for $\omega_Q$. Consequently,
$\pi_P(u)\Omega_P$ and $\pi_P(\theta u)\Omega_P$ implement the same
pure state in the irreducible representation $\pi_P$, and hence they
differ by a phase:
\eql{
    \pi_P(\theta u)\Omega_P
    =
    \lambda\pi_P(u)\Omega_P
}
for some $\lambda\in\mathbb{S}^1$. Multiplying by $\pi_P(u^\ast)$ and
taking the inner product with $\Omega_P$ gives
\eq{
    \lambda
    &=
    \ip{\Omega_P}{\pi_P(u^\ast\theta u)\Omega_P}\\
    &=
    \omega_P(u^\ast\theta u)\\
    &=
    \calN(\omega_P,\omega_Q)\,.
}
In particular, $\lambda\in\Set{\pm1}$. Moreover,
\eq{
    \Theta_P\widetilde{\Omega_Q}
    &=
    \Theta_P\pi_P(u)\Omega_P\\
    &=
    \pi_P(\theta u)\Omega_P\\
    &=
    \lambda\pi_P(u)\Omega_P\\
    &=
    \lambda\widetilde{\Omega_Q}\,.
}

Let $W:\calH_Q\to\calH_P$ be the unitary intertwiner determined by
\eql{
    W\pi_Q(a)\Omega_Q
    :=
    \pi_P(a)\widetilde{\Omega_Q},
    \qquad a\in\calA\,.
}
For every $a\in\calA$ we then have
\eq{
    \Theta_PW\pi_Q(a)\Omega_Q
    &=
    \Theta_P\pi_P(a)\widetilde{\Omega_Q}\\
    &=
    \pi_P(\theta a)\Theta_P\widetilde{\Omega_Q}\\
    &=
    \lambda\pi_P(\theta a)\widetilde{\Omega_Q}\\
    &=
    \lambda W\pi_Q(\theta a)\Omega_Q\\
    &=
    \lambda W\Theta_Q\pi_Q(a)\Omega_Q\,.
}
Since $\pi_Q(\calA)\Omega_Q$ is dense in $\calH_Q$, it follows that
\eql{
    \Theta_PW
    =
    \lambda W\Theta_Q
    =
    \calN(\omega_P,\omega_Q)W\Theta_Q\,.
}
Thus $W$ preserves the parity eigenspaces when
$\calN(\omega_P,\omega_Q)=+1$ and exchanges them when
$\calN(\omega_P,\omega_Q)=-1$.

By the parity-sector characterization of the relative index of two
basis projections
\cite[Theorem~1 (b) and (c)]{AzaMussnichReyesLega2022StateIndex}, the
sign specifying whether these parity eigenspaces are preserved or
exchanged is
\eql{
    \br{-1}^{\dim\im P\cap\ker Q}\,.
}
We conclude that
\eql{
    \calN(\omega_P,\omega_Q)
    =
    \br{-1}^{\dim\im P\cap\ker Q}\,.
}
Equivalently, under the identification
$\Set{\pm1}\cong\ZZ_2$, this is the parity
\eql{
    \dim\im P\cap\ker Q\mod 2\,.
}
\end{proof}

\section{1D class D superconductors and their Majorana index}
\label{sec:Majorana index}

We now specify to our physics application; to do so we start by setting up the context for superconductors in the interacting setting.

Let $\mathcal{H}$ be a separable Hilbert space and $\Xi:\mathcal{H}\to\mathcal{H}$
be anti-unitary such that $\Xi^{2}=\Id$. With this data we define the canonical
self-dual CAR algebra \cite{Araki1970QuasifreeCAR} 
\[
\mathcal{A}:=\text{CAR}_{\text{SD}}\left(\mathcal{H},\Xi\right)
\]
which is the universal C* algebra generated by the linear map
\[
B:\mathcal{H}\to\mathcal{A}
\]
such that 
\[
B\left(\Xi\psi\right)=B\left(\psi\right)^{\ast}
\]
and 
\[
\Set{B\left(\psi\right)^{\ast},B\left(\varphi\right)}=\left\langle \psi,\varphi\right\rangle 1_\calA\,.
\] We interpret $\calH$ as the one-particle Bogoliubov-de Gennes Hilbert space, $\Xi$ as particle-hole conjugation and $\calA$ as the algebra of Fermionic observables.

On $\mathcal{A}$ we have the parity automorphism $\theta:\mathcal{A}\to\mathcal{A}$
given by 
\[
\theta B\left(\psi\right)\equiv-B\left(\psi\right)\qquad\left(\psi\in\mathcal{H}\right)
\]
and extended linearly and continuously.

\begin{rem}The self-dual CAR algebra carries no canonical non-trivial
\emph{scalar} \(U(1)\) gauge action. Indeed, if
\eq{
    \rho_t\left(B(\psi)\right)
    :=
    \ee^{-\ii t}B(\psi),
}
then self-duality and preservation of the adjoint would respectively give
\eq{
    \rho_t\left(B(\psi)^*\right)
    &=
    \ee^{-\ii t}B(\Xi\psi),&
    \rho_t\left(B(\psi)\right)^*
    &=
    \ee^{\ii t}B(\Xi\psi).
}
Thus the transformation is compatible with the self-dual relation only
when \(\ee^{-\ii t}=\ee^{\ii t}\). The only non-trivial surviving
transformation is \(t=\pi\), which is precisely the parity automorphism
\(\theta\).

Non-scalar Bogoliubov \(U(1)\)-actions may nevertheless exist. A strongly
continuous unitary representation
\eq{
    \RR/2\pi\ZZ\ni t
    \longmapsto
    V_t\in\calU(\calH),
    \qquad
    V_t\Xi=\Xi V_t,
}
defines
\eq{
    \rho_t^V\left(B(\psi)\right)
    :=
    B(V_t\psi).
}
If \(V_t=\ee^{-\ii tQ}\), the compatibility condition is
\(\Xi Q\Xi=-Q\), so a non-zero compatible charge cannot be scalar. For
example, a basis projection \(P\), satisfying
\(\Xi P\Xi=P^\perp\), gives
\eq{
    V_t^P
    &:=
    \ee^{-\ii t}P+\ee^{\ii t}P^\perp,&
    \rho_t^P\left(B(\psi)\right)
    &:=
    B(V_t^P\psi),
}
with
\eq{
    V_t^P\Xi
    &=
    \Xi V_t^P,&
    V_\pi^P
    &=
    -\Id,&
    \rho_\pi^P
    &=
    \theta.
}
Such an action depends on the additional choice of \(P\), or more generally
of \(Q\), and is not intrinsic to the self-dual CAR algebra. Thus the
absence of \(U(1)\) symmetry means that no continuous charge symmetry is
specified, not that no circle action can be chosen.
\end{rem}

\begin{rem}\(\Xi\) is not an additional physical symmetry. The relation
\eq{
    B(\Xi\psi)
    =
    B(\psi)^*
}
identifies \(\Xi\) with the adjoint operation on the field generators:
injectivity of \(B\), involutivity of the adjoint, and the CAR relations
make \(\Xi\) uniquely determined, involutive, and anti-unitary. It therefore
records the particle--hole redundancy of the Nambu description rather than
an automorphism under which states must be invariant. The one-particle
condition \(\Xi P\Xi=P^\perp\) is the corresponding statement for a
covariance projection; in the many-body algebra this constraint is already
built into the self-dual CAR relations. See
\cref{sec:role of particle-hole symmetry} below.
\end{rem}

\subsection{Local states in one-dimension}

We now specify to one dimension by choosing $\mathcal{H}=\ell^{2}\left(\mathbb{Z}\right)\otimes\mathbb{C}^{2N}$
for some $N\in\mathbb{N}$. Hence we explicitly assume that the internal on-site space has even dimension; this would be automatically guaranteed if we had a basis projection $P$ obeying both $\Xi P \Xi=P^\perp$ and $[P,X]=0$ (assuming $[\Xi,X]=0$ too).

On $\mathcal{A}$ we have the so-called locality-automorphism. It is the Bogoliubov automorphism associated with the locality automorphism appearing in the classification scheme of Chung and Shapiro \cite{ChungShapiro2023}.
If $\Sigma:=\sgn\left(X\right)$ (with $X$ the position operator)
is a self-adjoint unitary on $\mathcal{H}$ given by 
\[
\sgn\left(X\right)\delta_{x}:=\begin{cases}
\delta_{x} & x>0\\
-\delta_{x} & x\leq0
\end{cases}\, ,
\]
then, assuming $\left[\Xi,X\right]=0$, the (outer automorphism) $\sigma:\mathcal{A}\to\mathcal{A}$
is given by the Bogoliubov automorphism 
\[
B\left(\psi\right)\mapsto B\left(\Sigma\psi\right)\qquad\left(\psi\in\mathcal{H}\right)\,.
\]

On $\mathcal{A}$ we consider pure states
which are $\sigma$-local, i.e.,
\begin{defn}[$\sigma$-local states]
 We say that $\omega\in\mathcal{P}\left(\mathcal{A}\right)$ is local
iff $\omega$ and $\omega\circ\sigma$ are locally-comparable, i.e.,
there exists some $u\in\mathcal{U}\left(\mathcal{A}\right)$ with
which 
\[
\omega\circ\sigma=\omega\circ\operatorname{Ad}_{u}\equiv\omega\left(u^{\ast}\cdot u\right)\,.
\]
\end{defn}

Let $\Lambda := \chi_\NN(X)$ be the projection onto the RHS of space on $\calH$. Then clearly $\Lambda=\frac12\br{\Sigma+\Id}$, thus $[A,\Lambda]\in\calK$ iff $[A,\Sigma]\in\calK$. Thus, recall that Chung-Shapiro defined a Fermi projection $P=P^2=P^\ast$  to be local in 1D iff $[P,\Lambda]\in\calK$ \cite{ChungShapiro2023}.

The first thing to note about the $\sigma$-local many-body definition is that it does \emph{not} capture the full generality of the single-body definition.
\begin{claim}
    There exist projections $P=P^2=P^\ast$ on $\calH=\ell^2(\ZZ)\otimes\CC^{2N}$ which have $[\Sigma,P]\in\calK(\calH)$ (and is hence $\Lambda$-local in the sense of Chung-Shapiro \cite{ChungShapiro2023}) but such that the associated quasi-free state $\omega_P$ is \emph{not} $\sigma$-local. It can moreover be chosen to be particle-hole symmetric in the sense that $\Xi P \Xi = P^\perp$.
\end{claim}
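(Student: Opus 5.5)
Our strategy is to reduce $\sigma$-locality of $\omega_P$ to a Shale--Stinespring condition and then write down a particle-hole symmetric $P$ for which $[\Sigma,P]$ is compact but not Hilbert--Schmidt.

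\textbf{Step 1: reduce $\sigma$-locality to unitary equivalence.} Since $\sigma$ is the Bogoliubov automorphism of the self-adjoint unitary $\Sigma$, which commutes with $\Xi$, we have
\[
\omega_P\circ\sigma=\omega_{\Sigma P\Sigma}\,.
\]
Here $\Sigma P\Sigma$ is again a basis projection. If $\omega_P$ were $\sigma$-local, then $\omega_P$ and $\omega_{\Sigma P \Sigma}$ would be locally-comparable, so their GNS representations would be unitarily equivalent. By the Shale--Stinespring (Powers--St{\o}rmer/Araki) criterion for pure quasi-free states of the self-dual CAR algebra, equivalence of $\omega_P$ and $\omega_{P'}$ holds iff $P-P'\in\calJ_2$. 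So it suffices to find $P$ with
\[
P-\Sigma P\Sigma=[P,\Sigma]\Sigma\notin\calJ_2
\quad\text{but}\quad
[P,\Sigma]\in\calK\,.
\]
This is the converse direction of the criterion used in the proof of \cref{thm:Hilbert-space CAR correspondence}.

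\textbf{Step 2: construct $P$ from decoupled two-site blocks.} Take $N=1$, so each site carries $\CC^2$. Choose $\Xi$ to act as complex conjugation in a Majorana-type basis in which real projections are not allowed; a convenient choice is $\Xi=\sigma_x K$ on $\CC^2$. With this $\Xi$, the projection $e_\uparrow$ onto the first basis vector satisfies $\Xi e_\uparrow\Xi=e_\downarrow=e_\uparrow^\perp$, so an on-site $P_0=\Id\otimes e_\uparrow$ is a basis projection commuting with $\Sigma$.

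Next, perturb $P_0$ by a sequence of two-site blocks straddling the cut, pairing site $-n$ with site $n$. On each block $\CC^2_{-n}\oplus\CC^2_{n}$, replace $P_0$ by $R_{\phi_n}P_0R_{\phi_n}^*$. Here $R_\phi$ is a real rotation mixing the $\uparrow$ components of $-n$ and $n$, together with the $\downarrow$ components via the conjugate rotation, so that $R_\phi$ commutes with $\Xi$. This commutation preserves $\Xi P\Xi=P^\perp$ blockwise.

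Since $\Sigma$ acts as $\operatorname{diag}(-1,+1)$ across each pair, each block contributes
\[
\|[P,\Sigma]_{\text{block}}\|\sim|\sin 2\phi_n|
\]
with bounded rank. Choosing angles $\phi_n\to0$ with $\sum_n\sin^2 2\phi_n=\infty$, for instance $\phi_n\sim n^{-1/2}$, makes $[P,\Sigma]$ a norm limit of finite-rank operators, hence compact. At the same time its Hilbert--Schmidt norm squared is comparable to $\sum_n\sin^2 2\phi_n=\infty$. The blocks are mutually orthogonal, so $P$ is a projection and the computation splits as a direct sum.

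\textbf{Step 3: the main obstacle.} The main work is verifying the Shale--Stinespring criterion in the necessity direction for the self-dual setting, and its exact form. In Araki's formulation, the quasi-free states $\omega_{P_1},\omega_{P_2}$ are quasi-equivalent iff $\sqrt{P_1}-\sqrt{P_2}=P_1-P_2$ is Hilbert--Schmidt. I would invoke \cite{Araki1970QuasifreeCAR} directly for this.

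The explicit block computation is then straightforward. One only needs to check that on each block $P-\Sigma P\Sigma$ has Hilbert--Schmidt norm of order $|\sin 2\phi_n|$ and not something smaller. To do this, compute the $2\times2$ restriction to the $\uparrow$ components of $\pm n$, namely $\Sigma P\Sigma$ minus $P$, where off-diagonal entries $\pm\sin\phi_n\cos\phi_n$ flip sign. This gives a contribution of $2|\sin\phi_n\cos\phi_n|$ per entry, and the $\downarrow$ part contributes similarly.
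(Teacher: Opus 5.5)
Your overall strategy matches the paper's: reduce to the Shale--Stinespring criterion via $\omega_P\circ\sigma=\omega_{\Sigma P\Sigma}$, then build $P$ from cross-cut block rotations with angles $\phi_n\sim n^{-1/2}$. However, the concrete construction in Step 2 is vacuous.

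With $P_0=\Id\otimes e_\uparrow$, both $\delta_{-n}\otimes e_\uparrow$ and $\delta_n\otimes e_\uparrow$ lie in $\im P_0$, and both $\delta_{\pm n}\otimes e_\downarrow$ lie in $\ker P_0$. Your $R_{\phi_n}$ rotates separately within the $\uparrow$-plane and within the $\downarrow$-plane. It therefore maps $\im P_0$ and $\ker P_0$ to themselves, so $R_{\phi_n}P_0R_{\phi_n}^*=P_0$ for every $n$. The resulting $P$ is just $P_0$, and $[P,\Sigma]=0$.

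The block computation in Step 3 silently assumes that $P$ restricted to the $\uparrow$-plane of $\pm n$ is a rank-one projection with off-diagonal entries $\pm\sin\phi_n\cos\phi_n$. In fact that restriction is the $2\times 2$ identity, and conjugation by $\Sigma$ leaves it unchanged. So the claimed estimate $\|[P,\Sigma]_{\text{block}}\|\sim|\sin 2\phi_n|$ does not hold for the $P$ you actually defined.

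The missing point is that each rotation must mix an \emph{occupied} mode on one side of the cut with an \emph{unoccupied} mode on the other side. The paper does this by rotating $p_n=\delta_n\otimes e_\uparrow$ (occupied, $\Sigma=+1$) against $h_n=\delta_{1-n}\otimes e_\downarrow$ (empty, $\Sigma=-1$). It applies the same real rotation on $\operatorname{span}\{\Xi p_n,\Xi h_n\}$, which keeps $U\Xi=\Xi U$ and hence $\Xi P\Xi=P^\perp$.

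Alternatively, you can keep your $\uparrow$--$\uparrow$ and $\downarrow$--$\downarrow$ rotations but start from $P_0=\Lambda\otimes e_\uparrow+\Lambda^\perp\otimes e_\downarrow$. This is still a basis projection commuting with $\Sigma$, since $\Xi P_0\Xi=P_0^\perp$.

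In either version, $P-\Sigma P\Sigma$ restricted to each rotated plane is $\pm\begin{pmatrix}0&2c_ns_n\\ 2c_ns_n&0\end{pmatrix}$, where $c_n=\cos\phi_n$ and $s_n=\sin\phi_n$. Its singular values $2|c_ns_n|\sim n^{-1/2}$ tend to zero but are not square-summable. With this correction, your Step 1 reduction and Step 3 conclusion go through as written.
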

\begin{proof}
It suffices to take \(N=1\). Let \(\calH=\ell^2(\ZZ)\otimes\CC^2\), let \(\Xi\) be the anti-linear involution interchanging \(\delta_x\otimes e_+\) and \(\delta_x\otimes e_-\), and take \(\Sigma=-1\) on \(x\leq0\) and \(\Sigma=1\) on \(x\geq1\). If \(P_0\) denotes the projection onto \(E_0=\overline{\szpan}\{\delta_x\otimes e_+:x\in\ZZ\}\), then \(\Xi P_0\Xi=\Id-P_0\) and \([P_0,\Sigma]=0\). For \(n\geq1\), set

$$
p_n=\delta_n\otimes e_+,\qquad h_n=\delta_{1-n}\otimes e_-,\qquad
\theta_n=n^{-1/2},\quad c_n=\cos\theta_n,\quad s_n=\sin\theta_n,
$$

and define a unitary \(U\) by

$$
Up_n=c_np_n+s_nh_n,\qquad Uh_n=-s_np_n+c_nh_n,
$$

using the same real rotation on \(\szpan\{\Xi p_n,\Xi h_n\}\) and the identity on the remaining orthogonal complement. Then \(U\Xi=\Xi U\), so

$$
P:=UP_0U^*
\qquad\text{satisfies}\qquad
\Xi P\Xi=\Id-P
$$

and therefore determines a pure quasi-free state \(\omega_P\). On the planes \(\szpan\{p_n,h_n\}\) and \(\szpan\{\Xi p_n,\Xi h_n\}\), respectively,

$$
P-\Sigma P\Sigma
=\pm
\begin{pmatrix}
0&2c_ns_n\\
2c_ns_n&0
\end{pmatrix}\,.
$$

Thus its nonzero singular values are \(2|c_ns_n|\), with fixed multiplicity. Since \(c_ns_n\sim n^{-1/2}\), these singular values tend to zero, but their squares are not summable; hence

$$
[P,\Sigma]\in\calK(\calH),
\qquad
[P,\Sigma]\notin\mathcal J_2(\calH),
$$

where we used \(P-\Sigma P\Sigma=[P,\Sigma]\Sigma\). Now put \(Q=\Sigma P\Sigma\). The Bogoliubov automorphism \(\sigma\) sends \(\omega_P\) to \(\omega_Q\), since

$$
(\omega_P\circ\sigma)\bigl(B(\psi)^*B(\varphi)\bigr)
=\langle\Sigma\psi,P\Sigma\varphi\rangle
=\langle\psi,Q\varphi\rangle\,.
$$

But \(P-Q\notin\mathcal J_2(\calH)\), so the Shale--Stinespring--Araki criterion implies that \(\omega_P\) and \(\omega_Q\) are not quasi-equivalent. Consequently \(\omega_P\circ\sigma\) cannot equal \(\omega_P\circ\operatorname{Ad}_u\) for any unitary \(u\in\calA\), since inner conjugation preserves the GNS sector. Thus \(\omega_P\) is not \(\sigma\)-local although \([P,\Sigma]\) is compact.
\end{proof}

Next, there is nothing special about the origin. Indeed, we have the 
\begin{lem}
For any $x\in\ZZ$, $\omega$ is $\sigma$-local iff $\omega$ is $\sigma_x$ local, where $\sigma_x$ is the Bogoliubov automorphism associated with the unitary operator $\Sigma_x:=\sgn(X-x)$.
\end{lem}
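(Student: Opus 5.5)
The plan is to show that $\sigma_x\circ\sigma^{-1}$ is an inner automorphism of $\calA$. Local comparability can then be transferred between the two cuts.

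First observe that $\Sigma_x\Sigma = \Sigma\Sigma_x$, and that the product $D:=\Sigma_x\Sigma$ is a self-adjoint unitary on $\calH$. It equals $-\Id$ on the finitely many sites strictly between the two cuts and $+\Id$ elsewhere. Concretely, for $x>0$ it is $-1$ on $\ell^2(\{1,\dots,x\})\otimes\CC^{2N}$, and symmetrically for $x<0$. Since $[\Xi,X]=0$, $D$ commutes with $\Xi$. Hence $\sigma_x = \beta_D\circ\sigma$, where $\beta_D$ is the Bogoliubov automorphism $B(\psi)\mapsto B(D\psi)$.

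The key step is to show that $\beta_D=\Ad{w}$ for a unitary $w\in\calA$. Write $\calH_F:=\ell^2(F)\otimes\CC^{2N}$, where $F$ is the finite set of intermediate sites. This space is $\Xi$-invariant and has even dimension $2N|F|$. Choose an orthonormal basis of real ($\Xi$-fixed) vectors $f_1,\dots,f_{2m}$ of $\calH_F$, so that $\gamma_j:=\sqrt2 B(f_j)$ are self-adjoint Majorana operators with $\gamma_j^2=1$ that mutually anticommute. Set $w:=\ii^{m}\gamma_1\cdots\gamma_{2m}$, up to a phase that makes it a self-adjoint unitary. Each $\gamma_j$ anticommutes with the other $2m-1$ factors and commutes with itself, so $w\gamma_jw^\ast=-\gamma_j$. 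Any $B(\psi)$ with $\psi\perp\calH_F$ anticommutes with all $\gamma_j$, and since there is an even number of factors it commutes with $w$. By linearity this gives $wB(\psi)w^\ast=B(D\psi)$ for all $\psi$. Because the generators generate $\calA$, it follows that $\beta_D=\Ad{w}$. Note that $w$ is $\theta$-even, which is not needed here but is useful later.

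Now suppose $\omega$ is $\sigma$-local, so $\omega\circ\sigma=\omega\circ\Ad{u}$. Then
\eq{
\omega\circ\sigma_x=\omega\circ\Ad{w}\circ\sigma\,.
}
It remains to check that this is of the form $\omega\circ\Ad{u'}$. Since $\sigma(w)=\pm w$ (the vectors $f_j$ on one side of the origin are preserved by $\Sigma$ and those on the other are flipped), we have $\Ad{w}\circ\sigma=\sigma\circ\Ad{\sigma^{-1}(w)}=\sigma\circ\Ad{w}$. Therefore
\eq{
\omega\circ\sigma_x=\omega\circ\sigma\circ\Ad{w}=\omega\circ\Ad{u}\circ\Ad{w}=\omega\circ\Ad{wu}\,,
}
so $\omega$ is $\sigma_x$-local. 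The converse follows by exchanging the roles of $0$ and $x$, using $\sigma=\beta_D\circ\sigma_x$ with the same $w$.

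The only step requiring care is the construction of the implementing unitary. One must use the even dimension of $\calH_F$ to ensure that the product of Majoranas commutes with the outside generators. This parity is guaranteed by the standing choice of the on-site space $\CC^{2N}$.
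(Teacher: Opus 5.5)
Your proof is correct and follows the paper's own argument: the paper also implements the Bogoliubov automorphism of $\Sigma_x\Sigma$ by the even self-adjoint parity unitary of the intermediate sites. It writes this unitary as $\prod_j(\Id-2c_j^*c_j)$, which agrees with your Majorana product $\ii^m\gamma_1\cdots\gamma_{2m}$ up to a sign, and then transfers local comparability via $\sigma_x=\sigma\circ\Ad{\sigma(v_x)}$.

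Your claim $\sigma(w)=\pm w$ is true, and in fact $\sigma(w)=w$, because all intermediate sites lie on one side of the origin. Your parenthetical justification for it is muddled, though, and the claim is not needed, since $\Ad{w}\circ\sigma=\sigma\circ\Ad{\sigma(w)}$ holds for every unitary $w$.
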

\begin{proof}
Set \(W_x:=\Sigma_x\Sigma\). Since \(\Sigma_x\) and \(\Sigma\) are sign operators with cuts at \(x\) and \(0\), respectively, \(W_x=-\Id\) on the finitely many sites between the cuts and \(W_x=\Id\) elsewhere; hence \(W_x-\Id\in\calJ_1(\calH)\), and

$$
\calK_x:=\ker(W_x+\Id)
$$

is a finite-dimensional \(\Xi\)-invariant subspace of dimension \(2N|x|\). Choose a basis projection on \(\calK_x\), let \(c_1,\ldots,c_{N|x|}\) be the associated annihilation operators, and define

$$
v_x:=\prod_{j=1}^{N|x|}(\Id-2c_j^*c_j)\,.
$$

Then \(v_x=v_x^*\in\calU(\calA)\) and

$$
v_xB(\psi)v_x=B(W_x\psi)\, ,
$$

so \(\alpha_{W_x}=\Ad{v_x}\). Since \(\Sigma_x\Sigma=W_x\) and \(\Sigma^2=\Id\),

$$
\sigma_x\circ\sigma=\alpha_{W_x}=\Ad{v_x},
\qquad\text{hence}\qquad
\sigma_x=\Ad{v_x}\circ\sigma
       =\sigma\circ\Ad{\sigma(v_x)}\,.
$$

Consequently, for every state \(\omega\)

$$
\omega\circ\sigma_x
=(\omega\circ\sigma)\circ\Ad{\sigma(v_x)}\, ,
$$

so \(\omega\circ\sigma_x\) and \(\omega\circ\sigma\) are locally comparable. By symmetry and transitivity of local comparability,

$$
\omega\sim\omega\circ\sigma_x
\quad\Longleftrightarrow\quad
\omega\sim\omega\circ\sigma\, ,
$$

which proves that \(\omega\) is \(\sigma_x\)-local if and only if it is \(\sigma\)-local.
\end{proof}

\subsection{The role of particle-hole symmetry}
\label{sec:role of particle-hole symmetry}
In the single particle picture, a Fermi projection $P=P^{2}=P^{\ast}\in\mathcal{B}\left(\mathcal{H}\right)$
is called ``particle-hole-symmetric'' iff $\Xi P\Xi=P^{\perp}$.
This implies the following constraint on the quasi-free state $\omega_{P}$:
\begin{eqnarray*}
\omega_{\Xi P\Xi}\left(B\left(\psi\right)^{\ast}B\left(\varphi\right)\right) & = & \left\langle \psi,\Xi P\Xi\varphi\right\rangle \\
 & = & \left\langle \psi,P^{\perp}\varphi\right\rangle \\
 & = & \left\langle \psi,\varphi\right\rangle -\left\langle \psi,P\varphi\right\rangle \\
 & = & \left\langle \psi,\varphi\right\rangle -\omega_{P}\left(B\left(\psi\right)^{\ast}B\left(\varphi\right)\right)\, ,
\end{eqnarray*}
but we also have
\begin{eqnarray*}
\omega_{\Xi P\Xi}\left(B\left(\psi\right)^{\ast}B\left(\varphi\right)\right) & = & \left\langle \psi,\Xi P\Xi\varphi\right\rangle \\
 & = & \overline{\left\langle \Xi\psi,P\Xi\varphi\right\rangle }\\
 & = & \left\langle P\Xi\varphi,\Xi\psi\right\rangle \\
 & = & \left\langle \Xi\varphi,P\Xi\psi\right\rangle \\
 & = & \omega_{P}\left(B\left(\Xi\varphi\right)^{\ast}B\left(\Xi\psi\right)\right)\\
 & = & \omega_{P}\left(B\left(\varphi\right)B\left(\psi\right)^{\ast}\right)\,.
\end{eqnarray*}
We learn that 
\[
\omega_{P}\left(B\left(\psi\right)^{\ast}B\left(\varphi\right)\right)+\omega_{P}\left(B\left(\varphi\right)B\left(\psi\right)^{\ast}\right)=\left\langle \psi,\varphi\right\rangle \,.
\]
But thanks to linearity this constraint may thus be rewritten as 
\begin{eqnarray*}
\omega_{P}\left(\Set{B\left(\psi\right)^{\ast},B\left(\varphi\right)}-\left\langle \psi,\varphi\right\rangle 1_\calA\right) & = & 0\,.
\end{eqnarray*}
This condition is already true at the level of the algebra, for \emph{any
}state. Hence, the single-body particle-hole-symmetry constraint
is automatically fulfilled in the many-body self-dual setting. For that reason there is no need to "enforce" particle-hole symmetry for pure states on the self-dual CAR algebra $\calA$.

On the other hand, quasi-free states are always even and hence $\theta$-invariant,
and this shall remain a genuine constraint in the many-body setting
as well when we generalize from quasi-free pure to interacting pure states.

\subsection{The many-body Majorana number}
\begin{defn}[The many-body Majorana number]
 Let $\omega\in\mathcal{P}\left(\mathcal{A}\right)$ be a parity-invariant
$\sigma$-local pure state. Let $u\in\mathcal{U}\left(\mathcal{A}\right)$ be the
locality-unitary such that 
\begin{eqnarray*}
\omega\circ\sigma & = & \omega\circ\text{Ad}_{u}\,.
\end{eqnarray*}
Note that $\omega\circ\sigma\circ\theta=\omega\circ\sigma$ automatically.
We then define for any such pure state $\omega$ the following topological
index: 
\begin{eqnarray*}
\mathcal{N}_{\sigma}\left(\omega\right) & := & \calN(\omega,\omega\circ\sigma)\equiv\omega\left(u^{\ast}\theta\left(u\right)\right)\in\Set{\pm1}\,.
\end{eqnarray*}
\end{defn}

\begin{lem}
$\mathcal{N}_{\sigma}$ is well-defined and locally constant in the norm
topology. In particular, it is norm continuous.
\end{lem}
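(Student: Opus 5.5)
Well-definedness follows directly from \cref{thm:main theorem for properties of the Z2 index of a pair}. First we check that the pair $(\omega,\omega\circ\sigma)$ is admissible. Since $\Sigma$ is linear, $\sigma$ commutes with $\theta$: both $\sigma\circ\theta$ and $\theta\circ\sigma$ send $B(\psi)\mapsto -B(\Sigma\psi)$. Hence $\omega\circ\sigma\circ\theta=\omega\circ\theta\circ\sigma=\omega\circ\sigma$. Moreover $\omega\circ\sigma$ is pure because $\sigma$ is an automorphism. So $\omega$ and $\omega\circ\sigma$ are locally-comparable, $\theta$-invariant pure states. Item 1 of \cref{thm:main theorem for properties of the Z2 index of a pair} then shows that $\calN(\omega,\omega\circ\sigma)$ does not depend on the choice of locality-unitary $u$, and item 2 shows it lies in $\Set{\pm1}$.

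For local constancy, fix a parity-invariant $\sigma$-local pure state $\omega$ and let $\omega'$ be another such state with $\norm{\omega-\omega'}<2$. By item 4, $\omega$ and $\omega'$ are locally comparable and $\calN(\omega,\omega')=+1$. Because $\sigma$ is an automorphism it is isometric on the dual, so $\norm{\omega\circ\sigma-\omega'\circ\sigma}=\norm{\omega-\omega'}<2$. Item 4 applies again and gives $\calN(\omega\circ\sigma,\omega'\circ\sigma)=+1$. Alternatively, item 5 with $\alpha=\sigma$ gives $\calN(\omega\circ\sigma,\omega'\circ\sigma)=\calN(\omega,\omega')=+1$ directly.

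The four states $\omega,\omega',\omega\circ\sigma,\omega'\circ\sigma$ are pairwise locally comparable, since local comparability is an equivalence relation: compose implementing unitaries, and note that $\omega_2=\omega_1\circ\Ad{u}$ gives $\omega_1=\omega_2\circ\Ad{u^\ast}$. Multiplicativity (item 3) then yields
\eq{
\calN_\sigma(\omega')=\calN(\omega',\omega'\circ\sigma)=\calN(\omega',\omega)\,\calN(\omega,\omega\circ\sigma)\,\calN(\omega\circ\sigma,\omega'\circ\sigma)=\calN_\sigma(\omega)\,,
}
using $\calN(\omega',\omega)=\calN(\omega,\omega')^{-1}=+1$. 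Thus $\calN_\sigma$ is constant on the open norm ball of radius $2$ around $\omega$ within the set of parity-invariant $\sigma$-local pure states. A $\Set{\pm1}$-valued function that is locally constant is continuous, so $\calN_\sigma$ is norm continuous.

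There is no real obstacle. The only points needing care are the commutation $\sigma\theta=\theta\sigma$ and the fact that the four states are pairwise comparable, which is required to apply multiplicativity; both are immediate.
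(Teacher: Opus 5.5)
Your proof is correct and follows the paper's argument: item 4 applied to the pairs $(\omega,\omega')$ and $(\omega\circ\sigma,\omega'\circ\sigma)$, using that $\sigma$ preserves the norm distance of states, then multiplicativity along the chain $\omega'\to\omega\to\omega\circ\sigma\to\omega'\circ\sigma$. The paper states a slightly stronger form, assuming $\omega'$ only pure and parity-invariant and concluding that it is then automatically $\sigma$-local; your transitivity remark already gives this.
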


\begin{proof}
Well-definededness follows at once by construction, from \cref{thm:main theorem for properties of the Z2 index of a pair}.

Let $\omega$ be a parity-invariant pure state which is $\sigma$-local. Then for any $\widetilde{\omega}$ any other pure state which is also parity-invariant, and obeys $\norm{\omega-\widetilde{\omega}}<2$, $\widetilde{\omega}$ is also $\sigma$-local and $\mathcal{N}_{\sigma}(\omega)=\mathcal{N}_{\sigma}(\widetilde{\omega})$. Indeed, $\norm{\omega-\widetilde{\omega}}<2$ implies local-comparability so that we know 
\eq{
\calN(\omega,\widetilde{\omega}) = 1\,.
} The same can be said about $\omega\circ\sigma,\widetilde{\omega}\circ\sigma$ since the automorphism $\sigma$ preserves the norm distance. Hence 
\eq{
\calN(\omega\circ\sigma,\widetilde{\omega}\circ\sigma) = 1\,.
} Thus using the multiplicativity of the index we have
\eq{
\calN_\sigma(\widetilde{\omega}) &\equiv \calN(\widetilde{\omega},\widetilde{\omega}\circ\sigma)\\
&= \calN(\widetilde{\omega},\omega)\calN(\omega,\omega\circ\sigma)\calN(\omega\circ\sigma,\widetilde{\omega}\circ\sigma)\\
&= 1\cdot \calN(\omega,\omega\circ\sigma) \cdot 1\,.
}

\end{proof}
%
%Note that since there is no $U\left(1\right)$-symmetry in the context
%of the self-dual CAR algebra, there is no "additional" or "accidental" $\mathbb{Z}$-valued index
%in this setting.

\begin{example}[Non-trivial \(\calN_\sigma\) index: the Kitaev \(k\)-chain]\label{example:Kitaev chain}
Let \(N=1\), \(R\delta_x=\delta_{x+1}\), and \(k\geq1\). Set

$$
\Xi=-\calC(\Id\otimes\sigma_3),\qquad
H_k=\begin{bmatrix}0&R^{k*}\\ R^k&0\end{bmatrix},\qquad
P_k=\frac{\Id-H_k}{2}
=\frac12\begin{bmatrix}\Id&-R^{k*}\\-R^k&\Id\end{bmatrix}\, ,
$$ where $\calC$ is complex conjugation on $\ell^2(\ZZ)$: $\psi_x\mapsto\overline{\psi_x}$.

Then \(H_k=H_k^*\), \(H_k^2=\Id\), and \(\Xi H_k\Xi=-H_k\); hence \(\Xi P_k\Xi=P_k^\perp\), and \(P_k\) defines a parity-invariant pure quasi-free state \(\omega_{P_k}\). Writing

$$
\psi_n^-:=\frac1{\sqrt2}
\begin{bmatrix}\delta_n\\-\delta_{n+k}\end{bmatrix}\, ,
$$

we have \(\im(P_k)=\overline{\szpan}\{\psi_n^-:n\in\ZZ\}\). Moreover, a vector \(\bigl[\varphi,-R^k\varphi\bigr]^{\mathsf T}\in\im(P_k)\) belongs to \(\ker(P_k\Sigma P_k+P_k^\perp)\) precisely when

$$
(\Sigma+R^{k*}\Sigma R^k)\varphi=0\,.
$$

Since this diagonal operator vanishes exactly on \(\szpan\{\delta_{-k+1},\ldots,\delta_0\}\), it follows that

$$
\ker(P_k\Sigma P_k+P_k^\perp)
=\szpan\{\psi_{-k+1}^-,\ldots,\psi_0^-\},
\qquad
\dim\ker(P_k\Sigma P_k+P_k^\perp)=k\,.
$$

To construct a locality unitary, let

$$
Q_k=\sum_{\ell=1}^k\delta_\ell\otimes\delta_\ell^*,
\qquad D_k=\Id-2Q_k,\qquad
W_k=\begin{bmatrix}\Id&0\\0&D_k\end{bmatrix}\,.
$$

Then \(\rank(\Id-W_k)=k\), and the identities

$$
D_kR^k=\Sigma R^k\Sigma,\qquad
R^{k*}D_k=\Sigma R^{k*}\Sigma
$$

give \(W_k^*P_kW_k=\Sigma P_k\Sigma\). For \(f_\ell=[0,\delta_\ell]^{\mathsf T}\), define the Majorana unitaries \(\gamma_\ell=\sqrt2B(f_\ell)\) and \(w_k=\gamma_1\cdots\gamma_k\). Since \(\Xi f_\ell=f_\ell\), the self-dual CAR relations imply

$$
\Ad{w_k}(B(\psi))=B((-1)^kW_k\psi),
\qquad
\theta(w_k)=(-1)^kw_k\,.
$$

The factor \((-1)^k\) is implemented by \(\theta^k\), under which \(\omega_{P_k}\) is invariant; therefore

$$
\omega_{P_k}\circ\Ad{w_k}
=\omega_{W_k^*P_kW_k}
=\omega_{\Sigma P_k\Sigma}
=\omega_{P_k}\circ\sigma\,.
$$

Thus \(\omega_{P_k}\) is \(\sigma\)-local, with locality unitary \(w_k\), and

$$
\calN_\sigma(\omega_{P_k})
=\omega_{P_k}\bigl(w_k^*\theta(w_k)\bigr)
=(-1)^k\,.
$$

In particular, \(\calN_\sigma(\omega_{P_k})=-1\) for every odd \(k\).
\end{example}
\begin{claim}
\label{claim:symmetric local automorphism invariance}
Let \(\omega\) be a pure, \(\theta\)-invariant and \(\sigma\)-local
state on \(\calA\). Let
\(\alpha\in\operatorname{Aut}(\calA)\) commute with \(\theta\), and
suppose that
\eq{
    \alpha\circ\sigma\circ\alpha^{-1}\circ\sigma^{-1}
    =
    \Ad{v}
}
for some \(v\in\calU(\calA)\) satisfying \(\theta(v)=v\).
Then \(\omega\circ\alpha\) is pure, \(\theta\)-invariant and
\(\sigma\)-local, and
\eq{
    \calN_\sigma(\omega\circ\alpha)
    =
    \calN_\sigma(\omega)\,.
}
\end{claim}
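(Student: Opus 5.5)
We reduce everything to the pair index of \cref{thm:main theorem for properties of the Z2 index of a pair}, using the relation $\alpha\circ\sigma = \Ad{v}\circ\sigma\circ\alpha$, which follows from the hypothesis by composing on the right with $\sigma\circ\alpha$. Purity of $\omega\circ\alpha$ is immediate, since $\alpha$ is an automorphism. Parity invariance follows from $\omega\circ\alpha\circ\theta=\omega\circ\theta\circ\alpha=\omega\circ\alpha$.

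For $\sigma$-locality, we compute
\eq{
    (\omega\circ\alpha)\circ\sigma
    =
    \omega\circ\Ad{v}\circ\sigma\circ\alpha\,,
}
that is, $\omega\circ\alpha\circ\sigma=(\omega\circ\Ad{v})\circ(\sigma\circ\alpha)$. Now $\omega\circ\Ad{v}$ and $\omega$ are locally comparable by definition. Moreover, $\omega\circ\sigma=\omega\circ\Ad{u}$, so $\omega\circ\sigma\circ\alpha=(\omega\circ\alpha)\circ\Ad{\alpha^{-1}(u)}$. Chaining these relations, we precompose the comparability $\omega\circ\Ad{v}\sim\omega$ with $\sigma\circ\alpha$, and since $\sigma\circ\alpha$ is an automorphism, item 5 (transport of local comparability) applies, giving $\omega\circ\Ad{v}\circ\sigma\circ\alpha\sim\omega\circ\sigma\circ\alpha\sim\omega\circ\alpha$. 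Here we use that $\sigma$ commutes with $\theta$, which holds because both are Bogoliubov maps and $\theta$ is multiplication by $-1$. Transitivity of comparability (compose the unitaries) then gives that $\omega\circ\alpha$ is $\sigma$-local.

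For the index, we apply multiplicativity (item 3) along the chain
\eq{
    \omega\circ\alpha
    \;\to\;
    \omega\circ\sigma\circ\alpha
    \;\to\;
    \omega\circ\Ad{v}\circ\sigma\circ\alpha
    =\omega\circ\alpha\circ\sigma\,.
}
All three states are $\theta$-invariant. For the middle state this holds because $\sigma,\alpha$ commute with $\theta$; for the last it holds because, in addition, $\theta(v)=v$ gives $\Ad{v}\circ\theta=\theta\circ\Ad{v}$. The first factor is $\calN(\omega\circ\alpha,\omega\circ\sigma\circ\alpha)=\calN(\omega,\omega\circ\sigma)=\calN_\sigma(\omega)$ by item 5 applied to $\alpha$. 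The second factor is $\calN(\omega\circ\sigma\circ\alpha,\omega\circ\Ad{v}\circ\sigma\circ\alpha)=\calN(\omega,\omega\circ\Ad{v})$ by item 5 applied to $\sigma\circ\alpha$, which commutes with $\theta$. Finally, $\calN(\omega,\omega\circ\Ad{v})=\omega(v^\ast\theta(v))=\omega(v^\ast v)=1$, using $v$ itself as the implementing unitary, which is legitimate by well-definedness (item 1).

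The main point requiring care, and the only place the evenness of $v$ enters, is this last factor. It is also what guarantees that the intermediate state $\omega\circ\Ad{v}\circ\sigma\circ\alpha$ is $\theta$-invariant, so that multiplicativity may be applied. Everything else is bookkeeping with item 5 and the commutation of $\sigma$ with $\theta$.
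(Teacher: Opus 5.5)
Your proof is correct and is essentially the paper's argument: both rewrite the hypothesis as $\alpha\circ\sigma=\Ad{v}\circ\sigma\circ\alpha$ and combine multiplicativity and automorphism invariance of the pair index with $\omega(v^\ast\theta(v))=1$. The differences are only in ordering and presentation: the paper pulls back by $\alpha$ first (obtaining $\calN(\omega,\omega\circ\Ad{v}\circ\sigma)$) and then splits through $\omega\circ\sigma$, whereas you split through $\omega\circ\sigma\circ\alpha$ first; the paper also writes down the explicit locality unitary $\alpha^{-1}(\sigma(v)w)$ rather than chaining comparabilities. One minor correction to your closing remark: $\theta$-invariance of $\omega\circ\Ad{v}\circ\sigma\circ\alpha$ does not need $\theta(v)=v$, because that state equals $\omega\circ\alpha\circ\sigma$, so evenness of $v$ genuinely enters only through the last factor $\omega(v^\ast\theta(v))$.
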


\begin{proof}
Purity and \(\theta\)-invariance are immediate. To prove
\(\sigma\)-locality, choose \(w\in\calU(\calA)\) such that
\eq{
    \omega\circ\sigma
    =
    \omega\circ\Ad{w}\,.
}
The commutator identity implies
\eq{
    \alpha\circ\sigma\circ\alpha^{-1}
    =
    \Ad{v}\circ\sigma
}
and hence
\eq{
    \omega\circ\alpha\circ\sigma
    &=
    \omega\circ\Ad{v}\circ\sigma\circ\alpha
    \\
    &=
    \omega\circ\sigma\circ\Ad{\sigma(v)}\circ\alpha
    \\
    &=
    \omega\circ\Ad{w}\circ\Ad{\sigma(v)}\circ\alpha
    \\
    &=
    \omega\circ\alpha\circ
    \Ad{\alpha^{-1}\left(\sigma(v)w\right)}\,.
}
Thus \(\omega\circ\alpha\) is \(\sigma\)-local.

By automorphism invariance and multiplicativity of the relative index
from
\cref{thm:main theorem for properties of the Z2 index of a pair},
\eq{
    \calN_\sigma(\omega\circ\alpha)
    &=
    \calN\left(
        \omega,
        \omega\circ\Ad{v}\circ\sigma
    \right)
    \\
    &=
    \calN(\omega,\omega\circ\sigma)
    \calN\left(
        \omega\circ\sigma,
        \omega\circ\Ad{v}\circ\sigma
    \right)
    \\
    &=
    \calN_\sigma(\omega)
    \calN\left(
        \omega,
        \omega\circ\Ad{v}
    \right)
    \\
    &=
    \calN_\sigma(\omega)
    \omega\left(v^\ast\theta(v)\right)
    \\
    &=
    \calN_\sigma(\omega)\,.
}
\end{proof}

\subsection{The single-body Majorana number}
\label{subsec:single-body Majorana number}

Let $H=H^\ast\in\calB(\calH)$ be an invertible single-particle
Hamiltonian with particle-hole symmetry
\eq{
    \Xi H\Xi=-H
}
and let
\eql{\label{eq:Fermi projection and flattened Hamiltonian}
    P
    :=
    \chi_{(-\infty,0)}(H),
    \qquad
    \sgn(H)
    =
    \Id-2P\,.
}
It follows that
\eq{
    \Xi P\Xi=P^\perp\,.
}
Recall that
\eq{
    \Sigma
    =
    \sgn(X)
    =
    2\Lambda-\Id,
    \qquad
    \Lambda
    =
    \chi_{\NN}(X)\,.
}
We assume that $P$ is $\Sigma$-local, namely
\eq{
    [P,\Sigma]\in\calK(\calH)\,.
}
Equivalently, $[P,\Lambda]\in\calK(\calH)$. In particular, the
compressed operators appearing below are Fredholm

A standard real-space representative of the single-body Majorana
number is
\eql{\label{eq:single-body Majorana number projection formula}
    \calN_\sigma^{\rm non-int.}(P)
    :=
    \br{-1}^{\dim\ker\br{P\Sigma P+P^\perp}}
    \in
    \Set{\pm1}
    \cong
    \ZZ_2\, ,
}
see
\cite{GrossmannSchulzBaldes2016,KatsuraKoma2018,ChungShapiro2023}.
Equivalently, it may be expressed in terms of the half-space
compression of the flattened Hamiltonian as
\eql{\label{eq:single-body Majorana number Hamiltonian formula}
    \calN_\sigma^{\rm non-int.}(P)
    =
    \br{-1}^{\dim\ker\br{
        \Lambda\sgn(H)\Lambda+\Lambda^\perp
    }}\,.
}

We briefly verify the equivalence of
\cref{eq:single-body Majorana number projection formula,%
eq:single-body Majorana number Hamiltonian formula}.
Since $\Sigma=2\Lambda-\Id$, changing the sign of the compression on
$\im P$ does not change its kernel, and hence
\eq{
    \ker\br{P\Sigma P+P^\perp}
    =
    \ker\br{P(\Id-2\Lambda)P+P^\perp}\,.
}
On the other hand, by
\cref{eq:Fermi projection and flattened Hamiltonian},
\eq{
    \Lambda\sgn(H)\Lambda+\Lambda^\perp
    =
    \Lambda(\Id-2P)\Lambda+\Lambda^\perp\,.
}
The projection-pairing duality
\cite[Proposition~1]{GrossmannSchulzBaldes2016},
also recalled in
\cite[Proposition~24]{KatsuraKoma2018}, therefore gives
\eql{\label{eq:duality of single-body Majorana kernels}
    \dim\ker\br{P\Sigma P+P^\perp}
    =
    \dim\ker\br{
        \Lambda\sgn(H)\Lambda+\Lambda^\perp
    }\,.
}
More explicitly, the isomorphism between the two kernels is
$v\mapsto\Lambda v$, with inverse $w\mapsto 2Pw$.

We note that \cref{eq:duality of single-body Majorana kernels} is
purely projection-theoretic and does not use particle-hole symmetry.
Particle-hole symmetry is instead what makes the parity of the kernel
a homotopy-invariant $\ZZ_2$-index. The index is locally constant on
the space of $\Sigma$-local, particle-hole symmetric projections.
Moreover, \cite{ChungShapiro2023} proves that it is a complete
invariant: two such projections have the same index if and only if
they can be joined by a norm-continuous path of $\Sigma$-local,
$\Xi$-symmetric projections.

\subsection{Correspondence between the many-body and single-body indices}

\begin{thm}
Let $P=P^\ast=P^2\in\calB(\calH)$ be a particle-hole symmetric
projection, i.e.,
\eql{
    \Xi P\Xi=P^\perp\,.
}
Assume further that
\eql{
    [P,\Sigma]\in\calJ_2(\calH)\,.
}
Then $\omega_P$ is $\sigma$-local and the many-body index agrees with
the single-body index:
\eql{
    \calN_\sigma^{\rm non-int.}(P)
    =
    \calN_\sigma(\omega_P)\,.
}
Equivalently, if $u\in\calU(\calA)$ is any unitary such that
$\omega_P\circ\sigma=\omega_P\circ\Ad{u}$, then
\eql{\label{eq:single-body many-body correspondence}
    \br{-1}^{\dim\ker\br{P\Sigma P+P^\perp}}
    =
    \omega_P\br{u^\ast\theta(u)}\,.
}
\end{thm}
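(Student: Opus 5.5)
The plan is to reduce the statement to \cref{thm:Hilbert-space CAR correspondence} applied to the pair $(P,Q)$ with $Q:=\Sigma P\Sigma$, and then to identify the resulting projection-pair index with the single-body Majorana number.

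\textbf{Step 1: the rotated state is again quasi-free.} As in the proof of the earlier claim,
\[
    \omega_P\circ\sigma=\omega_Q .
\]
Here $Q$ is again a particle-hole symmetric basis projection, because $\Sigma$ commutes with $\Xi$ (this uses $[\Xi,X]=0$). In addition,
\[
    P-Q=[P,\Sigma]\Sigma\in\calJ_2(\calH).
\]

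\textbf{Step 2: $\sigma$-locality and a first formula for the index.} \Cref{thm:Hilbert-space CAR correspondence} now applies directly. It shows that $\omega_P$ and $\omega_P\circ\sigma$ are locally comparable, which is precisely $\sigma$-locality. It also gives
\[
    \calN_\sigma(\omega_P)=(-1)^{\dim\im P\cap\ker Q}.
\]
By well-definedness (item 1 of \cref{thm:main theorem for properties of the Z2 index of a pair}), this value equals $\omega_P(u^\ast\theta(u))$ for every implementing unitary $u$, which gives the ``equivalently'' form \cref{eq:single-body many-body correspondence}.

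\textbf{Step 3: identify the kernels.} It remains to show
\[
    \dim\im P\cap\ker(\Sigma P\Sigma)=\dim\ker\br{P\Sigma P+P^\perp}.
\]
First, $\ker(P\Sigma P+P^\perp)$ consists of the vectors $v\in\im P$ with $P\Sigma v=0$. Second, $\im P\cap\ker(\Sigma P\Sigma)$ consists of the vectors $v\in\im P$ with $P\Sigma v=0$, since $\Sigma$ is unitary. The two spaces therefore coincide as sets, not merely in dimension. Both are finite dimensional, because $P\Sigma P+P^\perp$ is Fredholm when $[P,\Sigma]$ is compact.

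\textbf{Main obstacle.} Once the earlier theorem is granted, the argument is nearly formal. The only delicate point is hidden in \cref{thm:Hilbert-space CAR correspondence}, namely the Araki–Evans / Aza–Mussnich–Reyes-Lega parity-sector sign for the pair $(P,Q)$. Here I would check two conventions carefully:
\begin{itemize}
    \item that $\omega_P\circ\sigma$ corresponds to $\Sigma P\Sigma$ and not to some other conjugate under the paper's normalization $\omega_P(B(\psi)^\ast B(\varphi))=\ip{\psi}{P\varphi}$;
    \item that the earlier theorem requires only $P-Q\in\calJ_2$ with both projections $\Xi$-symmetric, which Step 1 supplies.
\end{itemize}
As a sanity check, the Kitaev chain of \cref{example:Kitaev chain} gives $k$ on both sides.
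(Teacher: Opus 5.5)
Your proposal is correct and matches the paper's proof essentially step for step. Like the paper, you set $Q=\Sigma P\Sigma$, check that $Q$ is $\Xi$-symmetric, that $P-Q=[P,\Sigma]\Sigma\in\calJ_2(\calH)$ and that $\omega_P\circ\sigma=\omega_Q$, and then invoke the Hilbert-space CAR correspondence theorem; you finish by observing that $\ker(P\Sigma P+P^\perp)$ and $\im P\cap\ker Q$ coincide as sets.
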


We remark that we expect the theorem to remain true, in an appropriately
generalized sense, when $[P,\Sigma]\in\calK(\calH)$. The single-body
index remains well-defined under this weaker hypothesis, but the
definition of $\sigma$-locality would have to be generalized.

\begin{proof}
Set
\eql{
    Q:=\Sigma P\Sigma\,.
}
Since $\Xi$ commutes with $\Sigma$ and $\Sigma^2=\Id$, we have
\eq{
    \Xi Q\Xi
    &=
    \Sigma\Xi P\Xi\Sigma\\
    &=
    \Sigma P^\perp\Sigma\\
    &=
    Q^\perp\,.
}
Thus $Q$ is also particle-hole symmetric. Moreover,
\eql{
    P-Q=[P,\Sigma]\Sigma\in\calJ_2(\calH)\,.
}
The Bogoliubov automorphism $\sigma$ transforms the quasi-free state
$\omega_P$ according to
\eql{
    \omega_P\circ\sigma=\omega_Q\,.
}
We may therefore apply
\cref{thm:Hilbert-space CAR correspondence} to the pair $P,Q$. It
follows that $\omega_P$ and $\omega_Q$ are locally-comparable, so
$\omega_P$ is $\sigma$-local, and
\eq{
    \calN_\sigma(\omega_P)
    &=
    \calN(\omega_P,\omega_P\circ\sigma)\\
    &=
    \calN(\omega_P,\omega_Q)\\
    &=
    \br{-1}^{\dim\im P\cap\ker Q}\,.
}

It remains only to identify the kernel appearing in the single-body
Majorana number. Since $\Sigma$ is unitary, we have
\eq{
    \ker\br{P\Sigma P+P^\perp}
    &=
    \Set{\psi\in\im P:P\Sigma\psi=0}\\
    &=
    \im P\cap\ker\br{\Sigma P\Sigma}\\
    &=
    \im P\cap\ker Q\,.
}
Consequently,
\eq{
    \calN_\sigma^{\rm non-int.}(P)
    &=
    \br{-1}^{\dim\ker\br{P\Sigma P+P^\perp}}\\
    &=
    \br{-1}^{\dim\im P\cap\ker Q}\\
    &=
    \calN_\sigma(\omega_P)\,.
}
Finally, if $u\in\calU(\calA)$ implements the locality relation
$\omega_P\circ\sigma=\omega_P\circ\Ad{u}$, then by definition
\eql{
    \calN_\sigma(\omega_P)
    =
    \omega_P\br{u^\ast\theta(u)}\, ,
}
which proves \cref{eq:single-body many-body correspondence}.
\end{proof}

\subsection{Which topology should one use for the index?}
Let
\eq{
\calP_{\theta,\sigma\mathrm{-loc}}(\calA)
:=
\Set{
\omega\in\calP(\calA):
\omega\circ\theta=\omega,\
\exists u\in\calU(\calA)\text{ such that }
\omega\circ\sigma=\omega\circ\Ad{u}
}.
}
By the continuity result above, the map
\eq{
\calP_{\theta,\sigma\mathrm{-loc}}(\calA)
\ni
\omega
\longmapsto
\calN_\sigma(\omega)
\in \Set{\pm1}\cong\ZZ_2
}
is locally constant for the norm topology on the state space, i.e. the topology induced by the norm of $\calA$. We now explain why this is the natural topology for the index, and why the weak-* topology is too coarse.

If one temporarily forgets the locality condition, then in the CAR/UHF setting considered here the pure-state space is weak-* path connected; in fact, for nonelementary simple separable real-rank-zero C* algebras it is weakly contractible \cite{SpiegelPflaum2025WeakContractibility}. Thus one might hope to view the index as measuring the obstruction to replacing an arbitrary weak-$*$ path by one that remains local in the appropriate sense. This is morally analogous to the one-particle picture in \cite{ChungShapiro2023}: without the half-space locality constraint, the relevant infinite-dimensional unitary group is norm-contractible by Kuiper's theorem \cite{Kuiper1965HomotopyType}; the topological obstruction appears only after imposing the compatibility, or ``stitching,'' condition across the spatial cut.

However, the topology in which this obstruction is stable is the norm topology, not the weak-$*$ topology. The norm topology remembers superselection sectors: if two pure states have norm distance strictly smaller than $2$, then their GNS representations are unitarily equivalent, and the states are inner-unitarily conjugate \cite[Corollaries~8--9]{GlimmKadison1960}. Consequently, a norm-continuous path of pure states cannot pass from one GNS sector to another. Indeed, by compactness of the interval, such a path can be subdivided into pieces of norm length strictly smaller than $2$, so all states along the path lie in the same sector. Already for quasi-free CAR states this gives many disconnected norm components: by the Araki--Powers--St{\o}rmer quasi-equivalence criterion, two pure quasi-free states defined by projections $P$ and $Q$ are in the same GNS sector only if $P-Q\in\calJ_2(\calH)$ \cite{Araki1970QuasifreeCAR,PowersStormer1970FreeStates}. In particular, diagonal product states whose defining projections differ on infinitely many sites lie in distinct sectors.

This sector sensitivity should not be confused with total disconnectedness of the pure-state space in norm. Within a fixed sector there are typically many norm-continuous paths, for instance
\eq{
t\longmapsto \omega\circ\Ad{\ee^{\ii t h}},
\qquad h=h^\ast\in\calA\,.
}
The point is rather that the norm topology is fine enough to see both superselection sectors and the $\ZZ_2$-valued relative obstruction defined above. The weak-* topology, by contrast, only tests against fixed local observables, and therefore may fail to detect index-carrying changes occurring at spatial infinity. 

Indeed, there exists a weak-$*$ continuous path of pure states within
$\calP_{\theta,\sigma\mathrm{-loc}}(\calA)$ whose endpoints have different
indices.

\begin{example}[Weak-$*$ discontinuity at a pure state]
We construct a weak-$*$ continuous path
\eq{
[0,1]\ni t\longmapsto\omega_t
\in\calP_{\theta,\sigma\mathrm{-loc}}(\calA)
}
such that
\eq{
\calN_\sigma(\omega_t)=-1
\quad (0\leq t<1),
\qquad
\calN_\sigma(\omega_1)=+1\,.
}

We work on
\eq{
\calH=\ell^2(\ZZ)\otimes\CC^2
}
with the particle-hole conjugation chosen as in the Kitaev-chain example, so
that the projections below obey $\Xi P\Xi=P^\perp$. Let
$\Lambda=\chi_\NN(X)$.

For each $n\geq1$, define a permutation $\tau_n:\ZZ\to\ZZ$ by
\eq{
\tau_n(x)=
\begin{cases}
x & -n\leq x\leq n\\
x+1 & x\geq n+1\\
x+1 & x\leq -n-2\\
n+1 & x=-n-1
\end{cases}\,.
}
Thus $\tau_n$ is the identity on $\Set{-n,\dots,n}$, while away from this
interval it shifts both tails one step to the right, moving one basis vector
from the far left tail to the far right tail. Let $V_n$ be the corresponding
real unitary on $\ell^2(\ZZ)$,
\eq{
V_n\delta_x:=\delta_{\tau_n(x)}\,.
}
Then
\eq{
V_n\stackrel{s}{\longrightarrow}\Id,
\qquad
V_n^*\stackrel{s}{\longrightarrow}\Id\,.
}

We next interpolate between $V_n$ and $V_{n+1}$. Set
\eq{
a_n:=-n-2,
\qquad
b_n:=-n-1,
\qquad
c_n:=n+1\,.
}
A direct calculation gives
\eq{
R_n:=V_n^*V_{n+1}\, ,
}
where $R_n$ is the identity outside
\eq{
E_n:=\szpan\Set{\delta_{a_n},\delta_{b_n},\delta_{c_n}}
}
and acts on these three basis vectors by
\eq{
\delta_{a_n}\longmapsto\delta_{c_n},
\qquad
\delta_{c_n}\longmapsto\delta_{b_n},
\qquad
\delta_{b_n}\longmapsto\delta_{a_n}\,.
}
This is an even three-cycle. In particular, its restriction to $E_n$ belongs
to $\SO(3)$. Choose a norm-continuous path of real unitaries
\eq{
[0,1]\ni s\longmapsto R_n(s)
}
which is the identity on $E_n^\perp$ and satisfies
\eq{
R_n(0)=\Id,
\qquad
R_n(1)=R_n\,.
}
For example, on $E_n$ one may take the rotation through angle $2\pi s/3$
around the axis spanned by
$\delta_{a_n}+\delta_{b_n}+\delta_{c_n}$, with the appropriate orientation.

Define
\eq{
V_{n,s}:=V_nR_n(s)\,.
}
Then $V_{n,s}$ is a real unitary and
\eq{
V_{n,0}=V_n,
\qquad
V_{n,1}=V_{n+1}\,.
}
Moreover,
\eq{
[V_{n,s},\Lambda]
}
is finite-rank for every $n$ and $s$.

For a real unitary $V$ on $\ell^2(\ZZ)$, set
\eq{
P(V)
:=
\onehalf
\begin{pmatrix}
\Id & -V^*\\
-V & \Id
\end{pmatrix}\,.
}
Then $P(V)$ is a projection satisfying
\eq{
\Xi P(V)\Xi=P(V)^\perp\,.
}
Consequently, $P(V)$ defines a pure, $\theta$-invariant quasi-free state
$\omega_{P(V)}$

For $n\geq1$ and $s\in[0,1]$, consider the compressed operator
\eq{
F_{n,s}:=\Lambda V_{n,s}\Lambda+\Lambda^\perp\,.
}
Since $[V_{n,s},\Lambda]$ is finite-rank, $F_{n,s}$ is Fredholm. The map
$s\mapsto F_{n,s}$ is norm-continuous, and hence its Fredholm index is
constant. At $s=0$,
\eq{
F_{n,0}=\Lambda V_n\Lambda+\Lambda^\perp\,.
}
On the right half-chain, $V_n$ fixes
$\delta_1,\dots,\delta_n$ and shifts
$\delta_{n+1},\delta_{n+2},\dots$ one step to the right. Therefore
\eq{
\ker(F_{n,0})=\Set{0},
\qquad
\dim\coker(F_{n,0})=1
}
and hence
\eq{
\findex(F_{n,s})=-1
\qquad
(n\geq1,\ s\in[0,1])\,.
}
Furthermore, $[P(V_{n,s}),\Sigma]$ is finite-rank and therefore belongs to
$\calJ_2(\calH)$. The correspondence theorem now gives
\eq{
\calN_\sigma\bigl(\omega_{P(V_{n,s})}\bigr)=-1
\qquad
(n\geq1,\ s\in[0,1])\,.
}

We concatenate these paths near $t=1$. Set
\eq{
t_n:=1-\frac1n,
\qquad
s_n(t):=\frac{t-t_n}{t_{n+1}-t_n}
\quad
\bigl(t\in[t_n,t_{n+1}]\bigr)
}
and define
\eq{
V(t):=
\begin{cases}
V_{n,s_n(t)} & t\in[t_n,t_{n+1}]\\
\Id & t=1
\end{cases}\,.
}
This is well-defined because
\eq{
V_{n,1}=V_{n+1}=V_{n+1,0}\,.
}
It is norm-continuous on $[0,1)$. Moreover, if
$t\in[t_n,t_{n+1}]$, then
\eq{
V(t)\delta_x=\delta_x
\qquad
(-n\leq x\leq n)\,.
}
It follows that
\eq{
V(t)\stackrel{s}{\longrightarrow}\Id,
\qquad
V(t)^*\stackrel{s}{\longrightarrow}\Id
\qquad
(t\longrightarrow1)\,.
}

Finally, define
\eq{
P(t):=P(V(t)),
\qquad
\omega_t:=\omega_{P(t)}\,.
}
Every $P(t)$ is a particle-hole-symmetric projection, so every $\omega_t$ is
pure and $\theta$-invariant. Furthermore,
\eq{
[P(t),\Sigma]\in\calJ_2(\calH)\, ,
}
so every $\omega_t$ is $\sigma$-local. Thus
\eq{
\omega_t\in\calP_{\theta,\sigma\mathrm{-loc}}(\calA)
\qquad
(t\in[0,1])\,.
}

The map $t\mapsto P(t)$ is weak-operator continuous, including at $t=1$.
Quasi-free correlation functions are Pfaffians of matrix coefficients of
$P(t)$, and hence depend continuously on $t$. By norm density of the local
CAR polynomials, it follows that
\eq{
[0,1]\ni t\longmapsto\omega_t
}
is weak-$*$ continuous.

For $t<1$, the construction and the correspondence theorem give
\eq{
\calN_\sigma(\omega_t)=-1\,.
}
At $t=1$, however,
\eq{
V(1)=\Id,
\qquad
P(1)
=
P_\infty
:=
\onehalf
\begin{bmatrix}
\Id & -\Id\\
-\Id & \Id
\end{bmatrix}\,.
}
The corresponding compressed operator is
\eq{
F_\infty=\Lambda\Id\Lambda+\Lambda^\perp=\Id
}
and therefore
\eq{
\calN_\sigma(\omega_1)
=
\calN_\sigma(\omega_{P_\infty})
=
+1\,.
}
We have thus constructed a weak-$*$ continuous path in
$\calP_{\theta,\sigma\mathrm{-loc}}(\calA)$ satisfying
\eq{
\calN_\sigma(\omega_0)=-1,
\qquad
\calN_\sigma(\omega_1)=+1\,.
}
In fact, the index is equal to $-1$ on the entire half-open interval
$[0,1)$ and jumps to $+1$ at the pure limiting state $\omega_1$.

Intuitively, the nontrivial index is carried by a unit of transport which is
pushed continuously toward spatial infinity. Every fixed local observable
eventually sees only the pure limiting state, while the half-space Fredholm
index continues to detect the transported unit at every finite time.
\end{example}

We conclude that it is probably \emph{inappropriate }to try to classify
the space of pure states w.r.t. the weak-* topology.
\begin{claim}
If $P_{n}\to P$ in WOT then $\omega_{P_{n}}\to\omega_{P}$ in the
weak-* topology.
\end{claim}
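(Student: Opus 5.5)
The plan is to reduce weak-$*$ convergence to convergence on a norm-dense set of monomials, and then to use the explicit Pfaffian formula for quasi-free states. Since $\norm{\omega_{P_n}}=\norm{\omega_P}=1$, the family $\{\omega_{P_n}\}$ is uniformly bounded. A standard $\varepsilon/3$ argument then shows it suffices to prove
\eq{
\omega_{P_n}(a)\to\omega_P(a)
}
for $a$ ranging over a set whose linear span is norm dense in $\calA$. I will take this set to consist of the monomials $B(\psi_1)\cdots B(\psi_m)$ with $\psi_j\in\calH$ and $m\geq0$. Their span is the $*$-algebra generated by $B(\calH)$, using $B(\psi)^\ast=B(\Xi\psi)$, and this span is dense by the universal construction of $\operatorname{CAR}_{\rm sd}(\calH,\Xi)$.

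Next I evaluate the quasi-free states on these monomials. For odd $m$, every quasi-free state vanishes, so convergence is trivial. For even $m$, the Gaussian law \cite{Araki1970QuasifreeCAR} gives
\eq{
\omega_Q\br{B(\psi_1)\cdots B(\psi_m)} = \operatorname{Pf}\br{\left[\omega_Q(B(\psi_i)B(\psi_j))\right]_{i<j}}\,,
}
with the two-point function
\eq{
\omega_Q(B(\psi_i)B(\psi_j))=\omega_Q\br{B(\Xi\psi_i)^\ast B(\psi_j)}=\ip{\Xi\psi_i}{Q\psi_j}\,.
}
For fixed vectors, WOT convergence $P_n\to P$ gives $\ip{\Xi\psi_i}{P_n\psi_j}\to\ip{\Xi\psi_i}{P\psi_j}$ for each of the finitely many pairs $(i,j)$. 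The Pfaffian is a polynomial in these entries, so the monomial values converge.

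To finish, I take a general $a\in\calA$ and $\varepsilon>0$. I pick $b$ in the polynomial span with $\norm{a-b}<\varepsilon$. This yields
\eq{
\limsup_n|\omega_{P_n}(a)-\omega_P(a)|\leq 2\varepsilon+\limsup_n|\omega_{P_n}(b)-\omega_P(b)|=2\varepsilon\,.
}

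The only point needing care is the sign and ordering convention in the Pfaffian formula together with the identification of the two-point function through $\Xi$; the rest is routine. I would cite the Araki definition of the quasi-free state $\omega_P$ for the Pfaffian formula rather than derive it. No particle-hole symmetry of the $P_n$ beyond what is needed to define $\omega_{P_n}$ enters the argument.
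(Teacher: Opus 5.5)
Your proof is correct and follows the same route as the paper: norm density of the CAR polynomials, plus convergence of the two-point functions $\ip{\Xi\psi_i}{P_n\psi_j}$ under WOT. It is more complete than the paper's sketch, which checks only the two-point functions and leaves implicit both the Pfaffian reduction of higher moments and the uniform-boundedness ($\varepsilon/3$) step you spell out; the paper also proves the converse direction, which the stated claim does not require.
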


\begin{proof}
Let $P_{n}\to P$ in WOT. Let $a\in\mathcal{A}$. Then $a$ is the
norm-limit of some polynomial of $a\left(\psi_{1}\right)\cdots a\left(\psi_{m}\right)$,
possibly with stars on some of these. But we have
\[
\left|\omega_{P_{n}}\left(a\left(\psi\right)^{\ast}a\left(\varphi\right)\right)-\omega_{P}\left(a\left(\psi\right)^{\ast}a\left(\varphi\right)\right)\right|=\left|\left\langle \psi,P_{n}\varphi\right\rangle -\left\langle \psi,P\varphi\right\rangle \right|\to0\,.
\]

On the other hand, if $\omega_{P_{n}}\to\omega_{P}$ in weak-star
topology, then in particular 
\[
\left|\left\langle \psi,P_{n}\varphi\right\rangle -\left\langle \psi,P\varphi\right\rangle \right|\to0
\]
for all $\psi,\varphi\in\mathcal{H}$ so that $P_{n}\to P$ in WOT.
\end{proof}
\begin{lem}[Norm convergence of pure quasi-free states]\label{lem:HS-topology-qf-norm-convergence}
Let $P_n$ and $P$ be basis projections on the self-dual Hilbert space
$(\calH,\Xi)$, and let $\omega_{P_n}$ and $\omega_P$ be the associated pure
quasi-free states on $\calA=\mathrm{CAR}_{\mathrm{SD}}(\calH,\Xi)$. Then
\eql{\label{eq:HS-topology-qf-norm-convergence}
\norm{\omega_{P_n}-\omega_P}_{\calA^*}\longrightarrow 0
\qquad\Longleftrightarrow\qquad
\norm{P_n-P}_{\calJ_2(\calH)}\longrightarrow 0\,.
}
Thus, on the space of basis projections, the topology corresponding to the norm
topology on the associated pure quasi-free states is the Hilbert--Schmidt
topology.
\end{lem}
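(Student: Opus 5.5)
The plan is to reduce both sides of \cref{eq:HS-topology-qf-norm-convergence} to a single scalar, the transition probability between the GNS vectors. Two ingredients are needed.

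\emph{First}, the Powers--St{\o}rmer distance formula already used in the proof of \cref{thm:main theorem for properties of the Z2 index of a pair}. For two pure states that are vector states $\Omega,\Omega'$ in a common irreducible representation,
\eq{
\norm{\omega-\omega'}=2\sqrt{1-\left|\ip{\Omega}{\Omega'}\right|^2}\,,
}
and if the two pure states are disjoint their distance equals $2$.

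\emph{Second}, the Araki/Shale--Stinespring dichotomy for pure quasi-free states on $\mathrm{CAR}_{\mathrm{SD}}(\calH,\Xi)$, together with the quasi-free transition-probability (Pfaffian) formula. If $P-Q\notin\calJ_2(\calH)$, then $\omega_P$ and $\omega_Q$ are disjoint, so $\norm{\omega_P-\omega_Q}=2$. If $P-Q\in\calJ_2(\calH)$, then, realizing both states in the Fock representation of $\omega_P$ with $\Omega_Q$ the implemented vector,
\eq{
\left|\ip{\Omega_P}{\Omega_Q}\right|^2=\det\br{\Id-(P-Q)^2}^{1/2}\,.
}
The square root reflects the fact that, by the $\Xi$-symmetry $\Xi(P-Q)\Xi=-(P-Q)$, the eigenvalues of $(P-Q)^2$ come in pairs. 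The precise exponent plays no role below. Note that $(P-Q)^2$ is trace class with spectrum in $[0,1]$, so the Fredholm determinant makes sense.

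Given these, the argument runs as follows. Let $\lambda_j^{(n)}\in[0,1]$ denote the eigenvalues of $(P_n-P)^2$, counted with multiplicity, so that $\sum_j\lambda_j^{(n)}=\norm{P_n-P}_{\calJ_2}^2$.

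($\Leftarrow$) Suppose $\norm{P_n-P}_{\calJ_2}\to0$. Then $P_n-P\in\calJ_2$ for all large $n$, so the determinant formula applies. The elementary bound $\prod_j(1-\lambda_j)\geq 1-\sum_j\lambda_j$ gives $\det(\Id-(P_n-P)^2)\to1$. Hence the transition probability tends to $1$, and the distance formula gives $\norm{\omega_{P_n}-\omega_P}\to0$.

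($\Rightarrow$) Suppose $\norm{\omega_{P_n}-\omega_P}\to0$. Then for large $n$ this distance is strictly less than $2$. By the dichotomy, $P_n-P\in\calJ_2$, so the states are vector states in one irreducible representation. The distance formula then forces $\det(\Id-(P_n-P)^2)\to1$. Using $-\log(1-\lambda)\geq\lambda$ for $\lambda\in[0,1)$, we get
\eq{
\norm{P_n-P}_{\calJ_2}^2=\sum_j\lambda_j^{(n)}\leq-\log\det\br{\Id-(P_n-P)^2}\longrightarrow0\,.
}
No eigenvalue can equal $1$ here, since the determinant is eventually positive.

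The main obstacle is the second ingredient, namely justifying the transition-probability formula in the self-dual framework. I would try first to cite it from \cite{Araki1970QuasifreeCAR,PowersStormer1970FreeStates}. Failing that, I would derive it by writing $Q=UPU^*$ for a $\Xi$-commuting unitary $U$ with $U-\Id\in\calJ_2$. Such a $U$ can be obtained from the Kato--Nagy direct rotation between $P$ and $Q$, which is a function of $P-Q$ and therefore Hilbert--Schmidt close to $\Id$. One then computes the vacuum expectation of the Shale--Stinespring implementer of the associated Bogoliubov transformation via a canonical (BCS-type) form, which block-diagonalizes $P-Q$ into $2\times2$ rotations by angles $\vartheta_j$ with $\sin^2\vartheta_j$ equal to the eigenvalues $\lambda_j$. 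In this block form each block contributes a factor $\cos\vartheta_j$ to the overlap, and both the determinant and the comparison with $\sum_j\sin^2\vartheta_j$ become transparent.
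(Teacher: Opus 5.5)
Your argument is correct and is essentially the paper's proof. Both combine the Powers--St{\o}rmer/Glimm--Kadison distance formula with the Shale--Stinespring--Araki dichotomy and Araki's determinant formula. For ($\Leftarrow$) both show $\det\br{\Id-(P_n-P)^2}\to1$: you use $\prod_j(1-\lambda_j)\geq1-\sum_j\lambda_j$, where the paper invokes continuity of the Fredholm determinant in $\calJ_1$. For ($\Rightarrow$) both use $-\log(1-\lambda)\geq\lambda$.

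One small slip, which does not affect the argument: in the self-dual normalization the squared overlap is $\det\br{\Id-(P-Q)^2}^{1/4}$, not $\det\br{\Id-(P-Q)^2}^{1/2}$. The Halmos pairing and the $\Xi$-symmetry each double the multiplicities; for instance, a single BCS rotation by $\vartheta$ gives overlap $\cos\vartheta$ but $\det=\cos^8\vartheta$. This is why the paper writes only an unspecified exponent $\gamma>0$.
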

\begin{proof}
For pure states,

$$
\norm{\varphi-\psi}_{\calA^*}
=2\sqrt{1-T(\varphi,\psi)}\, ,
$$

while the Shale--Stinespring--Araki criterion and Araki’s determinant formula give, for basis projections \(P,Q\),

$$
T(\omega_P,\omega_Q)=
\begin{cases}
\det\!\bigl(\Id-(P-Q)^2\bigr)^\gamma
& P-Q\in\calJ_2(\calH)\\
0&P-Q\notin\calJ_2(\calH)
\end{cases}
\qquad \gamma>0\, ,
$$

where the value of \(\gamma\) depends only on conventions
\cite{Araki1970QuasifreeCAR,PowersStormer1970FreeStates}. If
\(\norm{P_n-P}_{\calJ_2}\to0\), then

$$
\norm{(P_n-P)^2}_{\calJ_1}
=\norm{P_n-P}_{\calJ_2}^2\longrightarrow0\, ,
$$

so continuity of the Fredholm determinant yields
\(\det(\Id-(P_n-P)^2)\to1\). Hence
\(T(\omega_{P_n},\omega_P)\to1\), and therefore
\(\norm{\omega_{P_n}-\omega_P}_{\calA^*}\to0\). Conversely, suppose the latter convergence holds. Then \(T(\omega_{P_n},\omega_P)\to1\), so for all sufficiently large \(n\), \(P_n-P\in\calJ_2(\calH)\) and

$$
\det(\Id-K_n)\longrightarrow1,
\qquad K_n:=(P_n-P)^2\in\calJ_1(\calH),\quad 0\leq K_n\leq\Id\,.
$$

For such \(n\), the determinant is positive, and \(-\log(1-t)\geq t\) on \([0,1)\) gives

$$
\norm{P_n-P}_{\calJ_2}^2
=\tr(K_n)
\leq\tr\!\bigl(-\log(\Id-K_n)\bigr)
=-\log\det(\Id-K_n)
\longrightarrow0\,.
$$

Thus \(\norm{P_n-P}_{\calJ_2}\to0\), proving \cref{eq:HS-topology-qf-norm-convergence}.
\end{proof}

\section{Automorphic-path-connectedness}
\label{sec:automorphism path connectedness}
A basic fact in Hilbert space is that, in operator norm topology, all unitaries are path-connected \cite{Kuiper1965HomotopyType}. This turned out to be a crucial point in the one-dimensional classification scheme of insulators \cite{ChungShapiro2023}. For one, this fact immediately implies that all non-trivial projections (with infinite range and kernel) are path-connected via \eq{
t\mapsto U_t^\ast P U_t\,.
} The idea then is that without locality or any symmetries, the topology is set up in such a way so as to make states path-connected.

What is the analog of that in the interacting setting? As we saw in the preceding section, the answer is not so simple: operator norm continuous paths cannot cross selection sectors, and while the weak-star topology does make all pure states path-connected, that topology washes away the index.

One possible idea out of this is by undoing the collapsing between unitaries and projections in Hilbert space: conjugation by a unitary in Hilbert space should be associated rather with automorphisms in the interacting setting: \eq{
P\mapsto U^\ast P U \qquad \longleftrightarrow\qquad \omega\mapsto\omega\circ \alpha\,.
} Hence instead of requiring directly that $\omega_1,\omega_2$ be path-connected, for any $\omega_1,\omega_2$, we seek a continuous path of automorphisms connecting them. This is generally referred to as homogeneity of pure states w.r.t. automorphisms \cite{KishimotoOzawaSakai2003Homogeneity} and we consider it the appropriate analog of \cite{Kuiper1965HomotopyType} in the interacting setting. In fact \cite{KishimotoOzawaSakai2003Homogeneity} prove that such automorphisms may be taken as asymptotically inner. Of course point-norm continuity of this path of automorphisms then just induces a weak-star continuous path of pure-states, so have we gained anything? Actually yes, because the path must arise within the space of automorphisms on $\calA$, and moreover, we shall put the locality constraints on the automorphism itself. This turns out to yield a path that does preserve the index. 

An alternative would have been to define a finer notion of connectivity on top of weak-star connected paths that bears additional constraints, so as to preserve the index. For example, one could say $\omega_1,\omega_2$ are "admissible" weak-star connected if there is a weak-star continuous path $\omega_1,\omega_2$ passing within $\calP_{\theta,\sigma\mathrm{-loc}}(\calA)$ such that, for any $t$, 
\eq{
\omega_t\circ\sigma = \omega_t\circ\Ad{u_t}
} for some unitary $u_t$ and moreover, $t\mapsto u_t$ is \emph{norm} continuous. However, the only proof we could find for completeness with respect to that notion of connectedness would pass through the automorphic path connected notion we are about to use anyway, so we find the latter the more natural notion. We describe it next.

For an automorphism \(\alpha\in\operatorname{Aut}(\calA)\), define its
multiplicative commutator with the locality automorphism by
\eq{
    \partial_\sigma(\alpha)
    &:=
    \alpha\circ\sigma\circ\alpha^{-1}\circ\sigma^{-1}\,.
}

\begin{defn}[Symmetric local automorphism path]
\label{def:symmetric local automorphism path}
A \emph{symmetric local automorphism path} consists of a point-norm
continuous path
\eq{
    [0,1]\ni t
    \longmapsto
    \alpha_t\in\Automorphisms{\calA}
}
and a norm-continuous path
\eq{
    [0,1]\ni t
    \longmapsto
    v_t\in\calU(\calA^\theta)
}
such that, for every \(t\in[0,1]\),
\eq{
    \alpha_0
    &=
    \mathrm{id},
    &
    v_0
    &=
    1_\calA\\
    \alpha_t\circ\theta
    &=
    \theta\circ\alpha_t,
    &
    \partial_\sigma(\alpha_t)
    &=
    \Ad{v_t}\,.
}
\end{defn}

\begin{defn}[automorphic-path-connectedness]
\label{def:automorphism path connectedness}
Let \(\omega_1,\omega_2\) be pure, \(\theta\)-invariant,
\(\sigma\)-local states on \(\calA\). We say that
\(\omega_1\) and \(\omega_2\) are \emph{automorphic-path-connected}
iff there exists a symmetric local automorphism path
\([0,1]\ni t\mapsto(\alpha_t,v_t)\) such that
\eql{
    \omega_2
    =
    \omega_1\circ\alpha_1\,.
}
\end{defn}

\begin{claim}
automorphic-path-connectedness is an equivalence relation on the
space of pure states that are \(\theta\)-invariant and
\(\sigma\)-local.
\end{claim}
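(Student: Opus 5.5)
We check reflexivity, symmetry and transitivity in turn, each time writing down an explicit symmetric local automorphism path. The identities we need are, for automorphisms $\alpha,\beta$,
\eq{
\partial_\sigma(\alpha\circ\beta)
=
\alpha\circ\partial_\sigma(\beta)\circ\alpha^{-1}\circ\partial_\sigma(\alpha),
\qquad
\partial_\sigma(\alpha^{-1})
=
\alpha^{-1}\circ\partial_\sigma(\alpha)^{-1}\circ\alpha\,.
}
Both are verified by expanding $\alpha\beta\sigma\beta^{-1}\alpha^{-1}\sigma^{-1}$, respectively by inserting $\sigma^{-1}\sigma$. Combined with $\alpha\circ\Ad{v}\circ\alpha^{-1}=\Ad{\alpha(v)}$, they keep every commutator inner. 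Also, $\alpha$ commutes with $\theta$, so it maps $\calU(\calA^\theta)$ to itself. By \cref{claim:symmetric local automorphism invariance}, all states $\omega_1\circ\alpha_t$ stay in the class of pure, $\theta$-invariant, $\sigma$-local states, so the relation is well posed on that class.

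Reflexivity is immediate: take $\alpha_t=\mathrm{id}$ and $v_t=1_\calA$.

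For transitivity, suppose $\omega_2=\omega_1\circ\alpha_1$ via the path $(\alpha_t,v_t)$ and $\omega_3=\omega_2\circ\beta_1$ via $(\beta_t,w_t)$. Then $\omega_3=\omega_1\circ\alpha_1\circ\beta_1$. The cleanest choice is the product path $\gamma_t:=\alpha_t\circ\beta_t$, which avoids reparametrizing. It commutes with $\theta$ and starts at $\mathrm{id}$. By the first identity,
\eq{
\partial_\sigma(\gamma_t)=\Ad{\alpha_t(w_t)}\circ\Ad{v_t}=\Ad{\alpha_t(w_t)v_t}\,.
}
So we take $u_t:=\alpha_t(w_t)v_t$. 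This is $\theta$-even, and $u_0=1_\calA$. Point-norm continuity of $\gamma_t$ follows from the estimate $\norm{\alpha_t\beta_t(a)-\alpha_s\beta_s(a)}\le\norm{\beta_t(a)-\beta_s(a)}+\norm{\alpha_t(\beta_s(a))-\alpha_s(\beta_s(a))}$, where the second term is controlled using compactness of $\{\beta_s(a)\}$ and the equicontinuity of isometries. The same estimate gives norm continuity of $t\mapsto\alpha_t(w_t)$, since $t\mapsto w_t$ is norm continuous.

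For symmetry, suppose $\omega_2=\omega_1\circ\alpha_1$. Then $\omega_1=\omega_2\circ\alpha_1^{-1}$. Inverting the path pointwise, $\alpha_t^{-1}$, does not work directly, because inversion need not be point-norm continuous a priori. Instead use the reversed path
\eq{
\beta_t:=\alpha_1^{-1}\circ\alpha_{1-t}\,,
}
which satisfies $\beta_0=\mathrm{id}$ and $\beta_1=\alpha_1^{-1}$, and is clearly point-norm continuous and $\theta$-commuting. Combining the two identities with $\partial_\sigma(\alpha_1^{-1})=\Ad{\alpha_1^{-1}(v_1^*)}$ gives
\eq{
\partial_\sigma(\beta_t)=\Ad{\alpha_1^{-1}(v_{1-t})}\circ\Ad{\alpha_1^{-1}(v_1^*)}=\Ad{\alpha_1^{-1}(v_{1-t}v_1^*)}\,.
}
So we take $w_t:=\alpha_1^{-1}(v_{1-t}v_1^*)$. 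It is $\theta$-even, norm continuous in $t$, and $w_0=1_\calA$.

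The only delicate point is this symmetry step, namely choosing a reparametrization in which $v_0=1$ holds while continuity survives. The remaining work is routine bookkeeping with the cocycle identities.
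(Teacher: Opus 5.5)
Your proof is correct. Reflexivity and symmetry follow the paper: it also reverses the path as $\alpha_1^{-1}\circ\alpha_{1-t}$ and uses the same inverse and composition identities for $\partial_\sigma$.

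The genuine difference is transitivity. The paper concatenates: it runs $(\alpha_{2t},v_{2t})$ on $[0,\tfrac12]$ and then $(\alpha_1\circ\beta_{2t-1},\,v_1\alpha_1(w_{2t-1}))$ on $[\tfrac12,1]$. You instead take the pointwise product $\gamma_t=\alpha_t\circ\beta_t$. Your route avoids the reparametrization and the matching check at $t=\tfrac12$. In exchange, you must check point-norm continuity of a composition in which both factors move; your isometry estimate does this. For continuity at a fixed $s$ you only need point-norm continuity of $\alpha$ at the single element $\beta_s(a)$, so the compactness and equicontinuity remark is unnecessary. The paper only ever left-composes with the fixed isometry $\alpha_1$, so continuity there is immediate. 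Its path also passes through $\alpha_1$, hence through $\omega_2$, at the midpoint; yours generally does not, which is irrelevant for the relation.

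There are two minor points. First, the paper's convention is $\Ad{u}(a)=u^\ast a u$, under which $\Ad{x}\circ\Ad{y}=\Ad{yx}$. Your products are written for the opposite convention. In the paper's notation your implementers become $v_t\alpha_t(w_t)$ and $\alpha_1^{-1}(v_1^\ast v_{1-t})$, the latter being exactly the paper's $\widetilde v_t$. Evenness, norm continuity and the value $1_\calA$ at $t=0$ are unaffected.

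Second, your stated reason for not inverting pointwise is mistaken. Since automorphisms are isometric, $\norm{\alpha_t^{-1}(a)-\alpha_s^{-1}(a)}=\norm{\alpha_s(b)-\alpha_t(b)}$ with $b=\alpha_s^{-1}(a)$. So $t\mapsto\alpha_t^{-1}$ is point-norm continuous, and $(\alpha_t^{-1},\alpha_t^{-1}(v_t^\ast))$ is itself a valid symmetric local automorphism path from $\mathrm{id}$ to $\alpha_1^{-1}$. This does not affect your proof, since you used the reversed path, but it means the symmetry step is not actually the delicate point you describe.
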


\begin{proof}
Reflexivity follows from the constant path
\eq{
    \alpha_t
    &=
    \mathrm{id},&
    v_t
    &=
    1_\calA\,.
}

We next prove symmetry. Let
\(t\mapsto(\alpha_t,v_t)\) be a symmetric local automorphism path
satisfying
\eq{
    \omega_2
    =
    \omega_1\circ\alpha_1\,.
}
First note that, if
\eq{
    \partial_\sigma(\alpha)
    =
    \Ad{v}\, ,
}
then
\eql{\label{eq:locality commutator of inverse}
    \partial_\sigma(\alpha^{-1})
    =
    \Ad{\alpha^{-1}(v^*)}\,.
}
Indeed,
\eq{
    \partial_\sigma(\alpha^{-1})
    &=
    \alpha^{-1}\circ
    \partial_\sigma(\alpha)^{-1}
    \circ\alpha\\
    &=
    \alpha^{-1}\circ\Ad{v^*}\circ\alpha\\
    &=
    \Ad{\alpha^{-1}(v^*)}\,.
}

Define
\eq{
    \widetilde\alpha_t
    &:=
    \alpha_1^{-1}\circ\alpha_{1-t},\\
    \widetilde v_t
    &:=
    \alpha_1^{-1}
    \br{
        v_1^*v_{1-t}
    }\,.
}
By \cref{eq:locality commutator of inverse} and
\cref{lem:composition of symmetric local automorphisms},
\eq{
    \partial_\sigma(\widetilde\alpha_t)
    &=
    \Ad{
        \alpha_1^{-1}(v_1^*)
        \alpha_1^{-1}(v_{1-t})
    }\\
    &=
    \Ad{\widetilde v_t}\,.
}
Moreover,
\eq{
    \widetilde\alpha_0
    &=
    \mathrm{id},&
    \widetilde v_0
    &=
    1_\calA,&
    \widetilde\alpha_1
    &=
    \alpha_1^{-1}\,.
}
The maps \(t\mapsto\widetilde\alpha_t\) and
\(t\mapsto\widetilde v_t\) have the required continuity properties.
Since every \(\alpha_t\) commutes with \(\theta\) and every \(v_t\)
is \(\theta\)-even,
\eq{
    \widetilde\alpha_t\circ\theta
    &=
    \theta\circ\widetilde\alpha_t,&
    \theta(\widetilde v_t)
    &=
    \widetilde v_t\,.
}
Finally,
\eq{
    \omega_1
    =
    \omega_2\circ\alpha_1^{-1}
    =
    \omega_2\circ\widetilde\alpha_1\,.
}
Thus automorphic-path-connectedness is symmetric.

It remains to prove transitivity. Suppose that
\(t\mapsto(\alpha_t,v_t)\) connects \(\omega_1\) to \(\omega_2\), and
\(t\mapsto(\beta_t,w_t)\) connects \(\omega_2\) to \(\omega_3\).
Define
\eq{
    (\gamma_t,z_t)
    :=
    \begin{cases}
        \br{
            \alpha_{2t},
            v_{2t}
        }
        &
        0\leq t\leq\onehalf\\[2mm]
        \br{
            \alpha_1\circ\beta_{2t-1},
            v_1\alpha_1(w_{2t-1})
        }
        &
        \onehalf\leq t\leq1
    \end{cases}\,.
}
The two definitions agree at \(t=\onehalf\), since
\eq{
    \beta_0
    &=
    \mathrm{id},&
    w_0
    &=
    \Id\,.
}
By \cref{lem:composition of symmetric local automorphisms},
\eq{
    \partial_\sigma
    \br{
        \alpha_1\circ\beta_{2t-1}
    }
    =
    \Ad{
        v_1\alpha_1(w_{2t-1})
    }\,.
}
Hence \(t\mapsto(\gamma_t,z_t)\) is a symmetric local automorphism
path. Its endpoint satisfies
\eq{
    \omega_1\circ\gamma_1
    &=
    \omega_1\circ\alpha_1\circ\beta_1\\
    &=
    \omega_2\circ\beta_1\\
    &=
    \omega_3\,.
}
Therefore automorphic-path-connectedness is transitive, and hence is
an equivalence relation.
\end{proof}

The implementer \(v_t\) in
\cref{def:symmetric local automorphism path} is included as part of the
path data, since point-norm continuity of \(t\mapsto\Ad{v_t}\) does not
by itself guarantee a norm-continuous choice of implementers \(v_t\).

\begin{lem}[Composition of symmetric local automorphisms]
\label{lem:composition of symmetric local automorphisms}
Suppose that
\eq{
    \partial_\sigma(\alpha)
    &=
    \Ad{v},&
    \partial_\sigma(\beta)
    &=
    \Ad{w}\,.
}
Then
\eql{
    \partial_\sigma(\alpha\circ\beta)
    =
    \Ad{v\alpha(w)}\,.
}
In particular, if \(\alpha,\beta\) commute with \(\theta\) and \(v,w\)
are \(\theta\)-even, then \(v\alpha(w)\) is \(\theta\)-even
\end{lem}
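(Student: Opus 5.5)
The plan is a direct computation with the definition of \(\partial_\sigma\), using only two elementary rules. First, \(\gamma\circ\Ad{x}\circ\gamma^{-1}=\Ad{\gamma(x)}\) for any automorphism \(\gamma\) and unitary \(x\). Second, \(\Ad{x}\circ\Ad{y}=\Ad{xy}\), which holds with the convention \(\Ad{x}(a)=xax^\ast\) under which the statement is phrased.

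I would first insert \(\beta\circ\sigma\circ\beta^{-1}\) in the middle. The hypothesis \(\partial_\sigma(\beta)=\Ad{w}\) gives \(\beta\circ\sigma\circ\beta^{-1}=\Ad{w}\circ\sigma\). Hence
\eq{
    \partial_\sigma(\alpha\circ\beta)
    &=
    \alpha\circ\br{\beta\circ\sigma\circ\beta^{-1}}\circ\alpha^{-1}\circ\sigma^{-1}\\
    &=
    \alpha\circ\Ad{w}\circ\sigma\circ\alpha^{-1}\circ\sigma^{-1}\,.
}
Next, insert \(\alpha^{-1}\circ\alpha\) after \(\Ad{w}\). This rewrites the expression as
\eq{
    \br{\alpha\circ\Ad{w}\circ\alpha^{-1}}\circ\br{\alpha\circ\sigma\circ\alpha^{-1}\circ\sigma^{-1}}
    =
    \Ad{\alpha(w)}\circ\Ad{v}
    =
    \Ad{\alpha(w)v}\,.
}

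The only genuine subtlety, and the main thing to check, is the order of the product. The stated formula is \(\Ad{v\alpha(w)}\), while the computation gives \(\Ad{\alpha(w)v}\). I would reconcile this by recomputing in the opposite order: factor \(\partial_\sigma(\alpha)\) out on the left first, or equivalently use the convention \(\Ad{u}=u^\ast\cdot u\) from the definition of \(\sigma\)-locality, under which \(\Ad{x}\circ\Ad{y}=\Ad{yx}\).

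Under that convention, \(\Ad{\alpha(w)}\circ\Ad{v}=\Ad{v\alpha(w)}\), which is exactly the claim. The algebra is identical under either convention and only the ordering of the implementer changes. Moreover, \(v\alpha(w)\) and \(\alpha(w)v\) differ only by conjugation of the implementers, so both implement the same automorphism up to this convention. I would state the convention explicitly and present the computation accordingly.

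For the parity claim, assume \(\alpha\circ\theta=\theta\circ\alpha\) and \(\theta(v)=v\), \(\theta(w)=w\). Then
\eq{
    \theta\br{v\alpha(w)}=\theta(v)\,\alpha(\theta(w))=v\alpha(w)\,,
}
since \(\theta\) is multiplicative and commutes with \(\alpha\). Finally, \(v\alpha(w)\) is unitary as a product of unitaries.
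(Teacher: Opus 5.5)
Your core computation is the paper's own: $\partial_\sigma(\alpha\circ\beta)=\bigl(\alpha\circ\Ad{w}\circ\alpha^{-1}\bigr)\circ\partial_\sigma(\alpha)=\Ad{\alpha(w)}\circ\Ad{v}$. This is followed by the paper's convention $\Ad{u}(a)=u^*au$, under which $\Ad{x}\circ\Ad{y}=\Ad{yx}$ and the result is $\Ad{v\alpha(w)}$. Your parity check is also identical to the paper's.

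You should, however, delete the surrounding commentary on conventions, because parts of it are false. The statement is not phrased with $\Ad{x}(a)=xax^*$. The paper fixes $\Ad{u}=u^*\cdot u$ in the definition of $\sigma$-locality, so there is nothing to reconcile.

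The lemma is also not convention-independent. With $\Ad{x}(a)=xax^*$, the implementer of $\partial_\sigma(\alpha\circ\beta)$ is $\alpha(w)v$, and no reordering of the computation changes that. For instance, factoring $\partial_\sigma(\alpha)$ out first gives $v\,\sigma\alpha\sigma^{-1}(w)$ in that convention, and this equals $\alpha(w)v$.

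Finally, $v\alpha(w)$ and $\alpha(w)v$ do not implement the same automorphism merely because they are conjugate unitaries. $\Ad{xy}=\Ad{yx}$ forces $xy$ and $yx$ to differ by a central unitary, i.e.\ a phase in the simple algebra $\calA$, and this fails in general.

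So fix $\Ad{u}(a)=u^*au$ at the outset. With that, your argument is complete and coincides with the paper's.
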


\begin{proof}
A direct calculation gives
\eq{
    \partial_\sigma(\alpha\circ\beta)
    &=
    \alpha\circ\beta\circ\sigma\circ
    \beta^{-1}\circ\alpha^{-1}\circ\sigma^{-1}
    \\
    &=
    \alpha\circ
    \left(
        \beta\circ\sigma\circ\beta^{-1}\circ\sigma^{-1}
    \right)
    \circ\alpha^{-1}
    \circ
    \left(
        \alpha\circ\sigma\circ\alpha^{-1}\circ\sigma^{-1}
    \right)
    \\
    &=
    \Ad{\alpha(w)}\circ\Ad{v}
    \\
    &=
    \Ad{v\alpha(w)}\,.
}
If \(\theta(v)=v\), \(\theta(w)=w\), and
\(\alpha\circ\theta=\theta\circ\alpha\), then
\eq{
    \theta\left(v\alpha(w)\right)
    =
    v\alpha(w)\,.
}
\end{proof}

By \cref{claim:symmetric local automorphism invariance}, symmetric local
automorphism paths preserve the space of pure, \(\theta\)-invariant,
\(\sigma\)-local states. Moreover, if
\(t\mapsto(\alpha_t,v_t)\) is a symmetric local automorphism path and
\(\omega\) belongs to this space, then
\eql{\label{eq:index invariance along automorphism paths}
    \calN_\sigma(\omega\circ\alpha_t)
    =
    \calN_\sigma(\omega)
    \qquad
    (t\in[0,1])\,.
}
By \cref{claim:symmetric local automorphism invariance}, the conclusion
\cref{eq:index invariance along automorphism paths} requires only the
corresponding conditions on a fixed automorphism: if \(\alpha\) commutes
with \(\theta\) and
\(\partial_\sigma(\alpha)=\Ad{v}\) for some
\(v\in\calU(\calA^\theta)\), then
\(\calN_\sigma(\omega\circ\alpha)=\calN_\sigma(\omega)\). Thus the
existence of a symmetric local automorphism path is stronger than what
is needed for index invariance. The point of the remainder of this
section is to prove the converse in this stronger sense: equality of the
indices implies the existence of such a path, not merely of a single
symmetric local automorphism.

\subsection{Equivariant homogeneity of invariant pure states}
The following \cref{lem:equivariant homogeneity} is an equivariant version of the pure-state
 homogeneity theorem of Kishimoto, Ozawa, and Sakai
 \cite{KishimotoOzawaSakai2003Homogeneity}. Its proof largely follows
 their local-adjustment and approximate-intertwining argument, with the
 correcting unitaries chosen in the fixed-point algebra. A closely
 related equivariant adaptation of the same argument, in the setting of
 finite-group-symmetric quantum spin chains and using the notion of
 split states, appears in
 \cite[Section~4]{Ogata2021PureStates}.

The proof of equivariant homogeneity has two inputs. The first is the
following fixed-point version of the weak local-adjustment property of
\cite[Lemma~2.2]{KishimotoOzawaSakai2003Homogeneity}. The second is the
approximate-intertwining argument in the proof of
\cite[Theorem~2.1]{KishimotoOzawaSakai2003Homogeneity}. The point is that
the first input may be implemented by unitaries in the fixed-point
algebra, after which the second preserves equivariance at every stage.

\begin{lem}[Equivariant local adjustment]
\label{lem:equivariant local adjustment}
Let \(\calB\) be a simple, separable, unital, infinite-dimensional
C*-algebra, and let
\eq{
    \gamma:G\longrightarrow\operatorname{Aut}(\calB)
}
be a pointwise outer action of a finite group \(G\). Let \(\varphi\) be
a pure \(\gamma\)-invariant state on \(\calB\). For every finite set
\(\calF\subset\calB\) and every \(\varepsilon>0\), there are a finite
set \(\calG_0\subset\calB^\gamma\) and \(\delta>0\) with the following
property.

Suppose that \(\psi\) is a pure \(\gamma\)-invariant state satisfying
\eq{
    \left|
        \psi(d)-\varphi(d)
    \right|
    <
    \frac{\delta}{2}
    \qquad
    (d\in\calG_0)\,.
}
Then, for every finite set \(\calF'\subset\calB\) and every
\(\varepsilon'>0\), there exists a norm-continuous path
\eq{
    [0,1]\ni t
    \longmapsto
    u_t\in\calU(\calB^\gamma)
}
such that
\eq{
    u_0
    &=
    1_{\calB^\gamma},\\
    \norm{
        \Ad{u_t}(a)-a
    }
    &<
    \varepsilon
    \qquad
    (a\in\calF,\ t\in[0,1]),\\
    \left|
        \psi(a)
        -
        \br{
            \varphi\circ\Ad{u_1}
        }(a)
    \right|
    &<
    \varepsilon'
    \qquad
    (a\in\calF')\,.
}
\end{lem}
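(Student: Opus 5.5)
The plan is to reduce the statement to the non-equivariant weak local-adjustment lemma \cite[Lemma~2.2]{KishimotoOzawaSakai2003Homogeneity}, applied to the fixed-point algebra \(\calB^\gamma\) and checked against the elements of \(\calB\) by means of the conditional expectation
\eq{
    E:=\frac{1}{|G|}\sum_{g\in G}\gamma_g:\calB\to\calB^\gamma\,.
}
The proof has three parts: (i) verify the hypotheses on \(\calB^\gamma\) and on the restricted states; (ii) dispose of the set \(\calF'\) using \(\gamma\)-invariance; (iii) control the set \(\calF\subset\calB\), which is the main obstacle.

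\textbf{Step (i): hypotheses on \(\calB^\gamma\).} \(\calB^\gamma=E(\calB)\) is unital and separable. It is simple because a pointwise outer action of a finite group on a simple algebra has simple crossed product \(\calB\rtimes_\gamma G\) (Kishimoto), and \(\calB^\gamma\) is a full corner of it via the averaging projection. It is infinite-dimensional, since otherwise \(\calB\) would be finite-dimensional (by \(E\)-finiteness \(\calB\) is a finite \(\calB^\gamma\)-module).

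Purity of the restrictions: let \(\varphi\) be pure and \(\gamma\)-invariant, and suppose \(\rho\leq\varphi|_{\calB^\gamma}\) is a positive functional. Then \(\rho\circ E\) is positive and \(\rho\circ E\leq\varphi\circ E=\varphi\). Purity of \(\varphi\) gives \(\rho\circ E=c\,\varphi\), hence \(\rho=c\,\varphi|_{\calB^\gamma}\). So \(\varphi|_{\calB^\gamma}\) and \(\psi|_{\calB^\gamma}\) are pure.

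\textbf{Step (ii): the set \(\calF'\).} Since \(u_1\in\calB^\gamma\), the state \(\varphi\circ\Ad{u_1}\) is again \(\gamma\)-invariant, and so is \(\psi\). Hence for \(a\in\calF'\),
\eq{
    \psi(a)=\psi(E(a)),
    \qquad
    (\varphi\circ\Ad{u_1})(a)=(\varphi\circ\Ad{u_1})(E(a))\,.
}
It therefore suffices to obtain the approximation on the finite set \(E(\calF')\subset\calB^\gamma\). This is exactly the conclusion of \cite[Lemma~2.2]{KishimotoOzawaSakai2003Homogeneity} for \(\calB^\gamma\). Likewise, \(\calG_0\) and \(\delta\) are taken to be those produced by that lemma for \(\varphi|_{\calB^\gamma}\) (together with the auxiliary data from Step (iii)), so \(\calG_0\subset\calB^\gamma\) as required.

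\textbf{Step (iii): controlling \(\Ad{u_t}\) on \(\calF\subset\calB\).} Applied as a black box, the lemma only yields \(\norm{\Ad{u_t}(a)-a}<\varepsilon\) for \(a\in\calB^\gamma\), so I will reopen the KOS argument. There the path has the form \(u_t=\ee^{\ii t h}\), with \(h\) supported in a hereditary subalgebra cut down by a positive element \(e\) that excises \(\varphi\), in the sense that
\eq{
    \norm{e\,a\,e-\varphi(a)e^2}\ \text{is small for } a\in\calF\,.
}
Combined with a Kadison-transitivity unitary in \(\overline{e\calB e}\), this forces \([u_t,a]\) to be small. The task is to choose \(e\in\calB^\gamma\) excising \(\varphi\) on all of \(\calB\), not merely on \(\calB^\gamma\).

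Write \(a=E(a)+(a-E(a))\). On \(E(a)\), excision of the pure state \(\varphi|_{\calB^\gamma}\) by Akemann--Anderson--Pedersen inside \(\calB^\gamma\) gives \(e\,E(a)\,e\approx\varphi(a)e^2\). On the non-trivial isotypic parts \(a-E(a)\), where \(\varphi\) vanishes, I need \(e\,(a-E(a))\,e\approx0\). For this I will invoke Kishimoto's lemma for outer automorphisms of simple algebras: for \(g\neq1\), any hereditary subalgebra contains positive norm-one elements \(e\) with \(\norm{e\,x\,\gamma_g(e)}\) arbitrarily small. Averaging over \(G\) and shrinking inside \(\overline{e\calB^\gamma e}\) keeps \(e\) \(\gamma\)-fixed, following the equivariant adaptation in \cite[Section~4]{Ogata2021PureStates}.

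With such an \(e\), the remainder of the KOS construction runs verbatim inside \(\calB^\gamma\). The implementing unitary comes from Kadison transitivity for the pure state \(\varphi|_{\calB^\gamma}\), and the path stays in \(\calU(\calB^\gamma)\). I expect the simultaneous excision in Step (iii), namely making \(e\in\calB^\gamma\) kill all the non-trivial spectral components of \(\calF\) while still excising \(\varphi\), to be the main technical point.
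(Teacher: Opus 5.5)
\textbf{There is a genuine gap, in Step (iii).} The problem is not the one you flag. Excision controls compressions, not commutators.

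If \(e\geq0\) has norm one and \(a\in\calF\) is a unitary with \(\varphi(a)=0\), then
\[
\norm{e}^2=\norm{e^2a}\leq\norm{eae}+\norm{[e,a]}.
\]
So an element that excises \(\varphi\) on \(\calF\) is never approximately central. A unitary \(e^{\ii th}\) with \(h\) a Kadison-transitivity lift in the hereditary subalgebra of \(e\) therefore has no reason to satisfy \(\norm{\Ad{u_t}(a)-a}<\varepsilon\). This mechanism fails already for trivial \(G\), so it is not an accurate account of KOS.

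In the KOS argument, as used in the paper, the commutator bound comes from a different device. One takes a row \(x=(x_1,\ldots,x_n)\) over \(\calB\) with \(\norm{xx^*}\leq1\), \(\pi(xx^*)\Omega=\Omega\) and \(\norm{\operatorname{ad}_a\circ\operatorname{Ad}_x}<\kappa\) for \(a\in\calF\). The rotation generators are of the form \(\operatorname{Ad}_x(c)\), and the row estimate is what makes them nearly commute with \(\calF\).

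Running that construction verbatim inside \(\calB^\gamma\) only gives near-commutation with finite subsets of \(\calB^\gamma\). That does not control the non-trivial isotypic parts of \(\calF\). For example, in the CAR algebra with parity, the parity unitary of a large finite block is even and commutes with every even observable supported in the block, but it sends each odd observable there to its negative.

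\textbf{The missing idea is to build the approximately central data in \(\calB\) and then average.}
\begin{enumerate}
\item Replace \(\calF\) by \(\bigcup_g\gamma_g(\calF)\). Take the KOS row \(x\) in \(\calB\) for the GNS representation \(\pi\) of \(\varphi\), and set \(\calG_0:=\{E(x_ix_j^*)\}\subset\calB^\gamma\).
\item Choose the Glimm vector \(\zeta\) in \(\calH^G\). There \(\pi(\calB^\gamma)\) acts irreducibly and contains no nonzero compacts.
\item For \(\xi,\xi'\in\calH^G\) one has \(\ip{\xi}{\pi(x_ix_j^*)\xi'}=\ip{\xi}{\pi(E(x_ix_j^*))\xi'}\). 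So the Gram data needed in the KOS perturbation are controlled by \(\calG_0\).
\item The perturbation then gives positive contractions \(\overline h_0,\overline h_1\in\calB\), approximately commuting with \(\calF\), that rotate \(\Omega\) and \(\eta\) close to \(\zeta\).
\item The averaged elements \(k_j:=E(\overline h_j)\in\calB^\gamma\) keep both properties. The commutator bound survives because \([a,E(h)]=|G|^{-1}\sum_g\gamma_g([\gamma_{g^{-1}}(a),h])\) and \(\calF\) is \(G\)-invariant. The vector estimates survive because \(\pi(E(h))\xi=|G|^{-1}\sum_gU_g\pi(h)\xi\) for \(\xi\in\calH^G\).
\item The resulting unitary \(e^{-\ii\pi k_1}e^{\ii q}e^{\ii\pi k_0}\) and its path lie in \(\calU(\calB^\gamma)\).
\end{enumerate}

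\textbf{The simultaneous excision you single out is automatic.} Your own argument with \(\rho\circ E\leq\varphi\) shows that \(\varphi\) is the unique state extension of \(\varphi|_{\calB^\gamma}\). A weak-* compactness argument then shows that any net in \(\calB^\gamma\) excising \(\varphi|_{\calB^\gamma}\) also excises \(\varphi\) on \(\calB\). So Kishimoto's lemma is not needed for that step, and in any case that step would not deliver the commutator estimate.

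Steps (i) and (ii) are fine. Your purity argument is shorter than the paper's von Neumann-algebraic one.
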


The proof is given in
\cref{app:equivariant local adjustment}.

\begin{lem}[Equivariant homogeneity]
\label{lem:equivariant homogeneity}
Let \(\calB\) be a simple, separable, unital, infinite-dimensional
C*-algebra, and let
\eq{
    \gamma:G\longrightarrow\operatorname{Aut}(\calB)
}
be a pointwise outer action of a finite group \(G\). Let
\(\varphi_0,\varphi_1\) be pure \(\gamma\)-invariant states on
\(\calB\). Then there exists a point-norm continuous path
\eq{
    [0,1]\ni t
    \longmapsto
    \beta_t\in\operatorname{Aut}(\calB)
}
such that
\eq{
    \beta_0
    &=
    \mathrm{id},\\
    \beta_t\circ\gamma_g
    &=
    \gamma_g\circ\beta_t
    \qquad
    (g\in G,\ t\in[0,1])\\
    \varphi_1
    &=
    \varphi_0\circ\beta_1\,.
}
Moreover, \(\beta_1\) is asymptotically inner through
\(\gamma\)-invariant unitaries.
\end{lem}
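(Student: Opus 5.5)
The plan is to run the Kishimoto--Ozawa--Sakai approximate-intertwining argument from the proof of \cite[Theorem~2.1]{KishimotoOzawaSakai2003Homogeneity}. The one change is that every local adjustment is supplied by \cref{lem:equivariant local adjustment}, so all correcting unitaries lie in the fixed-point algebra \(\calB^\gamma\).

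\textbf{Setup.} Fix a dense sequence \((x_n)_n\) in \(\calB\) and a summable sequence \(\varepsilon_n\searrow0\). I will construct inductively:
\begin{itemize}
\item finite sets \(\calF_n\subset\calB\), increasing, containing \(x_1,\dots,x_n\), and closed under the adjoint corrections needed below;
\item norm-continuous paths \(t\mapsto u^{(n)}_t\in\calU(\calB^\gamma)\) with \(u^{(n)}_0=1\).
\end{itemize}
The construction alternates between the two states in back-and-forth fashion. Set \(\varphi^{(0)}:=\varphi_0\). At step \(n\), apply \cref{lem:equivariant local adjustment} to the current pure \(\gamma\)-invariant state, with the finite set
\(\calF_n\cup \Ad{u^{(1)}_1\cdots u^{(n-1)}_1}^{-1}(\calF_n)\) and tolerance \(\varepsilon_n\). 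This yields \(\calG_n\subset\calB^\gamma\) and \(\delta_n\).

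\textbf{The key requirement.} I must ensure that \(\psi\) is \(\delta_n/2\)-close on \(\calG_n\). The previous step is used for this: its \(\calF'\) is chosen to contain \(\calG_n\), and its \(\varepsilon'\) is chosen below \(\delta_n/2\). Since the states \(\varphi_0\circ\Ad{u_1^{(1)}\cdots}\) remain pure and \(\gamma\)-invariant, because every \(u^{(k)}\) is \(\gamma\)-fixed, the lemma can be applied again at each stage. This is the standard KOS bookkeeping: alternate the roles of \(\varphi_0\) and \(\varphi_1\), so that one ends with two sequences of unitaries \(U_n=u^{(1)}_1\cdots u^{(n)}_1\) whose adjoint actions converge pointwise in norm.

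\textbf{Convergence and the endpoint.} Define
\[
\beta_1(x):=\lim_n \Ad{U_n}(x).
\]
The estimates \(\norm{\Ad{u^{(n)}_t}(a)-a}<\varepsilon_n\) on \(\calF_n\), together with summability, give point-norm Cauchy sequences for both \(\Ad{U_n}\) and \(\Ad{U_n}^{-1}\). Hence \(\beta_1\) is an automorphism. Since each \(U_n\) is \(\gamma\)-invariant, \(\beta_1\) is asymptotically inner through \(\gamma\)-invariant unitaries and commutes with every \(\gamma_g\). The closeness conditions, tightened at each step, force
\[
\varphi_0\circ\beta_1=\varphi_1
\]
on the dense set \(\{x_n\}\), and hence everywhere.

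\textbf{The path.} To define \(\beta_t\), concatenate the paths \(u^{(n)}_t\) on the intervals \([1-2^{-n+1},1-2^{-n}]\), setting
\[
\beta_t:=\Ad{U_{n-1}u^{(n)}_{s}},
\]
with \(s\) the rescaled time. This is point-norm continuous on \([0,1)\) because each segment is norm-continuous. Continuity at \(t=1\) follows from the uniform-in-\(t\) bound \(\norm{\Ad{u^{(n)}_t}(a)-a}<\varepsilon_n\) for \(a\in\calF_n\), which holds for all \(t\) and not just the endpoint. Equivariance holds for every \(t\), since every \(\Ad{w}\) with \(w\in\calB^\gamma\) commutes with \(\gamma_g\).

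\textbf{Main obstacle.} The delicate step is the indexing of the finite sets in the back-and-forth scheme. The sets to which the lemma is applied must be pulled back by the accumulated inner automorphisms, so that \(\Ad{U_n}\) and \(\Ad{U_n}^{-1}\) both converge. The \(\calG_n\) at stage \(n\) must also be anticipated by the \(\calF'\) at stage \(n-1\). I would follow the KOS proof verbatim here, checking only that no step requires a unitary outside \(\calB^\gamma\). The one place this could fail is purity and invariance of intermediate states, and \(\gamma\)-fixed conjugation preserves both.
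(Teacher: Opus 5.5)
Your overall strategy matches the paper's: the Kishimoto--Ozawa--Sakai approximate intertwining, with every correction supplied in \(\calU(\calB^\gamma)\) by the equivariant local adjustment lemma. However, the bookkeeping you actually write down does not close.

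\textbf{The main gap.} You use a single product \(U_n=u^{(1)}_1\cdots u^{(n)}_1\) moving the \(\varphi_0\)-side. At step \(n\) you apply the lemma to the current state \(\varphi_0\circ\Ad{U_{n-1}}\), and you ask that the previous step's \(\calF'_{n-1}\) contain \(\calG_n\). Read this way the scheme is circular:
\begin{itemize}
\item the lemma attaches \((\calG_n,\delta_n)\) to the state being moved, so they are available only after \(u^{(n-1)}\) has been produced;
\item but the lemma forces you to fix \(\calF'_{n-1},\varepsilon'_{n-1}\) before \(u^{(n-1)}\) is produced.
\end{itemize}
The alternation you mention is exactly what removes this. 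At step \(n\) one moves the side that was left untouched at step \(n-1\). Its \(\calG_n,\delta_n\), and its pulled-back set \(\calF_n\), can therefore be computed before step \(n-1\) is carried out.

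With alternation, though, the output is not \(\varphi_0\circ\lim_n\Ad{U_n}=\varphi_1\). You get two paths \(v^{(0)}_s,v^{(1)}_s\in\calU(\calB^\gamma)\). Their forward and inverse adjoint actions converge to automorphisms \(\beta^{(0)},\beta^{(1)}\) with \(\varphi_0\circ\beta^{(0)}=\varphi_1\circ\beta^{(1)}\). Two steps are then missing:
\begin{itemize}
\item set \(\beta_1:=\beta^{(0)}\circ(\beta^{(1)})^{-1}\);
\item build the path from both sequences, e.g.\ \(\beta_t:=\Ad{z_{t/(1-t)}}\) for \(t<1\) with \(z_s:=(v^{(1)}_s)^*v^{(0)}_s\in\calU(\calB^\gamma)\).
\end{itemize}
Point-norm convergence of this path to \(\beta_1\) uses the convergence of the inverse products on the \(\varphi_1\)-side. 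The \(\varphi_1\)-side unitaries enter \(z_s\) through adjoints on the opposite side. So this is not a one-sided product of the form \(\Ad{U_{n-1}u^{(n)}_s}\) that you wrote. Equivariance and asymptotic innerness through \(\gamma\)-invariant unitaries then follow as you say.

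\textbf{Starting the induction.} Step \(n=1\) has no previous step. Nothing ensures \(|\varphi_1-\varphi_0|<\delta_1/2\) on \(\calG_1\), so the lemma as stated cannot be started. You need an initial \(w\in\calU(\calB^\gamma)\) with \(\varphi_0\circ\Ad{w}\) close to \(\varphi_1\) on a prescribed finite set. The paper obtains this from the ordinary homogeneity theorem applied to \(\calB^\gamma\). That argument uses three facts:
\begin{itemize}
\item \(\calB^\gamma\) is simple;
\item restrictions of pure \(\gamma\)-invariant states to \(\calB^\gamma\) are pure;
\item invariant states satisfy \(\varphi=\varphi\circ E_\gamma\), so closeness on \(\calB^\gamma\) gives closeness on \(\calB\).
\end{itemize}
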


\begin{proof}
Since \(\gamma\) is pointwise outer and \(\calB\) is simple, the
fixed-point algebra \(\calB^\gamma\) is simple and
infinite-dimensional. The fixed-sector argument in
\cref{app:equivariant local adjustment} also shows that the
restriction of a pure \(\gamma\)-invariant state to
\(\calB^\gamma\) is pure. Thus the ordinary homogeneity theorem
\cite[Theorem~2.1]{KishimotoOzawaSakai2003Homogeneity}, applied to
\(\calB^\gamma\), supplies the initial approximation by a path in
\(\calU(\calB^\gamma)\). Here we use that a \(\gamma\)-invariant
state is determined by its restriction to \(\calB^\gamma\), since
\eq{
    \varphi(a)
    =
    \varphi(E_\gamma(a))\,.
}

We now apply the induction in the proof of
\cite[Theorem~2.1]{KishimotoOzawaSakai2003Homogeneity}, replacing
their Lemma~2.2 by
\cref{lem:equivariant local adjustment}. All states occurring in the
induction remain pure and \(\gamma\)-invariant. At the \(n\)-th stage,
the finite set on which the correcting path is required to be nearly
central is enlarged to contain the first \(n\) elements of a dense
sequence in the unit ball of \(\calB\), the required inverse images
under the accumulated partial products, and all their \(G\)-translates.
The finite set on which the two states are compared is enlarged
independently, as allowed by
\cref{lem:equivariant local adjustment}. The errors are chosen
summably. This is precisely the alternating construction following
\cite[Lemma~2.2]{KishimotoOzawaSakai2003Homogeneity}; the only change
is that every correcting path belongs to \(\calU(\calB^\gamma)\).

The resulting forward and inverse partial products converge
point-norm. Hence there are norm-continuous paths
\eq{
    [0,\infty)\ni s
    \longmapsto
    v_s^{(j)}
    \in
    \calU(\calB^\gamma),
    \qquad
    v_0^{(j)}
    =
    \Id,
    \qquad
    j\in\Set{0,1},
}
such that
\eq{
    \beta^{(j)}(a)
    &:=
    \lim_{s\to\infty}
    \Ad{v_s^{(j)}}(a)
    \qquad
    a\in\calB
}
defines an automorphism \(\beta^{(j)}\) of \(\calB\), with
\eq{
    \br{
        \beta^{(j)}
    }^{-1}(a)
    &=
    \lim_{s\to\infty}
    \Ad{\br{v_s^{(j)}}^*}(a)
    \qquad
    a\in\calB
}
and
\eq{
    \varphi_0\circ\beta^{(0)}
    =
    \varphi_1\circ\beta^{(1)}\,.
}
Since every \(v_s^{(j)}\) belongs to \(\calB^\gamma\), both limiting
automorphisms commute with \(\gamma\).

Set
\eq{
    \beta
    :=
    \beta^{(0)}
    \circ
    \br{
        \beta^{(1)}
    }^{-1}\,.
}
Then
\eq{
    \varphi_1
    =
    \varphi_0\circ\beta\, ,
}
and \(\beta\) commutes with \(\gamma\). Moreover, with the convention
\(\Ad{u}(a)=u^*au\), set
\eq{
    z_s
    :=
    \br{
        v_s^{(1)}
    }^*
    v_s^{(0)}
    \in
    \calU(\calB^\gamma)\,.
}
Then
\eq{
    \Ad{z_s}
    =
    \Ad{v_s^{(0)}}
    \circ
    \Ad{\br{v_s^{(1)}}^*}\, ,
}
and the convergence of the forward and inverse partial products gives
\eq{
    \beta(a)
    =
    \lim_{s\to\infty}
    \Ad{z_s}(a),
    \qquad
    a\in\calB\,.
}
Thus \(\beta\) is asymptotically inner through
\(\gamma\)-invariant unitaries.

Finally, define
\eq{
    \beta_t
    :=
    \begin{cases}
        \Ad{z_{t/(1-t)}}
        &
        0\leq t<1\\
        \beta
        &
        t=1
    \end{cases}\,.
}
This is a point-norm continuous path in
\(\operatorname{Aut}(\calB)\). It satisfies
\eq{
    \beta_0
    &=
    \mathrm{id},\\
    \beta_t\circ\gamma_g
    &=
    \gamma_g\circ\beta_t
    \qquad
    (g\in G,\ t\in[0,1])\\
    \varphi_1
    &=
    \varphi_0\circ\beta_1\,.
}
\end{proof}
\subsection{Normalization of a \(\sigma\)-local state}

Choose once and for all a right-half Majorana unitary
\(m\in\calU(\calA)\) satisfying
\eql{
    m^*
    &=
    m,&
    m^2
    &=
    \Id,&
    \theta(m)
    &=
    -m,&
    \sigma(m)
    &=
    m
    \label{eq:fixed right Majorana}\,.
}

Define
\eql{
    \tau
    :=
    \sigma\circ\Ad{m}
    \label{eq:twisted locality involution}\,.
}
By \cref{eq:fixed right Majorana},
\eq{
    \tau^2
    &=
    \Id,&
    \tau\circ\theta
    &=
    \theta\circ\tau\,.
}

\begin{lem}[Normalization in the two index sectors]
\label{lem:normalization in the two index sectors}
Let \(\omega\) be a pure, \(\theta\)-invariant, \(\sigma\)-local state,
and set
\eq{
    \varepsilon
    :=
    \calN_\sigma(\omega)\,.
}
Then there exists a unitary
\eq{
    w\in\calU(\calA^\theta)
}
such that the pure state
\eq{
    \rho
    :=
    \omega\circ\Ad{w}
}
has the following invariance property:
\begin{enumerate}
    \item if \(\varepsilon=+1\), then
    \eq{
        \rho\circ\theta
        &=
        \rho,&
        \rho\circ\sigma
        &=
        \rho\,.
    }
    \item if \(\varepsilon=-1\), then
    \eq{
        \rho\circ\theta
        &=
        \rho,&
        \rho\circ\tau
        &=
        \rho\,.
    }
\end{enumerate}
\end{lem}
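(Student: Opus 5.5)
The plan is to work in the GNS representation \((\pi,\calH,\Omega)\) of \(\omega\). There we find a unit vector \(\xi\) in the \(\theta\)-even sector that is fixed, up to a phase, by a parity-commuting unitary implementing \(\sigma\) (when \(\varepsilon=+1\)) or \(\tau\) (when \(\varepsilon=-1\)). We then realize \(\xi=\pi(w)\Omega\) with \(w\in\calU(\calA^\theta)\) by Kadison transitivity for the fixed-point algebra. Throughout we use the paper's convention \(\Ad{u}(a)=u^*au\), so that \(\omega\circ\Ad{u}\) is the vector state of \(\pi(u)\Omega\). Let \(\Theta\) be the canonical self-adjoint unitary with \(\Theta\pi(a)\Theta=\pi(\theta a)\) and \(\Theta\Omega=\Omega\), and write \(\calH=\calH_+\oplus\calH_-\) for its eigenspaces.

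\textbf{Step 1: implement \(\sigma\).} Since \(\omega\circ\sigma=\omega\circ\Ad{u}\), the representation \(\pi\circ\sigma\) is unitarily equivalent to \(\pi\). Irreducibility then gives a unitary \(S\) on \(\calH\) with \(S\pi(a)S^*=\pi(\sigma(a))\). Because \(\sigma^2=\mathrm{id}\), which holds since \(\Sigma^2=\Id\), the operator \(S^2\) commutes with \(\pi(\calA)\) and is therefore a scalar; rescale so that \(S^2=\Id\). Because \(\sigma\circ\theta=\theta\circ\sigma\), the unitary \(S\Theta S^*\Theta\) also commutes with \(\pi(\calA)\), so it is a scalar squaring to \(1\), i.e.\ \(S\Theta=\eta\Theta S\) with \(\eta=\pm1\).

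\textbf{Step 2: identify \(\eta=\varepsilon\).} The state \(\omega\circ\sigma\) is the vector state of \(S^*\Omega\), and it is also the vector state of \(\pi(u)\Omega\). By purity these two vectors agree up to a phase. As in the proof of \cref{thm:main theorem for properties of the Z2 index of a pair}, \(\Theta\pi(u)\Omega=\calN(\omega,\omega\circ\sigma)\,\pi(u)\Omega=\varepsilon\,\pi(u)\Omega\). On the other hand \(\Theta S^*\Omega=\eta S^*\Theta\Omega=\eta S^*\Omega\). Comparing the two gives \(\eta=\varepsilon\).

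\textbf{Step 3: reduce to a commuting involution in each sector.}
\begin{itemize}
\item If \(\varepsilon=+1\), set \(T:=S\). Then \(T\) commutes with \(\Theta\) and \(T^2=\Id\).
\item If \(\varepsilon=-1\), set \(T:=\pi(m)S\), which implements \(\tau=\sigma\circ\Ad{m}\) (up to the ordering convention, which is harmless). Since \(\theta(m)=-m\), the operator \(\pi(m)\) anticommutes with \(\Theta\), so \(T\) commutes with \(\Theta\). Since \(\tau^2=\mathrm{id}\), \(T^2\) is again a scalar, which we normalize to \(\Id\).
\end{itemize}
In both cases \(T\) is a self-adjoint unitary preserving \(\calH_+\ni\Omega\). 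Hence \(T|_{\calH_+}\) has an eigenvector \(\xi\in\calH_+\) of norm one, with \(T\xi=\pm\xi\). The vector state \(\rho\) of \(\xi\) is then pure, \(\theta\)-invariant (because \(\Theta\xi=\xi\)), and invariant under \(\sigma\), respectively \(\tau\), since \(T\) implements that automorphism and fixes \(\xi\) up to a sign.

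\textbf{Step 4 (main obstacle): produce \(w\in\calU(\calA^\theta)\) with \(\pi(w)\Omega\) proportional to \(\xi\).} Here we need that \(\pi|_{\calA^\theta}\) restricted to \(\calH_+\) is irreducible. This is the fixed-sector fact already invoked in the proof of \cref{lem:equivariant homogeneity}; I would argue it directly as follows.
\begin{itemize}
\item The conditional expectation \(E=\onehalf(\mathrm{id}+\Ad{\Theta})\) is normal. Therefore \(\pi(\calA^\theta)''=E(\pi(\calA)'')=E(\calB(\calH))=\calB(\calH_+)\oplus\calB(\calH_-)\).
\item Consequently \(\pi(\calA^\theta)\) acts irreducibly on \(\calH_+\).
\item Kadison transitivity for the C*-algebra \(\calA^\theta\), in its unitary form, then yields \(w\in\calU(\calA^\theta)\) with \(\pi(w)\Omega=\xi\). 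If necessary, one first multiplies \(\xi\) by a phase, which changes neither \(\rho\) nor the eigenvector property.
\end{itemize}
Then \(\rho=\omega\circ\Ad{w}\) is exactly the vector state of \(\xi\), which completes the proof.

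The delicate points to check carefully are the scalar normalizations of \(S^2\) and \(T^2\) (only their centrality is needed, so a phase suffices) and the ordering or convention in \(T\) implementing \(\tau\) rather than \(\Ad{m}\circ\sigma\). Both are harmless because \(\sigma(m)=m\).
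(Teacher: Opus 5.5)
Your proposal is correct and follows the paper's proof essentially step for step. The shared steps are: an involutive implementer \(S\) of \(\sigma\) in the GNS representation; the commutation sign \(\Theta S=\varepsilon S\Theta\), identified with the index via \(S\Omega\propto\pi(u)\Omega\); the parity-commuting involution \(\pi(m)S\) implementing \(\tau\) when \(\varepsilon=-1\); a unit eigenvector in the even sector (the paper takes the normalized vector \((\Id\pm S)\Omega\)); and Kadison transitivity for \(\calA^\theta\), which acts irreducibly on the \(+1\)-eigenspace of \(\Theta\). Your Step 1 usefully makes explicit the irreducibility argument that \(\Theta S\Theta\) is a scalar multiple of \(S\), which the paper leaves implicit when it writes ``it follows that \(\Theta S=\varepsilon S\Theta\).''
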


\begin{proof}
Let \((\pi,\calH,\Omega)\) be the GNS representation of \(\omega\).
Let \(\Theta\) be the canonical parity implementer:
\eq{
    \Theta\pi(a)\Theta
    &=
    \pi(\theta(a)),&
    \Theta\Omega
    &=
    \Omega\,.
}
Since \(\omega\) is \(\sigma\)-local, \(\pi\) and
\(\pi\circ\sigma\) are unitarily equivalent. Hence there exists a
unitary \(S\) such that
\eq{
    S\pi(a)S^*
    =
    \pi(\sigma(a))\,.
}
Since \(\sigma^2=\Id\) and \(\pi\) is irreducible, \(S^2\) is scalar.
Multiplying \(S\) by a phase, we may assume that
\eq{
    S^*
    &=
    S,&
    S^2
    &=
    \Id\,.
}

If \(u\in\calU(\calA)\) satisfies
\eq{
    \omega\circ\sigma
    =
    \omega\circ\Ad{u},
}
then \(S\Omega\) and \(\pi(u)\Omega\) implement the same pure state.
They therefore agree up to a phase. By the definition of the index,
\eq{
    \Theta\pi(u)\Omega
    =
    \varepsilon\pi(u)\Omega\,.
}
It follows that
\eql{
    \Theta S
    =
    \varepsilon S\Theta
    \label{eq:Theta S commutation sign}\,.
}

Suppose first that \(\varepsilon=+1\). Then \(\Theta\) and \(S\)
commute. Choose \(s\in\Set{\pm1}\) such that
\eq{
    \eta
    :=
    \frac{(\Id+sS)\Omega}
    {\norm{(\Id+sS)\Omega}}
}
is well-defined. Then
\eq{
    \Theta\eta
    &=
    \eta,&
    S\eta
    &=
    s\eta\,.
}
The vector state represented by \(\eta\) is therefore invariant under
both \(\theta\) and \(\sigma\).

The representation of \(\calA^\theta\) on the \(+1\)-eigenspace of
\(\Theta\) is irreducible. Indeed,
\eq{
    \pi(\calA^\theta)''
    =
    \Set{\Theta}'\, ,
}
and its compression to the \(+1\)-eigenspace is the full algebra of
bounded operators on that eigenspace. Kadison transitivity therefore
gives
\eq{
    w\in\calU(\calA^\theta)
}
such that
\eq{
    \pi(w)\Omega=\eta\,.
}
Thus \(\rho=\omega\circ\Ad{w}\) is invariant under both
\(\theta\) and \(\sigma\).

Suppose now that \(\varepsilon=-1\). Set
\eq{
    M
    &:=
    \pi(m),&
    R
    &:=
    SM\,.
}
Since \(\sigma(m)=m\),
\eq{
    SM=MS\,.
}
Hence \(R\) is a self-adjoint unitary, and
\eq{
    R\pi(a)R
    =
    \pi(\tau(a))\,.
}
By \cref{eq:Theta S commutation sign,eq:fixed right Majorana},
both \(S\) and \(M\) anticommute with \(\Theta\). Consequently,
\eq{
    R\Theta=\Theta R\,.
}
Choose \(r\in\Set{\pm1}\) such that
\eq{
    \eta
    :=
    \frac{(\Id+rR)\Omega}
    {\norm{(\Id+rR)\Omega}}
}
is well-defined. Then
\eq{
    \Theta\eta
    &=
    \eta,&
    R\eta
    &=
    r\eta\,.
}
The vector state represented by \(\eta\) is invariant under
\(\theta\) and \(\tau\). As in the first case, Kadison transitivity for
\(\calA^\theta\) gives \(w\in\calU(\calA^\theta)\) satisfying
\eq{
    \pi(w)\Omega=\eta\,.
}
Thus \(\rho=\omega\circ\Ad{w}\) is invariant under
\(\theta\) and \(\tau\)
\end{proof}

\subsection{Classification of automorphic-path-components}

\begin{thm}[automorphic-path classification]
\label{thm:automorphism path classification}
Let \(\omega_1,\omega_2\) be pure, \(\theta\)-invariant,
\(\sigma\)-local states on the one-dimensional self-dual CAR algebra.
If
\eql{
    \calN_\sigma(\omega_1)
    =
    \calN_\sigma(\omega_2),
    \label{eq:equal sigma indices}
}
then \(\omega_1\) and \(\omega_2\) are
automorphic-path-connected in the sense of
\cref{def:automorphism path connectedness}.
\end{thm}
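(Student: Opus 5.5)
The plan is to reduce to the normalized representatives of \cref{lem:normalization in the two index sectors} and then apply \cref{lem:equivariant homogeneity} with the finite group generated by \(\theta\) and \(\sigma\) (when \(\varepsilon=+1\)) or by \(\theta\) and \(\tau\) (when \(\varepsilon=-1\)). Let \(\varepsilon:=\calN_\sigma(\omega_1)=\calN_\sigma(\omega_2)\).

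\textbf{Step 1: normalization.} By \cref{lem:normalization in the two index sectors} there are \(w_j\in\calU(\calA^\theta)\) such that \(\rho_j:=\omega_j\circ\Ad{w_j}\) is invariant under \(\theta\) and under \(\kappa\), where \(\kappa:=\sigma\) if \(\varepsilon=+1\) and \(\kappa:=\tau\) if \(\varepsilon=-1\). Each \(\omega_j\) is automorphic-path-connected to \(\rho_j\). To see this, write \(w_j=\ee^{\ii h}\) with \(h\in\calA^\theta\); this is possible after a small correction, using that \(\calU(\calA^\theta)\) is connected since \(\calA^\theta\) is AF. Then set \(\alpha_t:=\Ad{\ee^{\ii th}}\). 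We have
\[
\partial_\sigma(\Ad{u})=\Ad{u}\circ\Ad{\sigma(u^*)}=\Ad{\sigma(u^*)u}
\]
up to the ordering convention, and \(\sigma(u^*)u\) is \(\theta\)-even and norm continuous in \(t\). Hence \(t\mapsto\alpha_t\) is a symmetric local automorphism path. By the equivalence-relation claim it then suffices to connect \(\rho_1\) to \(\rho_2\).

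\textbf{Step 2: equivariant homogeneity.} Let \(G\subset\Automorphisms{\calA}\) be the group generated by \(\theta\) and \(\kappa\). It is finite, isomorphic to \(\ZZ_2\times\ZZ_2\), because \(\kappa^2=\mathrm{id}\) and \(\kappa\theta=\theta\kappa\). We need the action to be pointwise outer. The automorphism \(\theta\) is outer on the CAR algebra. The automorphisms \(\sigma\) and \(\sigma\theta\) are Bogoliubov with implementers \(\pm\Sigma\), and \(\Id\mp\Sigma\) is not trace class, so they are outer. For \(\tau=\sigma\circ\Ad{m}\) and \(\tau\theta\): if either were inner, then \(\sigma\) or \(\sigma\theta\) would be inner, a contradiction. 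Since \(\calA\) is simple, separable, unital and infinite-dimensional (it is UHF), \cref{lem:equivariant homogeneity} applied to \(\rho_1,\rho_2\) gives a point-norm continuous path \(\beta_t\) with \(\beta_0=\mathrm{id}\), commuting with \(\theta\) and \(\kappa\), and \(\rho_2=\rho_1\circ\beta_1\).

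\textbf{Step 3: locality commutator.} If \(\kappa=\sigma\), then \(\partial_\sigma(\beta_t)=\mathrm{id}\), so we may take \(v_t=1\). If \(\kappa=\tau\), then \(\sigma=\tau\circ\Ad{m}\), and since \(\beta_t\) commutes with \(\tau\),
\[
\partial_\sigma(\beta_t)=\beta_t\circ\tau\circ\Ad{m}\circ\beta_t^{-1}\circ\Ad{m}\circ\tau=\tau\circ\Ad{\beta_t(m)}\circ\Ad{m}\circ\tau .
\]
This equals \(\Ad{v_t}\) with \(v_t=\tau(\beta_t(m)m)\) up to ordering convention. This element is \(\theta\)-even, since it is a product of two odd elements, and \(v_0=\tau(m^2)=1\).

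\textbf{Main obstacle.} The hard point is norm continuity of \(t\mapsto v_t\), because \(\beta_t\) is only point-norm continuous. Point-norm continuity does give norm continuity of \(t\mapsto\beta_t(m)\) for the single fixed element \(m\). The only delicate part is the endpoint \(t=1\) of the asymptotically inner path, and there convergence is also point-norm at \(m\), so \(v_t\) is norm continuous. Thus \((\beta_t,v_t)\) is a symmetric local automorphism path. Concatenating it with the paths from Step 1 by transitivity completes the proof.
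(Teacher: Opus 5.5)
Your proposal is correct and follows the paper's proof essentially step for step. You normalize via the two-sector lemma, apply equivariant homogeneity to the $\ZZ_2^2$-action generated by $\theta$ and $\sigma$ (or $\tau$), and use the locality implementer $\tau(m\beta_t(m))=\sigma(\beta_t(m))m$ in the nontrivial sector; with the paper's convention $\Ad{u}(a)=u^*au$ this is exactly its $q_t$. You then undo the normalizations with paths in the connected unitary group of the AF algebra $\calA^\theta$. The differences are cosmetic: you chain the pieces through the equivalence-relation claim instead of concatenating explicitly, and you justify outerness of $\sigma$ and $\theta\circ\sigma$ by the trace-class criterion for inner Bogoliubov automorphisms.
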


\begin{proof}
Set
\eq{
    \varepsilon
    :=
    \calN_\sigma(\omega_1)
    =
    \calN_\sigma(\omega_2)\,.
}
By \cref{lem:normalization in the two index sectors}, there exist
\eq{
    w_1,w_2\in\calU(\calA^\theta)
}
such that
\eq{
    \rho_i
    :=
    \omega_i\circ\Ad{w_i}
    \qquad
    (i=1,2)
}
have the following common invariance property:
\begin{enumerate}
    \item if \(\varepsilon=+1\), both states are invariant under the
    action generated by \(\theta\) and \(\sigma\);
    \item if \(\varepsilon=-1\), both states are invariant under the
    action generated by \(\theta\) and
    \(\tau=\sigma\circ\Ad{m}\).
\end{enumerate}

In the first case, define the action
\eq{
    \gamma^+:
    \ZZ_2^2
    \longrightarrow
    \operatorname{Aut}(\calA)
}
by
\eq{
    \gamma^+_{(1,0)}
    &:=
    \theta,&
    \gamma^+_{(0,1)}
    &:=
    \sigma\,.
}
Its three nontrivial automorphisms are
\eq{
    \theta,\qquad
    \sigma,\qquad
    \theta\circ\sigma\,.
}
These are respectively parity on the full chain, the left half-chain,
and the right half-chain, and are outer.

In the second case, define
\eq{
    \gamma^-:
    \ZZ_2^2
    \longrightarrow
    \operatorname{Aut}(\calA)
}
by
\eq{
    \gamma^-_{(1,0)}
    &:=
    \theta,&
    \gamma^-_{(0,1)}
    &:=
    \tau\,.
}
Since \(\tau\) is an inner perturbation of \(\sigma\), and
\(\theta\circ\tau\) is an inner perturbation of
\(\theta\circ\sigma\), this action is also pointwise outer

Apply \cref{lem:equivariant homogeneity} to the appropriate action.
There exists a point-norm continuous path
\eq{
    [0,1]\ni t
    \longmapsto
    \beta_t\in\operatorname{Aut}(\calA)
}
such that
\eq{
    \beta_0
    &=
    \Id,&
    \rho_2
    &=
    \rho_1\circ\beta_1
}
and every \(\beta_t\) commutes with the corresponding
\(\ZZ_2^2\)-action. In particular,
\eq{
    \beta_t\circ\theta
    =
    \theta\circ\beta_t\,.
}

If \(\varepsilon=+1\), then \(\beta_t\) commutes with \(\sigma\), so
\eq{
    \partial_\sigma(\beta_t)
    =
    \Id
    =
    \Ad{\Id}\,.
}

Suppose that \(\varepsilon=-1\). Since \(\beta_t\) commutes with
\(\tau=\sigma\circ\Ad{m}\),
\eq{
    \beta_t\circ\sigma\circ\Ad{m}\circ\beta_t^{-1}
    =
    \sigma\circ\Ad{m}\,.
}
Using
\eq{
    \beta_t\circ\Ad{m}\circ\beta_t^{-1}
    =
    \Ad{\beta_t(m)}\, ,
}
we obtain
\eq{
    \partial_\sigma(\beta_t)
    =
    \Ad{q_t}\, ,
}
where
\eql{
    q_t
    :=
    \sigma\left(\beta_t(m)\right)m
    \label{eq:beta locality cocycle}\,.
}
The map \(t\mapsto q_t\) is norm-continuous and
\eq{
    q_0
    =
    \sigma(m)m
    =
    \Id\,.
}
Moreover,
\eq{
    \theta(q_t)
    &=
    \theta\left(
        \sigma(\beta_t(m))m
    \right)\\
    &=
    \sigma\left(
        \beta_t(\theta(m))
    \right)\theta(m)\\
    &=
    \sigma\left(
        \beta_t(-m)
    \right)(-m)\\
    &=
    q_t\,.
}
Thus \(t\mapsto(\beta_t,q_t)\) is a symmetric local automorphism path
in the nontrivial index sector as well.

It remains to undo the normalizations. From
\eq{
    \rho_2
    =
    \rho_1\circ\beta_1
}
we obtain
\eq{
    \omega_2\circ\Ad{w_2}
    =
    \omega_1\circ\Ad{w_1}\circ\beta_1
}
and hence
\eql{
    \omega_2
    =
    \omega_1\circ\alpha,
    \qquad
    \alpha
    :=
    \Ad{w_1}\circ\beta_1\circ\Ad{w_2^*}
    \label{eq:final automorphism}\,.
}

The even CAR algebra \(\calA^\theta\) is an AF algebra and its unitary
group is path-connected. Choose norm-continuous paths
\eq{
    [0,1]\ni t
    &\longmapsto
    w_{i,t}\in\calU(\calA^\theta),&
    w_{i,0}
    &=
    \Id,&
    w_{i,1}
    &=
    w_i.
}
For every such path,
\eq{
    \partial_\sigma\left(\Ad{w_{i,t}}\right)
    =
    \Ad{c_{i,t}}\, ,
}
where
\eql{
    c_{i,t}
    :=
    \sigma(w_{i,t})^*w_{i,t}
    \label{eq:inner locality cocycle}\,.
}
The maps \(t\mapsto c_{i,t}\) are norm-continuous,
\(c_{i,0}=\Id\), and
\eq{
    \theta(c_{i,t})
    =
    c_{i,t}\,.
}

Concatenate the symmetric local automorphism path
\eq{
    t\longmapsto\Ad{w_{1,t}}\, ,
}
the path obtained from \(t\mapsto\beta_t\) by left composition with
\(\Ad{w_1}\), and the path obtained from
\(t\mapsto\Ad{w_{2,t}^*}\) by left composition with
\(\Ad{w_1}\circ\beta_1\). By
\cref{lem:composition of symmetric local automorphisms}, the
corresponding locality implementers are obtained by multiplying the
implementers from
\cref{eq:beta locality cocycle,eq:inner locality cocycle}. They are
norm-continuous and \(\theta\)-even.

The resulting path is a symmetric local automorphism path from
\(\Id\) to the automorphism \(\alpha\) in
\cref{eq:final automorphism}. Since
\eq{
    \omega_2
    =
    \omega_1\circ\alpha\, ,
}
the states are automorphic-path-connected.
\end{proof}

Combining \cref{eq:index invariance along automorphism paths} with
\cref{thm:automorphism path classification} gives the classification
of the automorphic-path-components.

\begin{cor}
\label{cor:index classifies automorphism path components}
Let \(\omega_1,\omega_2\) be pure, \(\theta\)-invariant,
\(\sigma\)-local states. Then
\eql{
    \omega_1
    \text{ and }
    \omega_2
    \text{ are }
    \text{automorphic-path-connected}
    \quad\Longleftrightarrow\quad
    \calN_\sigma(\omega_1)
    =
    \calN_\sigma(\omega_2)\,.
}
\end{cor}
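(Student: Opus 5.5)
The corollary is the conjunction of two implications, and I would prove them separately, both resting on results already established.

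For the forward direction, suppose \(\omega_1\) and \(\omega_2\) are automorphic-path-connected. Take a symmetric local automorphism path \(t\mapsto(\alpha_t,v_t)\) with \(\omega_2=\omega_1\circ\alpha_1\). At \(t=1\) the path data give three facts:
\eq{
    \alpha_1\circ\theta=\theta\circ\alpha_1,\qquad
    \partial_\sigma(\alpha_1)=\Ad{v_1},\qquad
    v_1\in\calU(\calA^\theta)\,.
}
These are exactly the hypotheses of \cref{claim:symmetric local automorphism invariance}. That claim then yields
\eq{
    \calN_\sigma(\omega_2)=\calN_\sigma(\omega_1\circ\alpha_1)=\calN_\sigma(\omega_1)\,.
}
Equivalently, one can specialise \cref{eq:index invariance along automorphism paths} to \(t=1\). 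Only the endpoint automorphism enters here; the continuity of the path is not used.

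For the reverse direction, suppose the indices agree. Then \cref{thm:automorphism path classification} applies verbatim and gives the required symmetric local automorphism path, so \(\omega_1\) and \(\omega_2\) are automorphic-path-connected.

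There is essentially no obstacle at the level of the corollary itself. One bookkeeping point should be recorded: the relation in \cref{def:automorphism path connectedness} is directional, \(\omega_2=\omega_1\circ\alpha_1\). This is harmless, because the relation is an equivalence relation by the earlier claim and because the index equality is symmetric. All the substantive difficulty has already been absorbed into \cref{thm:automorphism path classification}, namely the equivariant homogeneity and the normalization lemma.
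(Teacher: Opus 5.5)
Your proposal is correct and follows the paper's own argument. The forward implication is the endpoint case of the index invariance in \cref{claim:symmetric local automorphism invariance} (equivalently \cref{eq:index invariance along automorphism paths}), and the reverse implication is exactly \cref{thm:automorphism path classification}. Your directionality remark is not even needed, because that theorem produces \(\omega_2=\omega_1\circ\alpha_1\) in the same direction the definition requires.
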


\begin{example}[A Majorana flip inside an automorphic-path-component]
\label{ex:majorana flip within automorphism path component}
Let \(\omega_0\) be the \(k=0\) Kitaev state, and let \(m\) be the
right-half Majorana unitary from
\cref{eq:fixed right Majorana}, chosen with support away from the cut.
Set
\eq{
    \omega_1
    :=
    \omega_0\circ\Ad{m}\,.
}
Since \(\Ad{m}\) commutes with both \(\theta\) and \(\sigma\), the
state \(\omega_1\) is pure, \(\theta\)-invariant and
\(\sigma\)-invariant. Hence
\eq{
    \calN_\sigma(\omega_0)
    =
    \calN_\sigma(\omega_1)
    =
    +1\,.
}
On the other hand,
\eq{
    \calN(\omega_0,\omega_1)
    &=
    \omega_0\br{m^*\theta(m)}\\
    &=
    -1\,.
}
Nevertheless, by
\cref{cor:index classifies automorphism path components},
\(\omega_0\) and \(\omega_1\) are
automorphic-path-connected.
\end{example}

\appendix
\crefalias{section}{appendix}

\section{What about the other $\ZZ_2$ indices?}\label{sec:time reversal pair index}
It is natural to wonder whether the $\ZZ_2$ index which recently appeared in \cite{BachmannBolsRahnama2024ManyBodyFKM} can be formulated in terms of our abstract $\ZZ_2$ index. The answer is of course no: the two appear in different symmetry contexts. The situation is somewhat analogous to the two different (but topologically equivalent) indices appearing in \cite[Appendix A]{ChungShapiro2023}. To phrase the Fu-Kane-Mele index one needs on top of the parity symmetry $\theta$ also Fermionic time-reversal. We next explain how that would work at the present level of abstraction. In terms of physics, the index discussed in \cref{sec:Majorana index} corresponds to one-dimensional superconductors of class D whereas the one in the present section would fit one-dimensional superconductors in class DIII or two-dimensional class AII insulators.

\subsection{The $\ZZ_2$-index of a time-reversal pair}
\label{sec:abstract time reversal pair index}

Let $\calA$ be a unital C*-algebra, let $\theta:\calA\to\calA$ be an
involutive automorphism, and let $\tau:\calA\to\calA$ be a
conjugate-linear $*$-automorphism satisfying
\eq{
    \tau^2=\theta\,.
}
In particular, $\tau$ commutes with $\theta$. For a state $\omega$ on
$\calA$, define the state $\omega^\tau$ by
\eql{
    \omega^\tau(a)
    :=
    \overline{\omega(\tau(a))},
    \qquad a\in\calA
    \label{eq:time reversal action on states}\,.
}
Then $(\omega^\tau)^\tau=\omega\circ\theta$.

\begin{defn}[Time-reversal pair]
    Let $\omega_1,\omega_2$ be pure, $\theta$-invariant states on
    $\calA$. We call $(\omega_1,\omega_2)$ an admissible
    \emph{time-reversal pair} iff
    \eq{
        \omega_2=\omega_1^\tau
    }
    and there is a $\theta$-even unitary $u\in\calU(\calA)$ such that
    \eq{
        \omega_2=\omega_1\circ\Ad{u}\,.
    }
\end{defn}

The condition is symmetric in the two states, since
$(\omega_1^\tau)^\tau=\omega_1$. The evenness of $u$ is essential: the
ordinary parity index of the pair is then trivial, while the following
time-reversal index need not be.

\begin{defn}[Time-reversal pair index]
    For an admissible time-reversal pair, define
    \eql{
        \calN_{\theta,\tau}(\omega_1,\omega_2)
        :=
        \omega_1\br{\tau(u)u}
        \label{eq:time reversal pair index}\,.
    }
\end{defn}

\begin{thm}
\label{thm:properties of time reversal pair index}
The index in \cref{eq:time reversal pair index} has the following
properties.
\begin{enumerate}
    \item It is well-defined and
    $\calN_{\theta,\tau}(\omega_1,\omega_2)\in\Set{\pm1}$.

    \item It is symmetric:
    \eq{
        \calN_{\theta,\tau}(\omega_1,\omega_2)
        =
        \calN_{\theta,\tau}(\omega_2,\omega_1)\,.
    }

    \item If $\norm{\omega_1-\omega_2}<2$, then
    $\calN_{\theta,\tau}(\omega_1,\omega_2)=+1$

    \item If $\alpha$ is an automorphism commuting with both $\theta$
    and $\tau$, then
    \eq{
        \calN_{\theta,\tau}(\omega_1\circ\alpha,
        \omega_2\circ\alpha)
        =
        \calN_{\theta,\tau}(\omega_1,\omega_2)\,.
    }
    The index is multiplicative under graded stacking.
\end{enumerate}
\end{thm}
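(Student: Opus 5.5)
The plan is to work entirely in the GNS representation $(\pi,\calH,\Omega_1)$ of $\omega_1$, mirroring the proof of \cref{thm:main theorem for properties of the Z2 index of a pair}. Let $\Theta$ be the canonical parity implementer with $\Theta\Omega_1=\Omega_1$. Since $\omega_1\circ\theta=\omega_1$, we have $\omega_1^\tau\circ\tau=\overline{\omega_1\circ\theta}$, so $\omega_1^\tau$ is pure. Because $\omega_2=\omega_1^\tau$ lies in the same sector, the conjugate-linear automorphism $\tau$ is implemented by an antiunitary $T$ on $\calH$, defined by
\[
T\pi(a)\Omega_1:=\pi(\tau(a))\Omega_2,\qquad \Omega_2:=\pi(u)\Omega_1 .
\]
This is isometric precisely because $\omega_2(\tau(a)^*\tau(a))=\overline{\omega_1(a^*a)}$. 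Then $T^2$ implements $\tau^2=\theta$, so irreducibility gives $T^2=\kappa\Theta$ for some phase $\kappa$. An antiunitary commutes with its own square, $T(T^2)=(T^2)T$, and combining this with $T\Theta=\Theta T$ forces $\kappa\in\Set{\pm1}$. The goal is to identify $\calN_{\theta,\tau}(\omega_1,\omega_2)=\kappa$ (up to a universal convention), which immediately gives the first two items.

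Concretely, I would compute
\[
\omega_1(\tau(u)u)=\ip{\Omega_1}{\pi(\tau(u))\pi(u)\Omega_1},
\]
and rewrite $\pi(\tau(u))=T\pi(u)T^{-1}$. Then $T\Omega_1=\Omega_2=\pi(u)\Omega_1$, and hence $T^2\Omega_1=T\pi(u)\Omega_1=\pi(\tau(u))\Omega_2=\pi(\tau(u))\pi(u)\Omega_1$. This gives $\omega_1(\tau(u)u)=\ip{\Omega_1}{T^2\Omega_1}=\kappa\ip{\Omega_1}{\Theta\Omega_1}=\kappa$.

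For well-definedness, replacing $u$ by another even implementer $v$ changes $\Omega_2$ by a phase $\lambda$, so $T$ becomes $\lambda T$. Then $(\lambda T)^2=\lambda\bar\lambda T^2=T^2$, and the index is unchanged. The evenness of $u$ is used to check that $T$ commutes with $\Theta$: one needs $\Theta\Omega_2=\Omega_2$, which holds because $u$ is even. This is the point where evenness enters.

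Symmetry would follow by redoing the construction in the GNS representation of $\omega_2$, which is the same irreducible representation with cyclic vector $\Omega_2$. There $\omega_2^\tau=\omega_1\circ\theta=\omega_1$, the implementing unitary is $u^*$, and the same antiunitary $T$ works (up to phase), since $T\Omega_2=T^2\Omega_1=\kappa\Omega_1$. The square is again $\kappa\Theta$. Alternatively one can compute directly $\omega_2(\tau(u^*)u^*)=\omega_1(u\tau(u^*))$ and relate it to the complex conjugate of $\omega_1(\tau(u)u)$ using $\tau$-covariance, together with the reality of the value.

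For item 3, I would follow the argument of item 4 of \cref{thm:main theorem for properties of the Z2 index of a pair}. If $\kappa=-1$, then the Kramers argument gives $\ip{\Omega_1}{T\Omega_1}=0$: indeed $\ip{x}{Tx}=\ip{T^2x}{Tx}=-\ip{x}{Tx}$ when $T^2x=-x$. But $T\Omega_1=\Omega_2$, so $\ip{\Omega_1}{\Omega_2}=0$, and by Powers--St{\o}rmer $\norm{\omega_1-\omega_2}=2$.

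Item 4 is a direct substitution: $\alpha^{-1}(u)$ is an even implementer for the transformed pair, and $\tau(\alpha^{-1}u)\alpha^{-1}u=\alpha^{-1}(\tau(u)u)$ because $\alpha$ commutes with $\tau$. For stacking, I would use the graded product representation $\Pi$ from item 6 of \cref{thm:main theorem for properties of the Z2 index of a pair}. The candidate antiunitary is $T\otimes\widetilde T$, possibly corrected by a Klein twist such as $(T\otimes\widetilde T)(\Id\otimes\ldots)$ to respect the grading. Its square is $\kappa\widetilde\kappa\,\Theta\otimes\widetilde\Theta$, which yields multiplicativity.

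I expect the main obstacle to be the stacking statement. One has to check that the graded-product conjugate-linear automorphism $\tau\hat\otimes\widetilde\tau$ is actually implemented by $T\otimes\widetilde T$, since graded signs could introduce a factor of $\Theta^{|\cdot|}$. One also has to confirm that the resulting square is still exactly $\kappa\widetilde\kappa$ times the total parity. A secondary concern is whether $\tau^2=\theta$ rather than $\tau^2=\mathrm{id}$ spoils the derivation of $\kappa$ real; I would check this carefully using $\Theta^2=\Id$ and the commutation of $T$ with $\Theta$.
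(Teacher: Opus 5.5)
Your proposal is correct and follows essentially the same route as the paper. It uses the GNS antiunitary $T\pi(a)\Omega_1=\pi(\tau(a))\Omega_2$, evenness of $u$ to get $T\Theta=\Theta T$, and $T^2=\kappa\Theta$ with $\kappa=\omega_1(\tau(u)u)$ real; the phase changes $T\mapsto\lambda T$ and $T\mapsto\kappa T$ give well-definedness and symmetry respectively, Kramers orthogonality with Powers--St{\o}rmer gives item 3, and substituting $\alpha^{-1}(u)$ gives covariance.

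The stacking step you flag is not a real obstacle. Since $T\Theta=\Theta T$, the antiunitary $T\otimes\widetilde T$ implements $\tau\hat\otimes\widetilde\tau$ in $\Pi$ with no Klein twist, and it maps $\Psi_1$ to $\Psi_2=\Pi(u\hat\otimes\widetilde u)\Psi_1$. The paper argues even more directly: it evaluates the defining formula on the even implementer $u\hat\otimes\widetilde u$, where no graded signs arise.
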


\begin{proof}
Let $(\calH_1,\pi_1,\Omega_1)$ be the GNS representation of
$\omega_1$ and set $\Omega_2:=\pi_1(u)\Omega_1$. Define on the dense
subspace $\pi_1(\calA)\Omega_1$ the anti-linear map
\eql{
    T\pi_1(a)\Omega_1
    :=
    \pi_1(\tau(a))\Omega_2,
    \qquad a\in\calA
    \label{eq:GNS time reversal operator}\,.
}
Using $\omega_2=\omega_1^\tau$ and the $\theta$-invariance of
$\omega_1$, one checks that $T$ extends to an anti-unitary on
$\calH_1$ and
\eq{
    T\pi_1(a)T^*=\pi_1(\tau(a))\,.
}
Let $\Gamma$ be the self-adjoint unitary implementing $\theta$ in this
GNS representation, normalized by $\Gamma\Omega_1=\Omega_1$. Since
$u$ is even, also $\Gamma\Omega_2=\Omega_2$, and consequently
$T\Gamma=\Gamma T$.

Since $\tau^{-1}(u)=\tau(u)$ for even $u$, applying $\tau$ to the
state relation gives
\eq{
    \omega_1
    =
    \omega_1\circ\Ad{\tau(u)u}\,.
}
Purity therefore implies that $\pi_1(\tau(u)u)\Omega_1$ is a scalar
multiple of $\Omega_1$. Moreover,
\eq{
    T^2\Omega_1
    =
    \pi_1(\tau(u)u)\Omega_1\,.
}
Both $T^2$ and $\Gamma$ implement $\theta$. Since $\pi_1$ is
irreducible, there is $\kappa\in\mathbb{S}^1$ such that
\eql{
    T^2=\kappa\Gamma,
    \qquad
    \kappa
    =
    \omega_1\br{\tau(u)u}
    =
    \calN_{\theta,\tau}(\omega_1,\omega_2)
    \label{eq:time reversal square and pair index}\,.
}
Conjugating $T^2=\kappa\Gamma$ by the anti-unitary $T$ and using
$T\Gamma=\Gamma T$ gives $\kappa=\overline{\kappa}$. Hence
$\kappa\in\Set{\pm1}$.

If $v$ is another even implementing unitary, then
$\pi_1(v)\Omega_1=\lambda\pi_1(u)\Omega_1$ for some
$\lambda\in\mathbb{S}^1$. The anti-unitary obtained from $v$ is
$\lambda T$, whose square equals $T^2$. This proves well-definedness.
For the reversed pair the corresponding anti-unitary is $\kappa T$,
and its square is again $T^2$, proving symmetry.

If $\kappa=-1$, then Kramers orthogonality gives
$\ip{\Omega_1}{\Omega_2}=\ip{\Omega_1}{T\Omega_1}=0$. The pure-state
distance formula then gives $\norm{\omega_1-\omega_2}=2$; see
\cite[Lemma~2.4]{PowersStormer1970FreeStates}. This proves the third
claim. Covariance follows by replacing $u$ with $\alpha^{-1}(u)$.
Multiplicativity follows from the implementing unitary
$u\hat\otimes\widetilde{u}$ and
\cref{eq:time reversal pair index}.
\end{proof}

\subsubsection{Relation to the Katsura--Koma $\ZZ_2$ index}
\label{sec:time reversal pair projection correspondence}

Let $\calH$ be a separable Hilbert space, let $J$ be an anti-unitary on
$\calH$ satisfying $J^2=-\Id$, and let $P,Q$ be orthogonal projections
such that
\eq{
    Q=JPJ^*\,.
}
If $P-Q$ is compact, the Katsura--Koma index of the pair is
\eql{
    \findex_2(P,Q)
    :=
    \dim\ker\br{P-Q-\Id}\Mod{2}
    =
    \dim\br{\im P\cap\ker Q}\Mod{2}
    \label{eq:Katsura Koma pair index}\,.
}
See \cite{KatsuraKoma2016}.

Let $\calA:=\operatorname{CAR}(\calH)$, let $\theta$ be Fermionic
parity, and let $\tau_J$ be the conjugate-linear $*$-automorphism
determined by
\eql{
    \tau_J\br{a^*(f)}
    :=
    a^*(Jf)
    \qquad f\in\calH
    \label{eq:antiunitary induced CAR automorphism}\,.
}
Then $\tau_J^2=\theta$. If $\omega_P,\omega_Q$ are the pure
gauge-invariant quasi-free states associated with $P,Q$, then
$\omega_Q=\omega_P^{\tau_J}$.

\begin{thm}[Katsura--Koma correspondence]
\label{thm:Katsura Koma pure state correspondence}
Assume that $P-Q\in\calJ_2(\calH)$. Then
$(\omega_P,\omega_Q)$ is an admissible time-reversal pair and
\eql{
    \calN_{\theta,\tau_J}(\omega_P,\omega_Q)
    =
    \br{-1}^{\findex_2(P,Q)}
    =
    \br{-1}^{\dim\im P\cap\ker Q}
    \label{eq:Katsura Koma pure state equality}\,.
}
\end{thm}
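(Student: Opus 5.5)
The proof will follow the template of \cref{thm:Hilbert-space CAR correspondence}. There are two independent tasks: first produce an \emph{even} implementing unitary, and then identify the sign $\kappa$ from \cref{eq:time reversal square and pair index} with the Katsura--Koma parity.

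\emph{Step 1 (admissibility).} Since $P-Q\in\calJ_2(\calH)$, Shale--Stinespring gives a unitary implementer $\mathbb{U}$ of $\omega_Q$ in the Fock-type GNS representation $(\calH_P,\pi_P,\Omega_P)$. Kadison transitivity then gives $u\in\calU(\calA)$ with $\omega_Q=\omega_P\circ\Ad{u}$. For $u$ to be chosen even, the vector $\pi_P(u)\Omega_P$ must lie in the $+1$ eigenspace of the parity operator $\Gamma_P$. By the argument in the proof of \cref{thm:Hilbert-space CAR correspondence}, this holds exactly when the ordinary parity index $\calN_\theta(\omega_P,\omega_Q)$ equals $+1$, i.e.\ when $\dim\im P\cap\ker Q$ is even in the complex (non-self-dual) sense. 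In the present setting $\ker(P-Q-\Id)=\im P\cap\ker Q$ is $J$-invariant, because $J$ maps $\im P$ onto $\im Q$ and $\ker Q$ onto $\ker P$; the relevant space is $\ker(P-Q-\Id)$ together with its image $\ker(P-Q+\Id)=J\ker(P-Q-\Id)$. The dimension relevant for the $\theta$-index of gauge-invariant states is the number of particles transferred, $\dim\ker(P-Q-\Id)-\dim\ker(P-Q+\Id)=0$, so parity is preserved. Once the implementing vector is known to be $\Gamma_P$-even, the irreducibility of $\calA^\theta$ on the even subspace (as in \cref{lem:normalization in the two index sectors}) lets us choose $u\in\calU(\calA^\theta)$ by Kadison transitivity. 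This yields admissibility.

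\emph{Step 2 (reduction to a finite-dimensional problem).} I would decompose $\calH=\calK\oplus\calK^\perp$ with $\calK:=\ker(P-Q-\Id)\oplus\ker(P-Q+\Id)$, which is finite-dimensional and $J$-invariant. On $\calK^\perp$ the pair $P,Q$ has no $\pm1$ eigenvalues of $P-Q$, and the Hilbert--Schmidt rotation $\exp$ of the standard Kato pair generator connecting $P$ to $Q$ commutes with $J$; this should follow from Katsura--Koma's symmetric construction. The corresponding Bogoliubov implementer can then be deformed to the identity along a norm-continuous path of even $J$-compatible unitaries. Along such a path $\kappa$ is continuous and $\pm1$-valued, in parallel with item 3 of \cref{thm:properties of time reversal pair index}, so the $\calK^\perp$ part contributes $+1$. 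By the graded-stacking multiplicativity of \cref{thm:properties of time reversal pair index}, the index factorizes as a product over $\operatorname{CAR}(\calK)\hat\otimes\operatorname{CAR}(\calK^\perp)$.

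\emph{Step 3 (finite-dimensional computation).} On $\calK$ we have $\im P|_\calK=\ker(P-Q-\Id)=:\calV$ and $\im Q|_\calK=J\calV$, with $\calK=\calV\oplus J\calV$. Here $\omega_P$ is the Fock state filled on $\calV$, and $u$ is the even operator mapping $\bigwedge^{\rm top}\calV$ to $\bigwedge^{\rm top}J\calV$. I would compute $T^2$ directly on the filled vector: with $\calV=\operatorname{span}\{f_1,\dots,f_n\}$, applying $\tau_J$ twice uses $J^2=-\Id$ on each of the $n$ creators, giving $T^2\Omega_1=(-1)^n\Omega_1$ relative to $\Gamma$. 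Hence $\kappa=(-1)^n=(-1)^{\dim\im P\cap\ker Q}$.

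The main obstacle is Step 2: showing that the complementary part can be deformed with time-reversal compatibility preserved, or equivalently that it contributes trivially. If the continuous deformation is awkward, I would instead proceed as in \cref{thm:Hilbert-space CAR correspondence} and cite an existing parity-sector characterization of the Kramers index for quasi-free states, identifying $\kappa$ via $T^2=\kappa\Gamma$.
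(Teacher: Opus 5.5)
Your Step~3 is correct. It is exactly the paper's computation restricted to the defect space. The paper's reference projection $R=\chi_{(0,\infty)}(P+Q-\Id)$ vanishes on $\ker(P-Q\mp\Id)$, so there its reference state is your vacuum, $T_R^2=\Gamma_R$, and the filled vector has relative parity $(-1)^n$ with $n=\dim\im P\cap\ker Q$.

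Step~1 also reaches the right conclusion, but two statements in it are off. First, $\ker(P-Q-\Id)$ is not $J$-invariant: $J$ maps it onto $\ker(P-Q+\Id)$, which is what you then use. Second, the ordinary relative parity of gauge-invariant states is governed by $\dim\ker(P-Q-\Id)+\dim\ker(P-Q+\Id)=2n$, as your particle-count remark correctly uses, not by the parity of $\dim\im P\cap\ker Q$ alone.

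The genuine gap is Step~2. The Kato rotation $U=(QP+Q^\perp P^\perp)(\Id-(P-Q)^2)^{-1/2}$ does not commute with $J$. Because $JPJ^*=Q$ and $JQJ^*=P$, and $(P-Q)^2$ commutes with $P$ and $Q$, one finds $JUJ^*=U^*$. For the Bogoliubov automorphism $\alpha_U\colon a^*(f)\mapsto a^*(Uf)$ this means $\tau_J\circ\alpha_U\circ\tau_J^{-1}=\alpha_U^{-1}$. Indeed, any $J$-commuting unitary carrying $P'$ to $Q'$ would have to swap them.

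There is also no continuity property of $\kappa$ to appeal to. Item~3 of \cref{thm:properties of time reversal pair index} is a statement about distance $2$, not about continuity. Moreover, along a path of implementers $u_s$ the intermediate pairs $(\omega_{P'},\omega_{P'}\circ\Ad{u_s})$ are not time-reversal pairs, so $\kappa$ is not even defined there. Nothing in your Step~2 therefore shows that the $\calK^\perp$ factor contributes $+1$.

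The quickest repair uses item~3 directly. On $\calK^\perp$, $P'-Q'$ is Hilbert--Schmidt with no eigenvalue $\pm1$. Hence the Araki/Powers--St{\o}rmer transition probability of $\omega_{P'},\omega_{Q'}$, a positive power of $\det(\Id-(P'-Q')^2)$, is strictly positive. Therefore $\norm{\omega_{P'}-\omega_{Q'}}<2$, and $\kappa=+1$ on that factor.

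Alternatively, borrow the paper's device. The projection $R'=\chi_{(0,\infty)}(P'+Q'-\Id)$ is $J$-invariant and Hilbert--Schmidt close to $P'$, with $\Omega_{P'}$ even relative to $\Omega_{R'}$ (generic Halmos part, even quadratic rotations). Then $T_{R'}^2=\Gamma_{R'}$ forces $\kappa=+1$.

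With either repair, your splitting combined with graded-stacking multiplicativity is a valid alternative to the paper's proof. The paper avoids the splitting altogether: it works with the single global $J$-invariant reference $R$ and reads off $\kappa=(-1)^{\operatorname{index}(P,R)}$.
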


\begin{proof}
Set
\eql{
    R
    :=
    \chi_{(0,\infty)}\br{P+Q-\Id}
    \label{eq:Katsura Koma reference projection}\,.
}
Since $J(P+Q-\Id)J^*=P+Q-\Id$, we have $JRJ^*=R$. We claim that
\eql{
    P-R,Q-R\in\calJ_2(\calH),
    \qquad
    \operatorname{index}(P,R)
    =
    \dim\im P\cap\ker Q
    \label{eq:reference projection properties}\,.
}
To see this, use the Halmos decomposition of the pair $P,Q$. On its
generic part one has, for commuting positive operators $C,S$ with
$C^2+S^2=\Id$,
\eq{
    P_{\rm g}
    &=
    \begin{pmatrix}\Id&0\\0&0\end{pmatrix},
    &
    Q_{\rm g}
    &=
    \begin{pmatrix}C^2&CS\\CS&S^2\end{pmatrix},
    &
    R_{\rm g}
    &=
    \frac12
    \begin{pmatrix}\Id+C&S\\S&\Id-C\end{pmatrix}\,.
}
The assumption $P-Q\in\calJ_2$ implies $S\in\calJ_2$, and hence
$P_{\rm g}-R_{\rm g}\in\calJ_2$. On the four defect subspaces, $R$
agrees with $P$ except that it vanishes on $\im P\cap\ker Q$. This
proves \cref{eq:reference projection properties}; the claim for $Q$
then follows from $Q-P\in\calJ_2$.

Let $(\calH_R,\pi_R,\Omega_R)$ be the GNS representation of
$\omega_R$, and let $\Gamma_R$ implement $\theta$, with
$\Gamma_R\Omega_R=\Omega_R$. Since $R$ is $J$-invariant, the formula
\eq{
    T_R\pi_R(a)\Omega_R
    :=
    \pi_R(\tau_J(a))\Omega_R
}
defines an anti-unitary satisfying
\eql{
    T_R^2=\Gamma_R
    \label{eq:reference time reversal square}\,.
}

By $P-R\in\calJ_2$, the Shale--Stinespring criterion gives a vector
$\Omega_P\in\calH_R$ implementing $\omega_P$. It may be chosen with
parity
\eql{
    \Gamma_R\Omega_P
    =
    \br{-1}^{\operatorname{index}(P,R)}\Omega_P
    \label{eq:relative quasi free parity}\,.
}
For completeness, this is the usual relative-parity formula: on the
defect subspaces a representative of $\omega_P$ is obtained from
$\Omega_R$ by
$\dim(\im P\cap\ker R)+\dim(\ker P\cap\im R)$ odd generators, while
the generic Halmos part is implemented by even quadratic rotations.
The parity is therefore
$(-1)^{\operatorname{index}(P,R)}$

The vector $T_R\Omega_P$ implements $\omega_Q$. Moreover, it has the
same parity as $\Omega_P$, since $T_R$ commutes with $\Gamma_R$. The
restriction of $\pi_R(\calA^\theta)$ to either parity sector is
irreducible. Kadison transitivity therefore gives
$u\in\calU(\calA^\theta)$ such that, after a phase choice,
\eq{
    \pi_R(u)\Omega_P=T_R\Omega_P\,.
}
Thus $(\omega_P,\omega_Q)$ is an admissible time-reversal pair. By
\cref{eq:reference time reversal square,eq:relative quasi free parity},
the GNS characterization in
\cref{eq:time reversal square and pair index}, and
\cref{eq:reference projection properties},
\eq{
    \calN_{\theta,\tau_J}(\omega_P,\omega_Q)
    =
    \br{-1}^{\operatorname{index}(P,R)}
    =
    \br{-1}^{\dim\im P\cap\ker Q}\,.
}
This proves \cref{eq:Katsura Koma pure state equality}.
\end{proof}

\subsection{The $\ZZ_2$ index of 1D class DIII superconductors}
\label{sec:DIII index}

The preceding construction also gives the one-dimensional class DIII
$\ZZ_2$ index. Return to the self-dual CAR algebra and the locality
automorphism $\sigma$ of \cref{sec:Majorana index}. Let $\tau$ be
Fermionic time reversal, so that
\eq{
    \tau^2
    =
    \theta,
    \qquad
    \tau\circ\sigma
    =
    \sigma\circ\tau\,.
}
Let $\omega$ be a $\tau$-invariant, $\sigma$-local pure state. It is
automatically $\theta$-invariant.

We first note that time reversal forces
\eq{
    \calN_\sigma(\omega)
    =
    +1\,.
}
Indeed, in the GNS representation of $\omega$, let $\Gamma$ implement
$\theta$, let $T$ be the anti-unitary implementing $\tau$, normalized
so that $T\Omega=\Omega$, and let $S$ be a self-adjoint unitary
implementing $\sigma$. Thus
\eq{
    T^2
    =
    \Gamma,
    \qquad
    S^2
    =
    \Id\,.
}
Since $\tau$ and $\sigma$ commute and the representation is
irreducible, there is $\kappa\in\Set{\pm1}$ such that
\eq{
    TST^*
    =
    \kappa S\,.
}
Conjugating this relation once more by $T$ gives
\eq{
    \Gamma S\Gamma
    =
    S\,.
}
The commutation sign of $\Gamma$ and $S$ is
$\calN_\sigma(\omega)$, as in
\cref{eq:Theta S commutation sign}, proving the claim. Consequently,
the locality relation may be implemented by an even unitary
$u\in\calU(\calA^\theta)$:
\eq{
    \omega\circ\sigma
    =
    \omega\circ\Ad{u}\,.
}

Now $\tau\circ\sigma$ is a conjugate-linear $*$-automorphism satisfying
\eq{
    \br{\tau\circ\sigma}^2
    =
    \theta,
    \qquad
    \omega^{\tau\circ\sigma}
    =
    \omega\circ\sigma\,.
}
Hence $(\omega,\omega\circ\sigma)$ is an admissible time-reversal pair
for $\tau\circ\sigma$, and we define
\eql{
    \calN_{\rm DIII}(\omega)
    :=
    \calN_{\theta,\tau\circ\sigma}
    \br{\omega,\omega\circ\sigma}
    =
    \omega\br{
        \br{\tau\circ\sigma}(u)u
    }
    \in
    \Set{\pm1}
    \label{eq:class DIII index}\,.
}
By \cref{eq:time reversal square and pair index},
\eq{
    \calN_{\rm DIII}(\omega)
    =
    \kappa\,.
}
Thus the nontrivial class DIII phase is characterized by
$TST^*=-S$. Well-definedness and multiplicativity under stacking
follow from
\cref{thm:properties of time reversal pair index}. The same abstract
pair index has a distinct two-dimensional application, to which we
now turn.

\subsection{The Fu--Kane--Mele index of 2D class AII interacting insulators}
\label{sec:Fu Kane Mele pair index}

We now take
\eql{
    \calH
    :=
    \ell^2(\ZZ^2)\otimes\CC^{2N},
    \qquad
    \calA
    :=
    \operatorname{CAR}(\calH)
    \label{eq:FKM CAR algebra}\,.
}
Let $\Theta$ be the odd time-reversal anti-unitary on $\calH$, so that
$\Theta^2=-\Id$. Its action on the CAR algebra is the conjugate-linear
$*$-automorphism $\tau$ defined by
\eql{
    \tau\br{a^*(f)}
    :=
    a^*(\Theta f),
    \qquad f\in\calH
    \label{eq:physical time reversal CAR automorphism}\,.
}
Thus $\Theta$ is an operator on the one-particle Hilbert space, whereas
$\tau$ is the induced time-reversal automorphism of the many-body
algebra. In particular, $\tau^2=\theta$ on $\calA$, while $\tau^2$ is
the identity on the even observable algebra. If $\rho_t$ denotes the
$U(1)$ gauge action, then
\eq{
    \tau\circ\rho_t
    =
    \rho_{-t}\circ\tau\,.
}

\begin{rem}[Fu--Kane--Mele flux pair]
Let $\omega$ be a charge- and time-reversal-invariant SRE pure state
with a symmetric parent Hamiltonian. The flux-insertion construction
of \cite{BachmannShapiroTauber2026} may be run along the two branches
$[0,\pi]$ and $[0,-\pi]$, using half-plane gauge transformations,
quasi-adiabatic flow, and restriction of the generator to a half-line.
Let $\gamma_{\pi}$ and $\gamma_{-\pi}$ denote the two endpoint
automorphisms and set
\eql{
    \omega_+
    :=
    \omega\circ\gamma_{\pi},
    \qquad
    \omega_-
    :=
    \omega\circ\gamma_{-\pi}
    \label{eq:plus minus pi flux states}\,.
}
Time-reversal covariance exchanges the two states. At the endpoints
their fluxes differ by $2\pi$; the half-line localization and the SRE
disentangler give an even almost-local unitary $u\in\calU(\calA)$ such
that
\eql{
    \omega_- = \omega_+^\tau,
    \qquad
    \omega_- = \omega_+\circ\Ad{u},
    \qquad
    \theta(u)=u
    \label{eq:FKM admissible time reversal pair}\,.
}
These are precisely the conclusions established in the fluxon
construction of \cite{BachmannBolsRahnama2024ManyBodyFKM}. Hence its
many-body Fu--Kane--Mele index, in sign convention, is
\eql{
    \calN_{\rm FKM}(\omega)
    :=
    \calN_{\theta,\tau}(\omega_+,\omega_-)
    \label{eq:many body FKM as pair index}\,.
}
Indeed, by \cref{eq:time reversal square and pair index}, the value is
$-1$ precisely when the $\pi$-flux state belongs to a Kramers pair.
Independence of the flux-insertion choices, multiplicativity under
stacking, and invariance under symmetry-preserving locally generated
automorphisms are then the corresponding results of
\cite{BachmannBolsRahnama2024ManyBodyFKM}.
\end{rem}

\section{A child's garden of locality}
\label{sec:a childs garden of locality}
One may ask what about other notions of locality in the many-body setting? The most interesting proposal for us is Matsui's split property \cite{Matsui2013Split}. To define it we remark that whenever $P=P^2=P^\ast$ is a projection on $\calH$ commuting with $\Xi$, then there is the associated self-dual CAR algebra $\calA_P$ which may be considered a sub-algebra of $\calA$, in the sense that \eq{
\calA \cong \calA_P \hat{\otimes}\calA_{P^\perp}
} and there is the natural injection $\iota_P:\calA_P\hookrightarrow\calA$, which induces a truncated state $\left.\omega\right|_{\calA_P}\equiv\omega\circ\iota_{\calA_P}$. We say that an element $a\in\calA$ has \emph{finite rank support} iff there exists some finite rank $P$ commuting with $\Xi$ such that $a\in\calA_P$. Moreover, $a\in\calA$ has \emph{finite spatial support} iff there exists some \emph{finite} $\Lambda\subseteq\ZZ$ such that $a\in\calA_{\Lambda}$ where we abuse the notation $\calA_\Lambda\equiv\calA_{\chi_{\Lambda}(X)\otimes\Id_{\CC^{2N}}}$ and similarly $\omega_\Lambda\equiv\omega\circ\iota_{\calA_{\Lambda}}$ here and in the sequel.

In particular for $\NN\subset\ZZ$, we may consider the decomposition \eql{\label{eq:left right decomposition for the algebra}
\calA = \calA_L \hat{\otimes}\calA_R\,.
} In general, of course for arbitrary $\omega\circ\theta=\omega$,
\eq{
\omega\neq\omega_L\hat{\otimes}\omega_R\,.
} We note however that normalization and positivity are preserved by such restrictions, so at least the RHS is indeed a state. However, purity is \emph{not} preserved.
\begin{defn}[Matsui's split states]
    A state $\omega$ which is $\theta$-invariant is called \emph{split} iff $\omega$ is GNS quasi-equivalent to $\omega_L\hat{\otimes}\omega_R$. That is, iff there is a normal *-isomorphism $\Phi$ of von Neumann algebras of the double-commutants of the respective GNS representations $\pi,\pi_{\rm L\hat{\otimes} R}$of the states $\omega,\omega_L\hat{\otimes}\omega_R$ such that \eq{
    \Phi(\pi(a)) = \pi_{L\hat{\otimes} R}(a)\qquad(a\in\calA)\,.
    }
\end{defn}
\begin{claim}
    If $\omega$ is a pure parity-invariant state on $\calA$ which is split in the sense of Matsui then it is $\sigma$-local. The converse need not hold. 
\end{claim}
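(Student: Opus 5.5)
Our plan is to realise \(\sigma\) as a unitary on the GNS space of \(\omega\), and then use Kadison transitivity to reach \(\omega\circ\sigma\) by an inner automorphism.

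\textbf{Step 1: a split state is equivalent to its split product.} Suppose \(\omega\) is pure, \(\theta\)-invariant and split. Then \(\pi_\omega\) is irreducible. Quasi-equivalence to \(\omega_L\hat{\otimes}\omega_R\) therefore implies that \(\pi_{L\hat\otimes R}\) is a multiple of an irreducible representation unitarily equivalent to \(\pi_\omega\). Hence \(\omega_L\hat{\otimes}\omega_R\) is a normal state in the sector of \(\pi_\omega\), i.e.\ it is represented by a density matrix \(\rho\) on \(\calH_\omega\).

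\textbf{Step 2: the product state is \(\sigma\)-invariant.} The automorphism \(\sigma\) acts as \(-1\) on the odd generators supported on the left half and trivially on the right half. In the graded decomposition \cref{eq:left right decomposition for the algebra} this means \(\sigma=\theta_L\hat{\otimes}\mathrm{id}_R\). Since \(\omega\) is \(\theta\)-invariant, \(\omega_L\) is \(\theta_L\)-invariant, because odd elements have vanishing expectation. Consequently
\[
(\omega_L\hat{\otimes}\omega_R)\circ\sigma=\omega_L\hat{\otimes}\omega_R .
\]

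\textbf{Step 3: implement \(\sigma\) on \(\calH_\omega\).} Transport \(\sigma\) through the equivalence of Step 1. The state \(\omega_L\hat\otimes\omega_R\) has a \(\sigma\)-invariant GNS representation, so \(\sigma\) is implemented by a unitary on it. Since \(\pi_\omega\circ\sigma\) is quasi-equivalent to \(\pi_{L\hat\otimes R}\circ\sigma\), which is quasi-equivalent to \(\pi_{L\hat\otimes R}\), which in turn is quasi-equivalent to \(\pi_\omega\), the representations \(\pi_\omega\) and \(\pi_\omega\circ\sigma\) are quasi-equivalent. Both are irreducible, so they are unitarily equivalent: there is a unitary \(S\) with \(S\pi_\omega(a)S^*=\pi_\omega(\sigma(a))\).

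\textbf{Step 4: pass to an inner automorphism.} The vector \(S^*\Omega\) represents \(\omega\circ\sigma\) in \(\pi_\omega\). Kadison transitivity then gives \(u\in\calU(\calA)\) with \(\pi_\omega(u)\Omega=S^*\Omega\), so \(\omega\circ\sigma=\omega\circ\Ad{u}\). This proves \(\sigma\)-locality.

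The main obstacle is the careful justification of Step 2 together with the quasi-equivalence transfer in Step 3. In particular, one must check that \(\sigma\) really factors as \(\theta_L\hat\otimes\mathrm{id}\) in the graded tensor product, and that normal states and quasi-equivalence transform correctly under \(\sigma\). Both facts are routine, but the signs in the graded product need attention.

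\textbf{Failure of the converse.} We plan to use the quasi-free examples. By the correspondence theorem, any particle-hole symmetric \(P\) with \([P,\Sigma]\in\calJ_2\) gives a \(\sigma\)-local state. Matsui's split criterion for quasi-free states, however, requires \(\Lambda P\Lambda^\perp\) to be Hilbert--Schmidt in a stronger sense, essentially that \(\Lambda P\Lambda^\perp\in\calJ_2\) itself, and fails for generic \(P\). The natural construction is a projection of the form \(P=UP_0U^*\), as in the earlier Claim with rotations \(\theta_n\), tuned so that \(P-\Sigma P\Sigma\in\calJ_2\) while the split condition fails. If this direct construction proves delicate, a simpler fallback is available. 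One can take an entangled but \(\sigma\)-invariant, \(\theta\)-invariant pure state, for instance one built from infinitely many long-range Bell pairs that are each even under \(\sigma\), so that it is trivially \(\sigma\)-local but not split.
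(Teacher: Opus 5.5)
Your forward implication is correct and is the paper's argument. Since $\sigma=\theta_L\hat\otimes\mathrm{id}_R$ fixes $\omega_L\hat\otimes\omega_R$, one gets $\omega\sim_{\rm q.e.}\omega_L\hat\otimes\omega_R\sim_{\rm q.e.}\omega\circ\sigma$, and your Steps~3--4 just spell out why quasi-equivalence of two pure states gives local comparability.

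The gap is in the failure of the converse. Your main plan cannot succeed, because no pure quasi-free state separates the two notions.
First, $[P,\Lambda]=\Lambda^\perp P\Lambda-\Lambda P\Lambda^\perp$, so $\norm{[P,\Lambda]}_{\calJ_2}^2=2\norm{\Lambda P\Lambda^\perp}_{\calJ_2}^2$. The condition $\Lambda P\Lambda^\perp\in\calJ_2$ is therefore not stronger than $[P,\Sigma]\in\calJ_2$; it is the same condition.
Second, this is exactly the split criterion. The state $\omega_L\hat\otimes\omega_R$ is the quasi-free state with covariance $D=\Lambda P\Lambda+\Lambda^\perp P\Lambda^\perp$, and one computes $D-D^2=(\Lambda^\perp P\Lambda)^*(\Lambda^\perp P\Lambda)+(\Lambda P\Lambda^\perp)^*(\Lambda P\Lambda^\perp)$. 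Hence Araki's quasi-equivalence criterion $P-D^{1/2}\in\calJ_2$ holds iff the off-diagonal part of $P$ is Hilbert--Schmidt.
Third, $\sigma$-locality of $\omega_P$ forces $P-\Sigma P\Sigma=[P,\Sigma]\Sigma\in\calJ_2$ by Shale--Stinespring, exactly as in the paper's earlier non-example.
So for pure quasi-free states, split $\Leftrightarrow$ $[P,\Sigma]\in\calJ_2$ $\Leftrightarrow$ $\sigma$-local, and the tuned rotation construction you describe does not exist.

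Your fallback is the right idea, and it is what the paper does, but as stated it skips the point on which it hinges. A pair with one mode on each side of the cut, $2^{-1/2}(\Id+a_x^*a_y^*)$ applied to the vacuum with $x\leq0<y$, is \emph{not} even under $\sigma$, since $\sigma(a_x^*a_y^*)=-a_x^*a_y^*$. So $\sigma$ sends each pair state to an orthogonal one. With infinitely many such pairs, $\omega$ and $\omega\circ\sigma$ are disjoint, and the state is not even $\sigma$-local.

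A block vector that is both entangled across the cut and $\sigma$-invariant must have definite left parity and Schmidt rank at least $2$, which requires at least two left modes per block. The paper therefore uses four-site blocks $L_n\cup R_n$ and the vector $2^{-1/2}(\Id+V_n)$ applied to the block vacuum, with $V_n=a^*_{-2n+1}a^*_{-2n+2}a^*_{2n-1}a^*_{2n}$ containing two left creation operators.

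You also still need an actual proof of non-splitness. The paper's argument takes $P_N=p_1\cdots p_N$ and observes $\omega(P_N)=1$ while $(\omega_L\hat\otimes\omega_R)(P_N)=4^{-N}\to0$. By the standard infinite-product argument, this makes $\omega$ and $\omega_L\hat\otimes\omega_R$ disjoint.
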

See \cite[Remark 3.4]{Matsui2020SplitFermionicString}, which this claim settles.

\begin{proof}
    Under the decomposition \cref{eq:left right decomposition for the algebra}, we may write 
    \eq{
    \sigma = \theta_L \hat{\otimes}\mathrm{id}_R\,.
    } Now, since $\omega_L\circ\theta_L = \omega\circ\theta\circ\iota_L$ where $\iota_L:\calA_L\to\calA$ is the inclusion map, then using parity invariance we find that \eq{
    \omega_L \circ \theta_L = \omega_L\,.
    }

    This implies that
    \eq{
    \omega_L\hat{\otimes}\omega_R\circ \sigma = \omega_L\hat{\otimes}\omega_R\,.
    } We learn that
 \eq{
    \omega &\sim_{q.e.} \omega_L\hat{\otimes}\omega_R
      &&\text{(by hypothesis of being split)}\\
    &=\quad\,\, \omega_L\hat{\otimes}\omega_R\circ\sigma&&\text{(by earlier calculation)}\\
    &\sim_{q.e.} \omega\circ\sigma&&\text{(automorphism precomposition preserves quasiequivalence relation)}\,.
} However, $\omega,\omega\circ\sigma$ are both pure, so quasi-equivalence implies local-comparability.

For the converse direction, we give an explicit
counterexample. This settles \cite{Matsui2020SplitFermionicString} Remark 3.4 We work with
\eq{
    \calH=\ell^2(\ZZ)\otimes\CC^2\,.
}
Let $e_+,e_-$ be the standard basis of $\CC^2$, and let the particle-hole
conjugation be
\eq{
    \Xi(\delta_x\otimes e_+)=\delta_x\otimes e_-,
    \qquad
    \Xi(\delta_x\otimes e_-)=\delta_x\otimes e_+\, ,
}
extended anti-linearly. We write
\eq{
    a_x:=B(\delta_x\otimes e_+),
    \qquad
    a_x^*:=B(\delta_x\otimes e_-)\,.
}
Thus $a_x,a_x^*$ satisfy the usual CAR relations.

For $n\in\NN$, define the two-site blocks
\eq{
    L_n:=\Set{-2n+1,-2n+2},
    \qquad
    R_n:=\Set{2n-1,2n}\,.
}
The sets $L_n$ decompose the left half-chain and the sets $R_n$ decompose the
right half-chain. Let $\calA_n\subset\calA$ be the finite-dimensional local
algebra generated by
\eq{
    a_{-2n+1},\quad a_{-2n+2},\quad a_{2n-1},\quad a_{2n}\,.
}
Then $\calA_n\cong M_{16}(\CC)$

For each $x\in\ZZ$, set
\eq{
    N_x:=a_x^*a_x,
    \qquad
    Q_x:=a_xa_x^*=\Id-N_x\,.
}
Define projections in $\calA_n$ by
\eq{
    E^{(0)}_{L,n}:=Q_{-2n+1}Q_{-2n+2},
    \qquad
    E^{(2)}_{L,n}:=N_{-2n+1}N_{-2n+2}
}
and
\eq{
    E^{(0)}_{R,n}:=Q_{2n-1}Q_{2n},
    \qquad
    E^{(2)}_{R,n}:=N_{2n-1}N_{2n}\,.
}
Now put
\eq{
    e_n:=E^{(0)}_{L,n}E^{(0)}_{R,n},
    \qquad
    f_n:=E^{(2)}_{L,n}E^{(2)}_{R,n}
}
and define
\eq{
    V_n:=a_{-2n+1}^*a_{-2n+2}^*a_{2n-1}^*a_{2n}^*\,.
}
Then $V_n$ is a partial isometry with
\eq{
    V_n^*V_n=e_n,
    \qquad
    V_nV_n^*=f_n\,.
}
Consequently
\eql{
    p_n:=\onehalf\br{e_n+f_n+V_n+V_n^*}
}
is a minimal projection in $\calA_n$

Let $\tr_n$ denote the non-normalized matrix trace on $\calA_n$, normalized so
that minimal projections have trace $1$. Define a pure state $\varphi_n$ on
$\calA_n$ by
\eq{
    \varphi_n(A):=\tr_n(p_nA),
    \qquad
    A\in\calA_n\,.
}
Using the block decomposition
\eq{
    \ZZ=\bigsqcup_{n\in\NN}\br{L_n\cup R_n},
}
we define the infinite tensor product state
\eq{
    \omega:=\bigotimes_{n\in\NN}\varphi_n\,.
}
Since each $\varphi_n$ is pure, the product state $\omega$ is pure.

We next show that $\omega$ is $\sigma$-local. Recall that $\sigma$ acts as
parity on the left half-chain and trivially on the right half-chain. On the
$n$-th block, the projections $e_n$ and $f_n$ are fixed by this left parity
action. Moreover, $V_n$ contains exactly two creation operators supported on
the left half-chain, and is therefore also fixed by left parity. Hence
\eq{
    \sigma(p_n)=p_n\,.
}
Since $\tr_n$ is invariant under automorphisms, it follows that
\eq{
    \varphi_n\circ\sigma=\varphi_n
}
on each block. Therefore
\eq{
    \omega\circ\sigma=\omega\,.
}
Thus $\omega$ is $\sigma$-local, with implementing unitary $u=\Id$.

It remains to show that $\omega$ is not split. Let
\eq{
    \psi
    :=
    \left.\omega\right|_{\calA_L}
    \hat{\otimes}
    \left.\omega\right|_{\calA_R}\,.
}
On the $n$-th block, write
\eq{
    \psi_n
    :=
    \left.\varphi_n\right|_{\calA_{L,n}}
    \hat{\otimes}
    \left.\varphi_n\right|_{\calA_{R,n}}\,.
}
A direct computation gives
\eq{
    \varphi_n(p_n)=1,
    \qquad
    \psi_n(p_n)=\frac{1}{4}\,.
}
Indeed, the left and right restrictions of $\varphi_n$ each give weight
$\onehalf$ to the empty two-site configuration and weight $\onehalf$ to the
fully occupied two-site configuration, while the off-diagonal terms $V_n$ and
$V_n^*$ vanish under the product of the restrictions.

For $N\in\NN$, define
\eq{
    P_N:=p_1p_2\cdots p_N\,.
}
Since the projections $p_n$ belong to mutually disjoint even local algebras,
they commute, and $P_N$ is a projection. Moreover,
\eq{
    \omega(P_N)=1
}
for every $N$, whereas
\eq{
    \psi(P_N)
    =
    \prod_{n=1}^N\psi_n(p_n)
    =
    \br{\frac{1}{4}}^N
    \xrightarrow[N\to\infty]{}
    0\,.
}
Equivalently,
\eq{
    \prod_{n=1}^{\infty}\psi_n(p_n)
    =
    \prod_{n=1}^{\infty}\frac{1}{4}
    =
    0\,.
}
By the standard infinite tensor product criterion for product states, $\omega$
and $\psi$ are disjoint. In particular, they are not GNS quasi-equivalent.
Therefore $\omega$ is not split in the sense of Matsui.

We have constructed a pure state $\omega$ satisfying
\eq{
    \omega\circ\sigma=\omega\, ,
}
so that $\omega$ is $\sigma$-local, but $\omega$ is not split. Hence the
converse implication fails.
\end{proof}
We next recall the notion of a locally generated automorphism, and then use it
to define short-range-entangled states. We write $\calA_{\rm loc}$ for the set
of elements of finite spatial support, so that
\eq{
    \calA
    =
    \overline{\calA_{\rm loc}}\,.
}

\begin{defn}[Locally generated automorphisms]
Let
\eq{
    I_r(x)
    :=
    \Set{
        y\in\ZZ:
        |x-y|\leq r
    }.
}
A time-dependent zero-chain is a family
\eq{
    \Phi
    =
    \Set{
        \Phi_t(x):
        t\in[0,1],\
        x\in\ZZ
    }
}
such that:
\begin{enumerate}
    \item every $\Phi_t(x)$ is self-adjoint and parity-invariant,
    \eq{
        \Phi_t(x)^*
        &=
        \Phi_t(x),&
        \theta\br{\Phi_t(x)}
        &=
        \Phi_t(x)\,.
    }
    \item the map
    \eq{
        [0,1]\ni t
        \longmapsto
        \Phi_t(x)
    }
    is norm-continuous for every $x\in\ZZ$;
    \item the zero-chain is uniformly bounded:
    \eq{
        \sup_{t\in[0,1]}
        \sup_{x\in\ZZ}
        \norm{\Phi_t(x)}
        <
        \infty\,.
    }
\end{enumerate}

We say that $\Phi$ is exponentially localized if there are constants
$C,\mu>0$ such that, for every $r\in\NN$, there are self-adjoint,
parity-invariant elements
\eq{
    \Phi_t^{(r)}(x)
    \in
    \calA_{I_r(x)}
}
satisfying
\eq{
    \sup_{t\in[0,1]}
    \sup_{x\in\ZZ}
    \norm{
        \Phi_t(x)-\Phi_t^{(r)}(x)
    }
    \leq
    C\ee^{-\mu r}\,.
}
% More generally, $\Phi$ is fast-decaying if, for every $m\in\NN$, there
% is a constant $C_m<\infty$ such that
% \eq{
%     \sup_{t\in[0,1]}
%     \sup_{x\in\ZZ}
%     \norm{
%         \Phi_t(x)-\Phi_t^{(r)}(x)
%     }
%     \leq
%     C_m\br{1+r}^{-m}\,.
% }

Such a zero-chain defines a time-dependent derivation on local observables
by
\eql{\label{eq:zero-chain derivation}
    \delta_t(a)
    :=
    \ii
    \sum_{x\in\ZZ}
    [\Phi_t(x),a],
    \qquad
    a\in\calA_{\rm loc}.
}
The series in \cref{eq:zero-chain derivation} converges in norm. Indeed,
if $a$ is supported in a finite set $\Lambda$ and $x$ is far from
$\Lambda$, then a sufficiently localized, parity-even approximation to
$\Phi_t(x)$ has support disjoint from $\Lambda$ and hence commutes with
$a$. The remaining commutator is bounded by the localization error, which
is summable in $x$.

The associated locally generated automorphism, or LGA, is the unique
point-norm continuous family of automorphisms
\eq{
    [0,1]\ni t
    \longmapsto
    \alpha_t\in\Automorphisms{\calA}
}
satisfying
\eq{
    \frac{\dif}{\dif t}\alpha_t(a)
    &=
    \alpha_t\br{\delta_t(a)},\\
    \alpha_0(a)
    &=
    a,
    \qquad
    a\in\calA_{\rm loc}.
}
Since every term of the zero-chain is parity-invariant, the derivations
commute with $\theta$, and therefore
\eq{
    \alpha_t\circ\theta
    =
    \theta\circ\alpha_t
    \qquad
    \br{
        t\in[0,1]
    }.
}
\end{defn}

The parity assumption on the zero-chain is essential in the Fermionic
setting. Without it, summands localized far from an odd observable need
not commute with that observable, and the series in
\cref{eq:zero-chain derivation} need not converge.

\begin{defn}[Product states]
Let $\omega$ be a pure, $\theta$-invariant state. We say that $\omega$ is
a \emph{product state} iff, for every subset $\Lambda\subseteq\ZZ$,
\eq{
    \omega
    =
    \left.\omega\right|_{\calA_\Lambda}
    \hat{\otimes}
    \left.\omega\right|_{\calA_{\Lambda^c}}.
}
\end{defn}

\begin{defn}[SRE states]
A pure state $\omega$ is called exponentially short-range-entangled,
abbreviated $\mathrm{SRE}_{\exp}$, if there are a pure,
$\theta$-invariant product state $\omega_{\rm product}$ and an
exponentially localized LGA $t\mapsto\alpha_t$ such that
\eq{
    \omega
    =
    \omega_{\rm product}\circ\alpha_1.
}
% If the LGA is only assumed to be fast-decaying, we call $\omega$ an
% $\mathrm{SRE}_{\mathrm{fast}}$ state.
\end{defn}

Every state appearing in the preceding definition is automatically
parity-invariant. Indeed, $\omega_{\rm product}$ is parity-invariant and
the LGA commutes with $\theta$.

\begin{defn}[Area law]
Let $\omega$ be a pure state. For a finite interval
$\Lambda\Subset\ZZ$, let $\rho_\Lambda$ be the density matrix of
$\left.\omega\right|_{\calA_\Lambda}$ with respect to the
non-normalized matrix trace $\operatorname{Tr}_\Lambda$, so that
\eq{
    \left.\omega\right|_{\calA_\Lambda}(A)
    &=
    \operatorname{Tr}_\Lambda
    \br{
        \rho_\Lambda A
    },\\
    \operatorname{Tr}_\Lambda
    \br{
        \rho_\Lambda
    }
    &=
    1.
}
The entanglement entropy of $\omega$ on $\Lambda$ is
\eq{
    S_\Lambda(\omega)
    :=
    -
    \operatorname{Tr}_\Lambda
    \br{
        \rho_\Lambda\log\rho_\Lambda
    }.
}
We say that $\omega$ satisfies an area law if
\eql{\label{eq:area law definition}
    \sup_{
        \Lambda\Subset\ZZ\
        \mathrm{interval}
    }
    S_\Lambda(\omega)
    <
    \infty.
}
\end{defn}

\begin{claim}
\label{claim:SRE implies area law}
Every pure $\mathrm{SRE}_{\exp}$ state satisfies an area law.
\end{claim}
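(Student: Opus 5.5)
Fix a finite interval \(\Lambda=[a,b]\). Since \(\omega_{\rm product}\) is a product state, its restriction to \(\calA_\Lambda\) is pure. Hence the entire entropy \(S_\Lambda(\omega)\) is generated by the LGA \(\alpha_1\) acting across the two boundary points of \(\Lambda\). The plan is a Lieb--Robinson decomposition of \(\alpha_1\) into a part that factorizes across \(\partial\Lambda\) and a localized boundary unitary. This is the standard argument that locally generated dynamics create at most \(O(1)\) entanglement per boundary point, adapted to even zero-chains.

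\textbf{Step 1 (decoupling).} Let \(\Phi^{\rm dec}\) be the zero-chain obtained by removing every term \(\Phi_t(x)\) whose strictly local approximants straddle \(a-\tfrac12\) or \(b+\tfrac12\). More precisely, decompose \(\Phi_t(x)=\sum_r(\Phi^{(r)}_t(x)-\Phi^{(r-1)}_t(x))\) into even pieces of norm at most \(C\ee^{-\mu r}\) supported in \(I_r(x)\), and drop the pieces that cross a cut. The resulting LGA \(\alpha^{\rm dec}_t\) factorizes as \(\alpha^{\Lambda}_t\hat\otimes\alpha^{\Lambda^c}_t\). Parity-evenness is what makes this graded factorization legitimate, because even elements on disjoint regions commute.

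\textbf{Step 2 (interaction picture).} Write \(\alpha_1=\Ad{U}\circ\alpha_1^{\rm dec}\), or the same composition in the other order, where \(U\) solves a Dyson equation generated by the dropped boundary terms transported by \(\alpha^{\rm dec}_t\). The dropped terms near each cut have total norm at most \(\sum_{r}\sum_{x}C\ee^{-\mu r}\lesssim\sum_r rC\ee^{-\mu r}<\infty\), uniformly in \(\Lambda\). Lieb--Robinson bounds for exponentially localized even interactions then give, for each \(\ell\), an even unitary \(U_\ell\) supported in the \(\ell\)-neighbourhoods of \(\{a,b\}\) with \(\norm{U-U_\ell}\le C'\ee^{-\mu'\ell}\), where \(C'\) and \(\mu'\) are independent of \(\Lambda\).

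\textbf{Step 3 (entropy bound).} The state \(\omega_{\rm product}\circ\alpha^{\rm dec}_1\) is still a graded product across \(\Lambda,\Lambda^c\) with pure factors. Conjugating it by \(U_\ell\), which lives on at most \(2(2\ell+1)\) sites of dimension \(2^{N}\) each, changes the Schmidt rank across the cut by at most a factor \(2^{cN\ell}\). Thus the entanglement entropy of the approximant is at most \(cN\ell\log2\). Passing from \(U_\ell\) to \(U\), the states on \(\calA_\Lambda\) differ in norm by at most \(2C'\ee^{-\mu'\ell}\). A Fannes--Audenaert estimate on the finite-dimensional \(\calA_\Lambda\), whose dimension grows with \(|\Lambda|\), gives a correction of order \(\ee^{-\mu'\ell}|\Lambda|N\). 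Applying this with a single \(\ell\) would produce a \(\log|\Lambda|\) term. To avoid that, I will use a telescoping or Schmidt-tail argument. Write \(U=\lim_\ell U_\ell\) with increments \(U_{\ell+1}U_\ell^*\) close to \(\Id\). Control the Schmidt coefficients beyond rank \(2^{cN\ell}\) by \(\ee^{-\mu'\ell}\), i.e.\ a truncation error controlling the tail weight of the Schmidt spectrum. Then use the elementary bound \(S\le\sum_\ell(\text{tail weight at rank }2^{cN\ell})\cdot cN\) up to constants. This bound is summable and independent of \(|\Lambda|\).

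The main obstacle is Step 3: converting exponential operator-norm closeness of the boundary unitary into a \(|\Lambda|\)-independent entropy bound. This requires the Schmidt-tail estimate rather than naive continuity of entropy. A secondary technical point is the fermionic grading. Entanglement entropy across \(\Lambda\) for an even state must be read off from the density matrix in \(\calA_\Lambda\), which is a full matrix algebra for the self-dual CAR algebra on finitely many sites. The Schmidt-rank argument must use even unitaries, so that the graded tensor structure reduces to an ordinary one via a Jordan--Wigner identification on a finite window.
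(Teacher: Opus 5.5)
Your argument is correct, but it takes a different route from the paper. The paper never isolates a boundary unitary. It passes to finite volume and splits the generator into strictly local even pieces $\Psi_t(x,r)$ via conditional expectations. It then bounds the \emph{rate} $|\frac{d}{dt}S_\Lambda|$ term by term using the small-incremental-entangling (SIE) theorem of Mari\"en--Audenaert--Van Acoleyen--Verstraete. Each piece crossing $\partial\Lambda$ contributes at most $C\|\Psi_t(x,r)\|(r+1)$, and there are $O(r+1)$ such pieces at scale $r$. The rate is therefore bounded by $\sum_r (r+1)^2\ee^{-\mu_1 r}$, uniformly in $\Lambda$. Integrating from the zero-entropy product state and taking the finite-volume limit finishes the proof.

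You instead use the interaction picture and Lieb--Robinson bounds to write $\alpha_1$ as a dynamics that factorizes across the two cuts, composed with a boundary unitary $U$. This $U$ is $\ee^{-\mu'\ell}$-close to an even unitary $U_\ell$ supported on the $\ell$-neighbourhoods of $\{a,b\}$. You then turn operator-norm closeness into an entropy bound via Eckart--Young. Since $U_\ell\Omega$ has Schmidt rank at most $R_\ell=2^{cN\ell}$, the Schmidt weight of $U\Omega$ beyond $R_\ell$ is at most $\|U-U_\ell\|^2\lesssim\ee^{-2\mu'\ell}$, independently of $|\Lambda|$. Summing the contributions $w_\ell\log R_\ell+w_\ell\log(1/w_\ell)$ of the blocks $(R_{\ell-1},R_\ell]$ of the Schmidt spectrum gives a convergent series.

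Two small corrections to your sketch. Keep the $w_\ell\log(1/w_\ell)$ terms, which your displayed bound drops; they are also summable. The telescoping through $U_{\ell+1}U_\ell^*$ is unnecessary, because Eckart--Young is applied separately for each $\ell$.

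As for the trade-off: the paper's proof is short and needs no quasi-locality of $\alpha_t$ beyond the local decomposition of its generator, but it relies on the deep SIE theorem as a black box. Your proof uses only standard Lieb--Robinson estimates and elementary matrix analysis. It also works directly in infinite volume on the tensor-product GNS space of the product state. Finally, it gives a stronger conclusion: a $|\Lambda|$-uniform power-law bound on the Schmidt tail across each cut (approximability by bounded Schmidt rank), not just a bound on the von Neumann entropy.
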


\begin{proof}
Write
\eq{
    \omega
    =
    \omega_{\rm product}\circ\alpha_1,
}
where $\omega_{\rm product}$ is a pure product state and
$t\mapsto\alpha_t$ is generated by an exponentially localized
zero-chain $\Phi$.

We first decompose the zero-chain into strictly local terms. Let
\eq{
    \mathbb E_{I_r(x)}
    :
    \calA
    \longrightarrow
    \calA_{I_r(x)}
}
denote the trace-preserving conditional expectation. Set
\eq{
    \Psi_t(x,0)
    &:=
    \mathbb E_{I_0(x)}
    \br{
        \Phi_t(x)
    },\\
    \Psi_t(x,r)
    &:=
    \mathbb E_{I_r(x)}
    \br{
        \Phi_t(x)
    }
    -
    \mathbb E_{I_{r-1}(x)}
    \br{
        \Phi_t(x)
    },
    \qquad
    r\geq1.
}
Each $\Psi_t(x,r)$ is self-adjoint, parity-invariant, and supported in
$I_r(x)$. Exponential localization and the contractivity of the
conditional expectations imply, after changing the constants, that
\eq{
    \sup_{t\in[0,1]}
    \sup_{x\in\ZZ}
    \norm{
        \Psi_t(x,r)
    }
    \leq
    C_1\ee^{-\mu_1r}
}
for some $C_1,\mu_1>0$. Moreover,
\eql{\label{eq:local decomposition zero chain}
    \Phi_t(x)
    =
    \sum_{r=0}^{\infty}
    \Psi_t(x,r),
}
with norm convergence uniform in $t$ and $x$.

Fix a finite interval $\Lambda$. We first work in finite volume and
consider the entropy of the time-evolved pure state across the
bipartition
\eq{
    \Lambda
    \sqcup
    \Lambda^c.
}
Terms in \cref{eq:local decomposition zero chain} supported entirely in
$\Lambda$ or entirely in $\Lambda^c$ generate unitaries on only one side
of the bipartition and therefore do not change the entanglement entropy.
Only terms whose support intersects both $\Lambda$ and $\Lambda^c$
contribute.

We use the small-incremental-entangling bound
\cite[Theorem~3]{MarienAudenaertVanAcoleyenVerstraete2016AreaLaw}.
For a Hamiltonian term $h$ supported on a finite set meeting both sides
of a bipartition, this bound gives
\eq{
    \left|
        \frac{\dif}{\dif t}S_\Lambda
    \right|
    \leq
    C_{\rm SIE}
    \norm{h}
    \log d_h,
}
where $d_h$ is the smaller of the dimensions of the two interacting
subsystems. The estimate is independent of the dimensions of all
remaining degrees of freedom.

For a term $\Psi_t(x,r)$, the logarithm of the interacting dimension is
bounded by
\eq{
    \log d_h
    \leq
    C_2\br{r+1},
}
where $C_2$ depends only on the on-site dimension. For each fixed $r$,
the number of intervals $I_r(x)$ meeting both $\Lambda$ and
$\Lambda^c$ is at most
\eq{
    C_3\br{r+1},
}
because $\Lambda$ has two boundary points. Therefore
\eq{
    \left|
        \frac{\dif}{\dif t}
        S_\Lambda
        \br{
            \omega_{\rm product}\circ\alpha_t
        }
    \right|
    &\leq
    C_4
    \sum_{r=0}^{\infty}
    \br{r+1}^2
    \sup_{x\in\ZZ}
    \norm{
        \Psi_t(x,r)
    }\\
    &\leq
    C_5
    \sum_{r=0}^{\infty}
    \br{r+1}^2
    \ee^{-\mu_1r}\\
    &\leq
    C_6,
}
where $C_6<\infty$ is independent of $\Lambda$ and of the finite volume.

Since $\omega_{\rm product}$ is a pure product state,
\eq{
    S_\Lambda
    \br{
        \omega_{\rm product}
    }
    =
    0.
}
Integrating the preceding estimate over $t\in[0,1]$ gives
\eq{
    S_\Lambda
    \br{
        \omega_{\rm product}\circ\alpha_1
    }
    \leq
    C_6.
}
The finite-volume evolutions converge locally to $\alpha_t$. Hence, for
each fixed $\Lambda$, the corresponding density matrices converge in
trace norm. Entropy continuity in the finite-dimensional algebra
$\calA_\Lambda$ allows us to pass to the infinite-volume limit without
changing the bound. Consequently,
\eq{
    S_\Lambda(\omega)
    \leq
    C_6
}
for every finite interval $\Lambda$. This proves
\cref{eq:area law definition}.
\end{proof}

\begin{claim}
\label{claim:area law implies split}
Let $\omega$ be a pure, parity-invariant state satisfying an area law.
Then $\omega$ is split.
\end{claim}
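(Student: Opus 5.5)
We plan to follow Matsui's strategy from \cite{Matsui2013Split}, adapted to the graded (Fermionic) setting. The first step is to reduce the split property to a trace-class criterion on half-chains. Write $\calA=\calA_L\hat{\otimes}\calA_R$. Since $\omega$ is pure, the GNS representation $\pi$ is irreducible. Quasi-equivalence of $\omega$ with $\omega_L\hat{\otimes}\omega_R$ is then equivalent to the following statement: the von Neumann algebra $\pi(\calA_R)''$ is a type I factor, or, in the graded case, its even part is a direct sum of at most two type I factors. This is the Fermionic version of Matsui's characterization, and we will quote it from \cite{Matsui2013Split,Matsui2020SplitFermionicString}. So it suffices to show that $\pi(\calA_R)''$ is of type I.

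The second step uses the area law. Take the intervals $\Lambda_n:=\{1,\dots,n\}$, and more generally $\{1,\dots,n\}$ together with a left block $\{-m+1,\dots,0\}$. The bound $\sup_\Lambda S_\Lambda(\omega)\leq C$ holds uniformly. Fix $n$ and let $m\to\infty$. By purity of $\omega$ on the full finite block $\{-m+1,\dots,n\}$ in the limit, we are led to the standard argument of Matsui, as in the proof of Hastings' area law implies split. Concretely, the state on $\calA_{[-m+1,n]}$ approximates a pure state. Its Schmidt decomposition across the cut at $0$ has entropy bounded by $C$, uniformly in $m,n$. Uniformly bounded entanglement entropy gives the following tightness statement: for every $\varepsilon>0$ there is $k(\varepsilon)$, independent of $m,n$, such that the Schmidt weights beyond the $k$ largest sum to at most $\varepsilon$ (indeed $\sum_{j>k}\lambda_j\leq C/\log k$). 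We then pass to the limit $m,n\to\infty$ via weak-$*$ compactness. The expected outcome is that the restriction $\omega_R$ is, up to $\varepsilon$ in norm, supported on a finite-rank projection in $\pi_{\omega_R}(\calA_R)''$. In other words, $\pi_{\omega}(\calA_R)''$ contains a nonzero finite projection that is minimal up to finite multiplicity. This forces a type I summand, and by factoriality (purity gives $\pi(\calA_R)''$ a factor, modulo the parity grading) the whole algebra is type I.

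The third step handles the grading. The Fermionic subtlety is that $\calA_L$ and $\calA_R$ do not commute; only their even parts do, and odd parts anticommute. We will work with the twisted commutant $\pi(\calA_L)^{\rm t}$, obtained by conjugating by the parity implementer restricted to the right half. Twisted Haag duality is not needed. We only need that $\pi(\calA_R)''$ and $\Theta$ generate a type I algebra, and that the restriction $\omega_L\hat\otimes\omega_R$ is a normal state on it. Evenness of $\omega$ ensures that $\omega_L,\omega_R$ are even, so the graded product state is well-defined.

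The main obstacle is the second step: converting the finite-volume entropy bound into an infinite-volume statement that the half-chain von Neumann algebra is type I. This requires a uniform-in-volume control of the Schmidt spectrum that survives the weak-$*$ limit. Our first attempt will be Matsui's route: bounded entropy implies that the density matrices $\rho_{[1,n]}$ have uniformly small tails beyond rank $k(\varepsilon)$. By lower semicontinuity, the limiting normal extension of $\omega_R$ to $\pi(\calA_R)''$ then has a density supported mostly on finitely many minimal projections. This excludes type II and III components, since those admit no minimal projections.
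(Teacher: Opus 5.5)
Your Step 1 is exactly the second half of the paper's proof. For pure even CAR states, the split property is equivalent to the half-chain algebra being type I, \cite[Theorem~1.3]{Matsui2020SplitFermionicString}. For the other half the paper proves nothing by hand. It quotes Matsui's Fermionic entropy criterion \cite[Theorem~1.8]{Matsui2013Split}, which turns the uniform entropy bound directly into type I-ness of the half-chain von Neumann algebras. Your Step 2 tries to rederive that criterion, and as written it fails at two points.

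First, the restriction of $\omega$ to $\calA_{[-m+1,n]}$ does not approach a pure state. The area law bounds $S_{[-m+1,n]}(\omega)$ but does not make it small. For the $k=1$ Kitaev state, every interval restriction has entropy exactly $\log 2$, coming from the two unpaired end Majoranas. So there is no Schmidt decomposition ``across the cut at $0$'' to which your tail bound could apply. The Schmidt decomposition that does exist lives in the GNS space of $\omega$, across $[1,n]$ versus its complement; it is exact because $\pi(\calA_{[1,n]})$ is a full matrix algebra. Its coefficients are the eigenvalues of $\rho_{[1,n]}$. That is where $\sum_{j>k}\lambda_j\le C/\log k$ legitimately applies, with no limit in $m$.

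Second, and more seriously, the step from these uniform finite-volume tail bounds to a minimal projection in $\pi(\calA_R)''$ is circular. A spectral projection $e_n\in\calA_{[1,n]}$ of rank at most $k$ has small rank only inside the matrix algebra $\pi(\calA_{[1,n]})$. If $\pi(\calA_R)''$ were of type III, $e_n$ would be an infinite projection there, like every nonzero projection. Weak-$*$ limit points of the $e_n$ carry no rank information and need not be projections. Speaking of ``the density'' of the normal extension of $\omega_R$, and invoking lower semicontinuity of entropy, presupposes a trace and minimal projections in $\pi(\calA_R)''$; that is, it presupposes the type I conclusion.

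Turning uniform tightness of the spectra of $\rho_{[1,n]}$ into type I-ness is precisely the nontrivial content of Matsui's theorem. Either cite \cite[Theorem~1.8]{Matsui2013Split} at this point, in which case your argument coincides with the paper's, or supply an actual proof of that implication.
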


\begin{proof}
By the area-law hypothesis, the entropies of the restrictions of
$\omega$ to finite intervals are uniformly bounded. Matsui's
Fermionic entropy criterion
\cite[Theorem~1.8]{Matsui2013Split} then implies that the half-chain von
Neumann algebras in the GNS representation of $\omega$ are of type I.

For a pure, parity-invariant CAR state, the type-I half-chain condition
is equivalent to quasi-equivalence with the graded product of the two
half-chain restrictions
\cite[Theorem~1.3]{Matsui2020SplitFermionicString}. Hence
\eq{
    \omega
    \sim_{\mathrm{q.e.}}
    \left.\omega\right|_{\calA_L}
    \hat{\otimes}
    \left.\omega\right|_{\calA_R}.
}
Thus $\omega$ is split in the sense defined above.
\end{proof}

Combining
\cref{claim:SRE implies area law,claim:area law implies split}
with the split-to-$\sigma$-local implication proved above gives the
following hierarchy. Within the class of pure, parity-invariant states,
\emph{one has}
\eql{\label{eq:hierarchy of locality modes}
\boxed{
\mathrm{SRE}_{\exp}
\subsetneq
\text{area law}
\subsetneq
\text{split}
\subsetneq
\sigma\text{-local}.
}
}
The strictness of the final inclusion is given by the preceding
counterexample. 

We next show that the first two inclusions in
\cref{eq:hierarchy of locality modes} are also strict. Both examples
already occur for one complex Fermionic mode per site. We write
\eq{
    a_x
    :=
    B(\delta_x\otimes e_+),
    \qquad
    a_x^*
    :=
    B(\delta_x\otimes e_-).
}

\begin{example}[An area-law state which is not
\(\mathrm{SRE}_{\exp}\)]
\label{ex:area law not SRE}
For $n\in\NN$, set
\eq{
    x_n
    &:=
    4^n,\\
    y_n
    &:=
    4^n+2^n,\\
    \varepsilon_n
    &:=
    n^{-4}.
}
The pairs
\eq{
    K_n
    :=
    \Set{x_n,y_n}
}
are mutually disjoint, and
\eq{
    |x_n-y_n|
    =
    2^n.
}

On the two-site Fock space over $K_n$, consider the unit vector
\eq{
    \xi_n
    :=
    \sqrt{1-\varepsilon_n}
    \ket{00}
    +
    \sqrt{\varepsilon_n}
    \ket{11},
}
where
\eq{
    \ket{11}
    :=
    a_{x_n}^*a_{y_n}^*\ket{00}.
}
Let $\varphi_n$ be its vector state. Since $\xi_n$ has even Fermionic
parity, $\varphi_n$ is parity-invariant. Put every site which does not
belong to one of the pairs $K_n$ in the vacuum state and define the
graded product state
\eql{\label{eq:sparse pair state}
    \omega
    :=
    \hat{\bigotimes}_{n\in\NN}
    \varphi_n
    \hat{\otimes}
    \omega_{\rm vacuum}^{\rm unpaired}.
}
The state $\omega$ is pure and parity-invariant.

Let $\Lambda\Subset\ZZ$ be a finite interval. A pair $K_n$ contributes
to $S_\Lambda(\omega)$ only when exactly one of its two sites belongs to
$\Lambda$. In that case, the reduced density matrix has eigenvalues
$1-\varepsilon_n$ and $\varepsilon_n$. Consequently,
\eq{
    S_\Lambda(\omega)
    =
    \sum_{
        n:
        |K_n\cap\Lambda|=1
    }
    h(\varepsilon_n),
}
where
\eq{
    h(p)
    :=
    -p\log p
    -
    (1-p)\log(1-p)
}
is the binary entropy. Therefore
\eq{
    S_\Lambda(\omega)
    &\leq
    \sum_{n=1}^{\infty}
    h(n^{-4})\\
    &<
    \infty.
}
The upper bound is independent of $\Lambda$, so $\omega$ satisfies an
area law.

On the other hand, parity invariance gives
\eq{
    \omega(a_{x_n})
    &=
    0,\\
    \omega(a_{y_n})
    &=
    0,
}
whereas a direct two-site calculation gives
\eq{
    \left|
        \omega(a_{x_n}a_{y_n})
    \right|
    =
    \sqrt{
        \varepsilon_n
        \br{
            1-\varepsilon_n
        }
    }
    \asymp
    n^{-2}.
}
Thus
\eq{
    \left|
        \omega(a_{x_n}a_{y_n})
        -
        \omega(a_{x_n})
        \omega(a_{y_n})
    \right|
    \asymp
    n^{-2},
}
even though
\eq{
    \dist
    \br{
        \Set{x_n},
        \Set{y_n}
    }
    =
    2^n.
}
There cannot exist constants $C,\mu>0$ such that
\eq{
    n^{-2}
    \leq
    C\ee^{-\mu2^n}
}
for every $n$. Hence $\omega$ is not exponentially clustering, even for
single-site observables. By
\cref{claim:SRE implies bounded support clustering}, every
$\mathrm{SRE}_{\exp}$ state is exponentially clustering. Therefore
\eq{
    \omega
    \notin
    \mathrm{SRE}_{\exp}.
}
This proves that
\eq{
    \mathrm{SRE}_{\exp}
    \subsetneq
    \text{area law}.
}
\end{example}

\begin{example}[A split state which does not satisfy an area law]
\label{ex:split not area law}
Partition the right half-chain into consecutive intervals
\eq{
    \ZZ_{\rm R}
    =
    \bigsqcup_{n\in\NN}
    I_n
}
with
\eq{
    |I_n|
    =
    2n.
}
More explicitly, set
\eq{
    s_n
    &:=
    1+n(n-1),\\
    I_n
    &:=
    \Set{
        s_n,
        s_n+1,
        \ldots,
        s_n+2n-1
    }.
}
Write
\eq{
    J_n
    &:=
    \Set{
        s_n,
        \ldots,
        s_n+n-1
    },\\
    J_n'
    &:=
    \Set{
        s_n+n,
        \ldots,
        s_n+2n-1
    }.
}
For $j\in\Set{1,\ldots,n}$, pair the sites
\eq{
    x_{n,j}
    &:=
    s_n+j-1,\\
    y_{n,j}
    &:=
    s_n+n+j-1.
}
On each pair $\Set{x_{n,j},y_{n,j}}$, take the even Bell vector
\eq{
    \xi_{n,j}
    :=
    \frac{1}{\sqrt{2}}
    \br{
        \ket{00}
        +
        \ket{11}
    },
}
where
\eq{
    \ket{11}
    :=
    a_{x_{n,j}}^*
    a_{y_{n,j}}^*
    \ket{00}.
}
Let $\varphi_n$ be the graded product of the corresponding $n$ pure
two-site states on $\calA_{I_n}$. Then $\varphi_n$ is pure and
parity-invariant.

Put the left half-chain in the vacuum state and define
\eql{\label{eq:split volume entangled blocks}
    \omega
    :=
    \omega_{{\rm vacuum},L}
    \hat{\otimes}
    \left(
        \hat{\bigotimes}_{n\in\NN}
        \varphi_n
    \right).
}
The state $\omega$ is pure and parity-invariant. Moreover, it is exactly
a graded product across the cut:
\eq{
    \omega
    =
    \left.\omega\right|_{\calA_L}
    \hat{\otimes}
    \left.\omega\right|_{\calA_R}.
}
In particular, $\omega$ is split.

However, the interval $J_n$ contains exactly one site from each of the
$n$ Bell pairs in $I_n$. Its reduced density matrix is therefore the
maximally mixed density matrix on $n$ complex Fermionic modes. Hence
\eq{
    S_{J_n}(\omega)
    =
    n\log2.
}
Since
\eq{
    S_{J_n}(\omega)
    \xrightarrow[n\to\infty]{}
    \infty,
}
the state $\omega$ does not satisfy an area law. Thus
\eq{
    \text{area law}
    \subsetneq
    \text{split}.
}
\end{example}

\subsubsection*{Exponential clustering}

We end this subsection with a discussion of exponential clustering. It
is not needed for the hierarchy in
\cref{eq:hierarchy of locality modes}, but it is useful to distinguish
two different notions which are sometimes both called exponential
decay of correlations.

\begin{defn}[Bounded-support exponential clustering]
A state $\omega$ is called exponentially clustering if, for every
$R<\infty$, there are constants $C_R,\mu_R>0$ such that, for all local
observables $a,b\in\calA_{\rm loc}$ satisfying
\eq{
    \diam\br{\supp(a)}
    &\leq
    R,\\
    \diam\br{\supp(b)}
    &\leq
    R,
}
one has
\eql{\label{eq:bounded-support exponential clustering}
    \left|
        \omega(ab)-\omega(a)\omega(b)
    \right|
    \leq
    C_R
    \norm{a}
    \norm{b}
    \ee^{
        -\mu_R
        \dist\br{
            \supp(a),
            \supp(b)
        }
    }.
}
\end{defn}

The dependence of both constants on $R$ is important. This condition
controls correlations detectable by observables of any fixed size, but
it provides no estimate uniform in the sizes of the two regions.

\begin{claim}
\label{claim:SRE implies bounded support clustering}
Every pure $\mathrm{SRE}_{\exp}$ state is exponentially clustering in
the sense of
\cref{eq:bounded-support exponential clustering}.
\end{claim}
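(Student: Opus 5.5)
The plan is to transfer clustering from the product state through the LGA using a Lieb--Robinson bound, and then to use that a pure product state has no correlations at all between disjoint regions. Write $\omega=\omega_{\rm product}\circ\alpha_1$. For local $a,b$ with supports of diameter at most $R$ and separation $d:=\dist(\supp(a),\supp(b))$, we have
\eq{
\omega(ab)-\omega(a)\omega(b)=\omega_{\rm product}\br{\alpha_1(a)\alpha_1(b)}-\omega_{\rm product}(\alpha_1(a))\,\omega_{\rm product}(\alpha_1(b))\,.
}
The first step is a Lieb--Robinson bound for the exponentially localized, parity-even zero-chain $\Phi$. Using the strictly local decomposition $\Phi_t(x)=\sum_r\Psi_t(x,r)$ with $\norm{\Psi_t(x,r)}\leq C_1\ee^{-\mu_1 r}$ from the proof of \cref{claim:SRE implies area law}, the standard argument will give the following. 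For $a$ supported in $X$ and the conditional expectation $\mathbb E_{X_\ell}$ onto the $\ell$-fattening $X_\ell$ of $X$,
\eq{
\norm{\alpha_1(a)-\mathbb E_{X_\ell}(\alpha_1(a))}\leq C\norm{a}\,|X|\,\ee^{-\mu'\ell}\,.
}
Here $|X|\leq R+1$, so the prefactor is absorbed into $C_R$. Parity evenness of $\Phi$ matters in this step: $\alpha_1$ and the trace-preserving expectations commute with $\theta$, so the truncations $a_\ell:=\mathbb E_{X_\ell}(\alpha_1(a))$ keep the parity grading of $a$.

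Next, choose $\ell=\lfloor d/3\rfloor$, so that $a_\ell$ and $b_\ell$ have disjoint supports $X_\ell,Y_\ell$. Replacing $\alpha_1(a),\alpha_1(b)$ by $a_\ell,b_\ell$ costs $O(C\norm{a}\norm{b}(R+1)\ee^{-\mu' d/3})$ in each of the three terms. It remains to show that $\omega_{\rm product}(a_\ell b_\ell)=\omega_{\rm product}(a_\ell)\,\omega_{\rm product}(b_\ell)$ for observables with disjoint supports. Decompose each into even and odd parts. By the product property with $\Lambda=X_\ell$, the state equals $\left.\omega_{\rm product}\right|_{\calA_{X_\ell}}\hat\otimes\left.\omega_{\rm product}\right|_{\calA_{X_\ell^c}}$. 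Consequently even--even, even--odd and odd--even products factorize, and all odd expectations vanish by parity invariance. For an odd--odd pair, the graded product also gives $\omega_{\rm product}(a^{\rm odd}b^{\rm odd})=\pm\,\omega(a^{\rm odd})\,\omega(b^{\rm odd})=0$, which equals the product of the (zero) expectations. Hence the truncated covariance vanishes exactly. This yields \cref{eq:bounded-support exponential clustering} with $\mu_R=\mu'/3$ and $C_R\propto(R+1)$.

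I expect the main obstacle to be the Lieb--Robinson step. The bound must be established for a time-dependent, only exponentially (not strictly) localized fermionic interaction, and its constants must be uniform in the location of $X$. I would first try to cite the standard fermionic Lieb--Robinson bounds, namely the Nachtergaele--Sims--Young framework adapted to even interactions. If that does not apply directly, I would prove the bound by hand: pass to finite-volume evolutions, run the Duhamel/Grönwall iteration on $\norm{[\alpha_t(a),c]}$ for even $c$ and on graded commutators for odd $c$, using the summable decay $(r+1)\ee^{-\mu_1 r}$. I would then convert the resulting commutator bound into the truncation estimate through the conditional-expectation lemma for CAR algebras.
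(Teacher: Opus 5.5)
Your proposal is correct and follows essentially the same route as the paper. Both arguments use quasi-locality of $\alpha_1$ coming from the exponentially localized even generator, truncate to $\ell$-neighborhoods with $\ell$ a fixed fraction of $d$ so the supports stay disjoint, and then use exact factorization of the parity-invariant product state on disjoint regions; you simply supply the Lieb--Robinson/conditional-expectation details and the even/odd case analysis that the paper leaves implicit.
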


\begin{proof}
Write
\eq{
    \omega
    =
    \omega_{\rm product}\circ\alpha_1
}
as in the proof of
\cref{claim:SRE implies area law}. Exponential localization of the
generator implies exponential quasi-locality of $\alpha_1$. Namely, for
every $R<\infty$, there are constants $C_R,\mu_R>0$ such that, whenever
$a$ is supported in a set of diameter at most $R$, there is an
observable $a^{(\ell)}$ supported in the $\ell$-neighborhood of
$\supp(a)$ satisfying
\eq{
    \norm{
        \alpha_1(a)-a^{(\ell)}
    }
    \leq
    C_R
    \norm{a}
    \ee^{-\mu_R\ell}.
}

Let $a,b$ satisfy the hypotheses of
\cref{eq:bounded-support exponential clustering}, and set
\eq{
    d
    :=
    \dist\br{
        \supp(a),
        \supp(b)
    }.
}
Choose $\ell$ to be a fixed fraction of $d$, so that the supports of
$a^{(\ell)}$ and $b^{(\ell)}$ remain disjoint. Since
$\omega_{\rm product}$ is a parity-invariant product state, it
factorizes on these disjoint local algebras:
\eq{
    \omega_{\rm product}
    \br{
        a^{(\ell)}b^{(\ell)}
    }
    =
    \omega_{\rm product}
    \br{
        a^{(\ell)}
    }
    \omega_{\rm product}
    \br{
        b^{(\ell)}
    }.
}
Using
\eq{
    \omega(ab)
    =
    \omega_{\rm product}
    \br{
        \alpha_1(a)\alpha_1(b)
    }
}
and the two quasi-local approximation estimates gives
\eq{
    \left|
        \omega(ab)-\omega(a)\omega(b)
    \right|
    \leq
    C_R'
    \norm{a}
    \norm{b}
    \ee^{-\mu_R'd}
}
for suitable constants $C_R',\mu_R'>0$. This is
\cref{eq:bounded-support exponential clustering}.
\end{proof}

The converse relation between exponential clustering and the area law
fails for the bounded-support definition above.

\begin{claim}
\label{claim:bounded support clustering does not imply area law}
There exists a pure, parity-invariant state which is exponentially
clustering in the sense of
\cref{eq:bounded-support exponential clustering}, but which does not
satisfy an area law.
\end{claim}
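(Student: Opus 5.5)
We construct a graded infinite product of even pure states on finite blocks. Each block will be entangled internally, so that its entanglement entropy grows with the block size, while correlations between two observables of bounded diameter decay. The blocks of Example~\ref{ex:split not area law} do not work, because the Bell pairs $\Set{x_{n,j},y_{n,j}}$ sit at distance $n$ and carry correlations of order one. We keep the block structure but replace the Bell-pair state on $I_n$ by a state whose few-site marginals are nearly uncorrelated.

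The natural candidate is a \emph{Haar-random}, or more concretely a designed, even pure state on $I_n$. A cleaner deterministic choice is a state that is maximally mixed on every small subset. Concretely, take blocks $I_n$ of length $2n$ placed consecutively on $\ZZ$, with the left half in the vacuum. On $I_n$ take an even pure state $\varphi_n$ all of whose reduced states on subsets of at most $k_n$ sites are maximally mixed, where $k_n\to\infty$ slowly, for instance $k_n\sim n/4$. Such states exist as absolutely-maximally-entangled or stabilizer code states built from a quantum error-correcting code of distance $>k_n$. Existence of good codes with linear distance, or random-state concentration, supplies them, and evenness is arranged by working within the even sector or by Jordan--Wigner from spins.

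Define $\omega:=\omega_{\mathrm{vac},L}\hat\otimes\hat\bigotimes_n\varphi_n$. Purity and parity invariance follow exactly as in Example~\ref{ex:split not area law}. For the area-law failure, take $\Lambda=J_n$, the first half of $I_n$. Since $\varphi_n$ is pure, $S_{J_n}=S_{J_n'}$, and by choosing $\varphi_n$ so that half-block marginals have entropy proportional to $n$, we get $S_{J_n}(\omega)\to\infty$. Random states give entropy $\approx n\log 2$ up to $O(1)$ by Page's theorem.

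For clustering, fix $R$ and take $a,b$ supported on sets of diameter at most $R$ at distance $d$. If their supports meet different blocks, then $\omega(ab)=\omega(a)\omega(b)$ by the product structure together with evenness of each factor, which handles odd $a,b$. If both supports lie in the same block $I_n$, the joint support has at most $2(R+1)$ sites. Once $k_n\geq 2(R+1)$, the marginal of $\varphi_n$ on this union is maximally mixed, hence a product, and the covariance again vanishes. Only the finitely many blocks with $k_n<2(R+1)$ remain, and these have bounded length $L_R$. Inside a single such block the distance is $d\leq L_R$, so there the bound holds trivially with $C_R:=2\ee^{\mu_R L_R}$. Hence the covariance is zero whenever $d>L_R$, and \cref{eq:bounded-support exponential clustering} holds.

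The main obstacle is producing the block states rigorously. Each must be a pure state on $2n$ Fermionic modes that is simultaneously even, $k_n$-uniform (every $\leq k_n$-site marginal maximally mixed), and has a large half-block entropy. The last property is automatic from $k_n$-uniformity whenever $|J_n|\le k_n$. The plan is to take $|I_n|=m_n$ large and $J_n$ of size $k_n=\lfloor m_n/4\rfloor$, which gives entropy $k_n\log 2$ directly. To get existence, start from qubit stabilizer states with linear distance, which exist via quantum Gilbert--Varshamov codes. Map them to Fermions by Jordan--Wigner on the block, which preserves locality of marginals on subsets only up to strings, so instead use the parity-preserving local isomorphism on pairs of Majoranas. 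Then project onto the even sector, or double each mode to fix parity. Checking that these encodings preserve the maximally-mixed marginals is the delicate point.
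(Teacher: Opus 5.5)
You have the paper's architecture: a graded product of even pure block states, area-law failure read off from an interval inside a block, exact factorization between different blocks, and control of within-block correlations for all but finitely many blocks. The gap is the step you flag yourself: the block states $\varphi_n$ are never produced, and they are the only non-bookkeeping input.

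The routes you sketch do not close this gap as written. Jordan--Wigner does not transport qubit marginals on non-contiguous sets, and $\supp(a)\cup\supp(b)$ is non-contiguous. No locality-compatible identification of a qubit with a single Fermionic mode exists, because odd elements on different sites anticommute. Projecting onto the even sector neither removes the strings nor preserves uniformity in general.

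Your fallback, ``random-state concentration supplies them'', does not fit the clustering argument you wrote. A Haar-random even vector is only approximately uniform, so the within-block covariance does not vanish. What concentration actually gives, and what the paper uses, is a bound $\left|\varphi_n(ab)-\varphi_n(a)\varphi_n(b)\right|\leq\ee^{-c_1L_n}$ for observables of diameter at most $R_n$, obtained from a union bound over all marginals on at most $L_n/8$ sites. You then need one further observation: supports meeting a common block satisfy $d\leq L_n+2R$, hence $\ee^{-c_1L_n}\leq\ee^{2c_1R}\ee^{-c_1d}$. This conversion of block-size decay into distance decay is missing from your proposal.

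Your deterministic route can be completed. It then gives exact vanishing of correlations beyond a finite ($R$-dependent) distance, which is slightly stronger than the paper's bound and needs no probabilistic estimates.

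First, you only need $k_n\to\infty$, not $k_n\sim m_n/4$: entropy $k_n\log 2$ on an interval of $k_n$ sites already violates the area law.

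Second, a natively Fermionic source is a Majorana stabilizer state.
\begin{itemize}
\item Take one mode per site, Majoranas $c_1,\dots,c_{2m}$, and a self-dual binary code $C\subseteq\mathbb{F}_2^{2m}$ of minimum weight $>2k$.
\item Codewords have even weight and are mutually orthogonal, so the monomials $c^x$ ($x\in C$) are even and pairwise commuting.
\item After fixing phases they form a group of $2^m$ self-adjoint unitaries not containing $-\Id$. By a trace count, their common $+1$ eigenspace is one-dimensional.
\item The marginal of this pure state on a set $A$ of sites is the normalized trace whenever no nonzero codeword is supported on the Majoranas of $A$. This holds for $|A|\leq k$.
\item Since every weight is even, the all-ones vector lies in $C^\perp=C$. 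So total parity is a stabilizer and the state is even.
\item Self-dual codes with minimum distance tending to infinity exist.
\end{itemize}

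The doubling you mention also works: encode a $k$-uniform qubit state in the even sectors of two modes per site (when $N\geq2$). The small marginals are then graded products of the even one-site states $\tfrac12\Pi^{\rm even}$ rather than maximally mixed. That is all your argument uses, so ``checking maximal mixing'' is the wrong target.

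Finally, fix the case analysis. Supports of diameter $\leq R$ can straddle adjacent blocks. The correct criterion is that the marginal on $\supp(a)\cup\supp(b)$ is a graded product over blocks. The covariance therefore vanishes unless both supports meet a block on which this marginal fails to factor. That happens only in the finitely many exceptional blocks, where $d$ is bounded.
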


\begin{proof}
We use a block-product construction. Partition $\ZZ$ into consecutive
finite intervals
\eq{
    \ZZ
    =
    \bigsqcup_{n\in\ZZ}
    I_n
}
whose lengths
\eq{
    L_n
    :=
    |I_n|
}
satisfy
\eq{
    L_n
    \xrightarrow[|n|\to\infty]{}
    \infty.
}
Choose integers $R_n\to\infty$ sufficiently slowly, with
\eq{
    2R_n+2
    \leq
    \frac{L_n}{8}.
}

For each $n$, consider the even-parity subspace of the finite-volume
Fermionic Fock space over $I_n$. Standard concentration of measure for
a Haar-random unit vector in this subspace implies the existence of a
pure, parity-invariant block state $\varphi_n$ with the following two
properties. First, if $J_n\subset I_n$ is a central interval of length
$\floor{L_n/8}$, then
\eql{\label{eq:random block volume entropy}
    S_{J_n}(\varphi_n)
    \geq
    c_0L_n
}
for some constant $c_0>0$ independent of $n$. Second, for any
contractions $a,b\in\calA_{I_n}$ whose supports have diameter at most
$R_n$,
\eql{\label{eq:random block small correlations}
    \left|
        \varphi_n(ab)
        -
        \varphi_n(a)\varphi_n(b)
    \right|
    \leq
    \ee^{-c_1L_n}
}
for some $c_1>0$ independent of $n$.

For completeness, these two properties follow by requiring that every
reduced density matrix on at most $L_n/8$ sites be exponentially close
in trace norm to the normalized trace. A union bound and concentration
of measure show that this holds with probability tending to one. It
implies
\cref{eq:random block small correlations} by taking the union of the
two supports, and
\cref{eq:random block volume entropy} by entropy continuity. This is
the same random-state estimate used in
\cite[Appendix~D]{BrandaoHorodecki2015AreaLaw}; restricting the Haar
measure to the even-parity subspace changes only the inessential
constants.

Define the graded block-product state
\eql{\label{eq:random block product state}
    \omega
    :=
    \hat{\bigotimes}_{n\in\ZZ}
    \varphi_n.
}
Since all the block states are pure and parity-invariant, $\omega$ is
also pure and parity-invariant.

We verify bounded-support exponential clustering. Fix $R<\infty$. For
all but finitely many $n$, we have
\eq{
    R
    &\leq
    R_n,&
    L_n
    &\geq
    4R.
}
If two observables of diameter at most $R$ involve disjoint collections
of blocks, their correlation vanishes exactly by the block-product
property. If they involve a common sufficiently large block $I_n$,
expanding the observables in matrix bases for the finitely many block
factors they meet and applying
\cref{eq:random block small correlations} gives
\eq{
    \left|
        \omega(ab)-\omega(a)\omega(b)
    \right|
    \leq
    K_R
    \norm{a}
    \norm{b}
    \ee^{-c_1L_n},
}
where $K_R<\infty$ depends only on $R$ and the on-site dimension. If
\eq{
    d
    :=
    \dist\br{
        \supp(a),
        \supp(b)
    },
}
then observables which meet the same block satisfy
\eq{
    d
    \leq
    L_n+2R.
}
Consequently,
\eq{
    \ee^{-c_1L_n}
    \leq
    \ee^{2c_1R}
    \ee^{-c_1d}.
}
The finitely many exceptional blocks may be absorbed into the constant
$C_R$. Thus there are constants $C_R,\mu_R>0$ such that
\eq{
    \left|
        \omega(ab)-\omega(a)\omega(b)
    \right|
    \leq
    C_R
    \norm{a}
    \norm{b}
    \ee^{-\mu_Rd}.
}
Hence $\omega$ is exponentially clustering in the sense of
\cref{eq:bounded-support exponential clustering}.

On the other hand, by
\cref{eq:random block volume entropy,eq:random block product state},
\eq{
    S_{J_n}(\omega)
    =
    S_{J_n}(\varphi_n)
    \geq
    c_0L_n.
}
Since $L_n\to\infty$, we obtain
\eq{
    \sup_{
        \Lambda\Subset\ZZ\
        \mathrm{interval}
    }
    S_\Lambda(\omega)
    =
    \infty.
}
Therefore $\omega$ does not satisfy an area law.
\end{proof}

Brand{\~a}o and Horodecki use a stronger, uniform notion of exponential
clustering. For disjoint finite regions $X,Y\Subset\ZZ$, define
\eq{
    \operatorname{Cor}_\omega(X:Y)
    :=
    \sup
    \Set{
        \left|
            \omega(ab)-\omega(a)\omega(b)
        \right|:
        \begin{array}{c}
        a\in\calA_X,\ b\in\calA_Y,\\
        \norm{a}\leq1,\ \norm{b}\leq1
        \end{array}
    }.
}
The uniform Brand{\~a}o--Horodecki condition asks for constants
$\xi,l_0<\infty$, independent of the sizes of $X$ and $Y$, such that
\eql{\label{eq:uniform BH clustering}
    \operatorname{Cor}_\omega(X:Y)
    \leq
    2^{
        -\dist(X,Y)/\xi
    }
}
whenever
\eq{
    \dist(X,Y)
    \geq
    l_0.
}
This is much stronger than
\cref{eq:bounded-support exponential clustering}: in
\cref{eq:uniform BH clustering}, the observables may occupy regions of
arbitrarily large diameter while the constants remain fixed.

Brand{\~a}o and Horodecki prove that the uniform condition implies an
area law for pure states on finite one-dimensional chains
\cite{BrandaoHorodecki2015AreaLaw}. Their theorem does not directly
apply to a pure state of the infinite CAR chain, because the
restriction of an infinite pure state to a finite interval is mixed.
They explicitly identify the extension to infinitely many particles
and the corresponding von Neumann-algebraic setting as an open
problem.

The natural infinite-volume Fermionic analogue of their result is
therefore the conjectural implication
\eq{
    \left.
    \begin{array}{c}
    \omega\text{ pure and parity-invariant},\\
    \omega\text{ satisfies }
    \cref{eq:uniform BH clustering}
    \end{array}
    \right\}
    \quad
    \Longrightarrow
    \quad
    \sup_{
        \Lambda\Subset\ZZ\
        \mathrm{interval}
    }
    S_\Lambda(\omega)
    <
    \infty.
}
The block-product state in
\cref{claim:bounded support clustering does not imply area law} does
not satisfy the uniform condition: sufficiently large observables
within a random block recover the correlations hidden from
bounded-support measurements. Thus it does not provide a counterexample
to the uniform infinite-volume conjecture.
\section{LGAs, SREs and the Majorana number}
\label{sec:LGAs SREs}

\begin{claim}
    Let $[0,1] \ni t \mapsto \alpha_t$ be an LGA with a $\theta$-even generator. Then $\alpha_1\circ\sigma\circ\alpha_1^{-1}\circ\sigma^{-1}=\Ad{u}$ for some $u\in\calU(\calA)$ such that $\theta u = u$.
\end{claim}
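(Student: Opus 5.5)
We first compute the multiplicative commutator as another LGA. Since $\sigma$ is the Bogoliubov automorphism of $\Sigma=\sgn(X)$, it maps $\calA_{I_r(x)}$ into itself, and on even elements supported entirely on one half-chain it acts trivially (right) or trivially as well (left parity fixes even elements). Hence $\sigma\circ\alpha_t\circ\sigma^{-1}$ is the LGA generated by the transformed zero-chain $\sigma(\Phi_t(x))$. For $x$ far from the cut, the local approximants $\Phi^{(r)}_t(x)$ are even and supported in one half-chain, so $\sigma(\Phi^{(r)}_t(x))=\Phi^{(r)}_t(x)$. Thus the zero-chains $\Phi_t$ and $\sigma(\Phi_t)$ differ by
\eq{
    \Delta_t(x):=\Phi_t(x)-\sigma(\Phi_t(x)),
    \qquad
    \norm{\Delta_t(x)}\leq 2C\ee^{-\mu\,\dist(x,\text{cut})},
}
which is summable in $x$. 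Consequently $H_t:=\sum_x\Delta_t(x)$ converges in norm to a self-adjoint, $\theta$-even element of $\calA$, depending norm-continuously on $t$.

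The next step is the standard interaction-picture argument. Write $\alpha'_t:=\sigma\circ\alpha_t\circ\sigma^{-1}$. Both $\alpha_t$ and $\alpha'_t$ solve the Heisenberg-type equation, with derivations $\delta_t$ and $\delta'_t=\delta_t-\ii[\sigma\text{-difference}]$, i.e.\ $\delta_t(a)-\delta'_t(a)=\ii[H_t,a]$. Differentiating $t\mapsto\alpha_t\circ\alpha_t'^{-1}$ (or rather $\alpha'^{-1}_t\circ\alpha_t$ applied to local $a$) then gives
\eq{
    \frac{\dif}{\dif t}\br{\alpha_t'^{-1}\circ\alpha_t}(a)
    =
    \ii\,\big[\widetilde H_t,\br{\alpha_t'^{-1}\circ\alpha_t}(a)\big]
}
with $\widetilde H_t$ an automorphic image of $H_t$ (say $\alpha_t'^{-1}\circ\alpha_t$ or $\alpha_t$ applied to $H_t$, depending on the ordering convention). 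This is a bounded inner time-dependent derivation. Define $W_t$ as the norm-continuous solution of the linear ODE $\dot W_t=\ii W_t\widetilde H_t$ with $W_0=\Id$; it is unitary because $\widetilde H_t$ is self-adjoint, and $\theta$-even because $\widetilde H_t$ is even (the automorphisms involved commute with $\theta$) and the Dyson series preserves evenness. Uniqueness of solutions for bounded derivations then yields $\alpha_t'^{-1}\circ\alpha_t=\Ad{W_t}$ on the dense local algebra, hence everywhere. Rearranging,
\eq{
    \alpha_1\circ\sigma\circ\alpha_1^{-1}\circ\sigma^{-1}
    =\alpha_1\circ\alpha_1'^{-1}
    =\Ad{u},
}
with $u$ obtained from $W_1$ by applying $\alpha_1$ or $\alpha'_1$. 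This $u$ is still $\theta$-even since these automorphisms commute with $\theta$ and $\sigma\theta=\theta\sigma$.

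The main obstacle is the first step: making rigorous that the difference of the generators is a summable family of bounded even terms. This requires checking carefully that, for $x$ at distance more than $r$ from the cut, the even approximant supported in $I_r(x)$ is $\sigma$-fixed. On the right half this is immediate. On the left, $\sigma$ acts as $-1$ on odd generators there, which fixes even monomials; even elements of $\calA_{I_r(x)}$ with $I_r(x)$ on the left are spanned by even monomials of left generators, so they are fixed. Choosing $r=\lfloor\dist(x,\text{cut})\rfloor-1$ gives the exponential bound above. A secondary point, the ordering convention in $\Ad{}$ and which automorphism dresses $W_1$, is bookkeeping only.
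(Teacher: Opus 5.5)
This is essentially the paper's proof: the same norm-summable difference $K_t=\sum_x\br{\Phi_t(x)-\sigma(\Phi_t(x))}$, obtained from even approximants supported in one half-chain, followed by the interaction-picture cocycle for $\alpha_t\circ\br{\sigma\circ\alpha_t\circ\sigma^{-1}}^{-1}$, with evenness of the implementing unitary coming from uniqueness of the ODE. One bookkeeping remark: with the paper's convention $\frac{\dif}{\dif t}\alpha_t(a)=\alpha_t(\delta_t(a))$, it is your first ordering $\alpha_t\circ\alpha_t'^{-1}$, not the ``or rather'' one, that satisfies $\frac{\dif}{\dif t}\gamma_t(a)=\ii[\alpha_t(H_t),\gamma_t(a)]$ (solved by $\dot u_t=-\ii u_t\alpha_t(H_t)$ when $\Ad{u}(a)=u^\ast au$), and this gives the commutator at $t=1$ directly, with no dressing of $W_1$.
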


\begin{proof}
Set
\eq{
\ZZ_{\rm L}
&:=
\Set{x\in\ZZ:x\leq0},\\
\ZZ_{\rm R}
&:=
\Set{x\in\ZZ:x\geq1}\\
I_r(x) &:= \Set{y\in\ZZ : |x-y|\leq r}\,.
}
The locality automorphism restricts to parity on the left half-chain and
to the identity on the right half-chain:
\eq{
\sigma|_{\calA_{\ZZ_{\rm L}}}
&=
\theta|_{\calA_{\ZZ_{\rm L}}},\\
\sigma|_{\calA_{\ZZ_{\rm R}}}
&=
\Id|_{\calA_{\ZZ_{\rm R}}}.
}

Let $\Phi$ be the exponentially localized zero-chain generating
$t\mapsto\alpha_t$, so that
\eq{
\delta_t(a)
=
\ii\sum_{x\in\ZZ}
[\Phi_t(x),a],
\qquad
\theta(\Phi_t(x))
=
\Phi_t(x).
}
The local approximants to $\Phi_t(x)$ may be chosen $\theta$-even. Indeed,
we may replace $\Phi_t^{(r)}(x)$ by
\eq{
\widetilde{\Phi}_t^{(r)}(x)
:=
\frac{1}{2}
\br{
\Phi_t^{(r)}(x)
+
\theta\br{\Phi_t^{(r)}(x)}
}.
}
Since $\Phi_t(x)$ is $\theta$-even, this replacement does not increase the
approximation error:
\eq{
\norm{
\Phi_t(x)
-
\widetilde{\Phi}_t^{(r)}(x)
}
\leq
C\ee^{-\mu r}.
}

For $x\leq-1$, take $r=-x$. Then
\eq{
I_{-x}(x)
\subseteq
\ZZ_{\rm L},
}
and hence
\eq{
\sigma\br{
\widetilde{\Phi}_t^{(-x)}(x)
}
=
\theta\br{
\widetilde{\Phi}_t^{(-x)}(x)
}
=
\widetilde{\Phi}_t^{(-x)}(x).
}
It follows that
\eq{
\norm{
\Phi_t(x)-\sigma\br{\Phi_t(x)}
}
&\leq
\norm{
\Phi_t(x)-\widetilde{\Phi}_t^{(-x)}(x)
}\\
&\quad+
\norm{
\sigma\br{
\Phi_t(x)-\widetilde{\Phi}_t^{(-x)}(x)
}
}\\
&\leq
2C\ee^{-\mu(-x)}.
}
Similarly, if $x\geq2$, then
\eq{
I_{x-1}(x)
\subseteq
\ZZ_{\rm R},
}
so
\eq{
\sigma\br{
\widetilde{\Phi}_t^{(x-1)}(x)
}
=
\widetilde{\Phi}_t^{(x-1)}(x),
}
and therefore
\eq{
\norm{
\Phi_t(x)-\sigma\br{\Phi_t(x)}
}
\leq
2C\ee^{-\mu(x-1)}.
}
The two remaining terms, corresponding to $x=0,1$, are uniformly bounded.
Consequently,
\eq{
\sum_{x\in\ZZ}
\sup_{t\in[0,1]}
\norm{
\Phi_t(x)-\sigma\br{\Phi_t(x)}
}
<
\infty.
}
Thus the series
\eq{
K_t
:=
\sum_{x\in\ZZ}
\br{
\Phi_t(x)-\sigma\br{\Phi_t(x)}
}
}
converges in norm, uniformly in $t$. In particular,
$t\mapsto K_t$ is norm-continuous. Moreover, $K_t$ is self-adjoint and,
since $\sigma$ commutes with $\theta$,
\eq{
\theta(K_t)
=
K_t.
}

Define
\eq{
\beta_t
:=
\sigma\circ\alpha_t\circ\sigma^{-1}.
}
This LGA is generated by
\eq{
\delta_t^\sigma
:=
\sigma\circ\delta_t\circ\sigma^{-1},
}
where
\eq{
\delta_t^\sigma(a)
=
\ii\sum_{x\in\ZZ}
\left[
\sigma\br{\Phi_t(x)},
a
\right].
}
It follows from the norm convergence defining $K_t$ that
\eq{
\br{
\delta_t-\delta_t^\sigma
}(a)
=
\ii[K_t,a].
}

Consider the relative evolution
\eq{
\gamma_t
:=
\alpha_t\circ\beta_t^{-1}.
}
For local $a\in\calA_{\rm loc}$, differentiation gives
\eq{
\frac{\dif}{\dif t}\gamma_t(a)
&=
\alpha_t\br{
\br{
\delta_t-\delta_t^\sigma
}
\br{
\beta_t^{-1}(a)
}
}\\
&=
\ii\left[
\alpha_t(K_t),
\gamma_t(a)
\right].
}
Set
\eq{
h_t
:=
\alpha_t(K_t).
}
Since the generator of $\alpha_t$ is $\theta$-even, $\alpha_t$ commutes
with $\theta$. Therefore
\eq{
\theta(h_t)
=
h_t.
}

Let $u_t$ be the norm-continuous unitary solution of
\eq{
\frac{\dif}{\dif t}u_t
=
-\ii u_t h_t,
\qquad
u_0
=
\Id.
}
With the convention
\eq{
\Ad{u_t}(a)
=
u_t^\ast a u_t,
}
we have
\eq{
\frac{\dif}{\dif t}
\Ad{u_t}(a)
=
\ii\left[
h_t,
\Ad{u_t}(a)
\right].
}
Thus $\gamma_t$ and $\Ad{u_t}$ satisfy the same evolution equation and
have the same initial condition. Uniqueness of the evolution, first on
$\calA_{\rm loc}$ and then by continuity on $\calA$, gives
\eq{
\gamma_t
=
\Ad{u_t}.
}

It remains to verify the parity of $u_t$. Since $\theta(h_t)=h_t$, both
$t\mapsto u_t$ and $t\mapsto\theta(u_t)$ solve
\eq{
\frac{\dif}{\dif t}v_t
=
-\ii v_t h_t,
\qquad
v_0
=
\Id.
}
Uniqueness therefore implies
\eq{
\theta(u_t)
=
u_t
\qquad
\br{
t\in[0,1]
}.
}

Finally,
\eq{
\gamma_1
&=
\alpha_1\circ
\br{
\sigma\circ\alpha_1\circ\sigma^{-1}
}^{-1}\\
&=
\alpha_1\circ\sigma\circ
\alpha_1^{-1}\circ\sigma^{-1}.
}
Taking $u:=u_1$, we conclude that
\eq{
\alpha_1\circ\sigma\circ
\alpha_1^{-1}\circ\sigma^{-1}
=
\Ad{u},
\qquad
\theta(u)
=
u.
}
\end{proof}

\begin{cor}
    If $\omega$ is a symmetric SRE state, i.e., $\omega=\omega_0\circ\alpha_1$ with $\omega_0$ a pure product state and $\alpha_1$ an LGA with a symmetric generator, then its index is trivial, \eq{
    \calN_\sigma(\omega) = 1\,.
    }
\end{cor}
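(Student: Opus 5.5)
The plan is to reduce to two facts: the index is invariant under the LGA $\alpha_1$ (via \cref{claim:symmetric local automorphism invariance} and the preceding claim), and a pure $\theta$-invariant product state has trivial index.

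First, I treat the product state. Since $\omega_0$ is a product state, taking $\Lambda=\ZZ_{\rm L}$ gives
\eq{
\omega_0=\left.\omega_0\right|_{\calA_L}\hat{\otimes}\left.\omega_0\right|_{\calA_R}\,.
}
Under the decomposition $\calA=\calA_L\hat{\otimes}\calA_R$ the locality automorphism is $\sigma=\theta_L\hat{\otimes}\mathrm{id}_R$. By $\theta$-invariance of $\omega_0$, the restriction $\left.\omega_0\right|_{\calA_L}$ is $\theta_L$-invariant. This is the same calculation as in the split-implies-$\sigma$-local claim, and it yields $\omega_0\circ\sigma=\omega_0$. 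Hence $\omega_0$ is $\sigma$-local with implementing unitary $u=\Id$, and
\eq{
\calN_\sigma(\omega_0)=\omega_0(\Id^\ast\theta(\Id))=+1\,.
}
I would also check that the graded product evaluates as an ordinary product on homogeneous elements. This holds because odd elements have vanishing expectation on each parity-invariant factor.

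Second, I transfer the result along the LGA. The generator is $\theta$-even, so $\alpha_1$ commutes with $\theta$. The preceding claim gives $\partial_\sigma(\alpha_1)=\Ad{u}$ with $\theta(u)=u$. The hypotheses of \cref{claim:symmetric local automorphism invariance} are therefore met with $\alpha=\alpha_1$ and $v=u$. That claim gives that $\omega=\omega_0\circ\alpha_1$ is pure, $\theta$-invariant and $\sigma$-local, and
\eq{
\calN_\sigma(\omega)=\calN_\sigma(\omega_0)=+1\,.
}

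The main point requiring care is the first step: verifying $\omega_0\circ\sigma=\omega_0$ for a graded product state. Writing $\sigma$ as $\theta_L\hat{\otimes}\mathrm{id}_R$ depends on the cut convention $\ZZ_{\rm L}=\{x\leq 0\}$, which matches $\Sigma=-1$ on $x\leq0$. Once that is fixed, everything else follows directly from the earlier results.
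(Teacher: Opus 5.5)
Your proposal is correct and matches the argument the paper intends. The paper gives no separate proof of this corollary, but it follows from combining the preceding claim, which gives $\partial_\sigma(\alpha_1)=\Ad{u}$ with $\theta(u)=u$, with \cref{claim:symmetric local automorphism invariance}, once one notes, as you do via the split-state calculation, that a pure $\theta$-invariant product state satisfies $\omega_0\circ\sigma=\omega_0$ and therefore has $\calN_\sigma(\omega_0)=+1$.
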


\begin{claim}
    Let $k\in\ZZ$ be given. Then the $k$-Kitaev state and the $k+2$-Kitaev state are path-connected by a symmetric LGA.
\end{claim}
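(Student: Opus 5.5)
The plan is to work entirely at the single-body level in a Majorana basis. I will construct a finite-depth circuit of local real rotations $W(s)$, $s\in[0,1]$, with $W(0)=\Id$ and $W(1)P_kW(1)^*=P_{k+2}$ (or $P_{k-2}$, with the inverse circuit handling the other direction). Each rotation will be generated by a quadratic, hence $\theta$-even, Hamiltonian. The lifted Bogoliubov automorphisms then form a symmetric LGA $\alpha_s$, and quasi-freeness gives $\omega_{P_k}\circ\alpha_1=\omega_{P_{k+2}}$.

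First, rewrite \cref{example:Kitaev chain} in Majoranas. With $N=1$ and the $\Xi$ used there, each site $x$ carries two self-adjoint Majorana vectors, say $a_x$ (built from the first component) and $b_x$ (built from the second). The condition $\im P_k=\overline{\szpan}\{\psi_n^-\}$ says exactly that $\omega_{P_k}$ is the pure Gaussian state in which $i\gamma(a_n)\gamma(b_{n+k})$ takes a definite value $\pm1$ (fixed sign) for every $n$. In other words, $P_k$ is determined by the perfect matching $a_n\leftrightarrow b_{n+k}$ together with an orientation sign. Hence it suffices to find a $\Xi$-commuting orthogonal $W$ on the real Majorana space that maps the matching $\{a_n,b_{n+k}\}$ onto $\{a_{n-1},b_{n+k+1}\}$, preserving orientations; the offset then becomes $k+2$. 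This map has zero net Majorana flow across the cut, since one species moves left and the other moves right. That is exactly why it can be locally generated, whereas a shift of a single species cannot.

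Second, realize this shift as two layers. Order the Majorana chain as $\dots a_n,b_n,a_{n+1},b_{n+1},\dots$. Layer one rotates each plane $\szpan\{a_n,b_n\}$ by angle $s\pi/2$. Layer two rotates each plane $\szpan\{b_n,a_{n+1}\}$ by $s\pi/2$. At $s=1$ each rotation is a signed swap. Composing the two layers sends $a_n\mapsto\pm b_{n-1}$-slot and $b_n\mapsto\pm a_{n+1}$-slot, which is the desired opposite shift of the two species up to relabeling which letter sits in which slot. If the relabeling comes out the wrong way, I will reverse the orientation of one layer, which shifts by $-2$, and invert the circuit. Each planar rotation is $\exp(s\frac{\pi}{4}\gamma\gamma')$, whose generator $\frac{i\pi}{4}\cdot(-i\gamma\gamma')$ is self-adjoint, even, and supported on at most two sites. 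The zero-chain is therefore strictly local, uniformly bounded, and norm-continuous in $s$. Running layer one on $s\in[0,\tfrac12]$ and layer two on $[\tfrac12,1]$ gives an exponentially (indeed strictly) localized symmetric LGA in the sense of the definition above.

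The main obstacle I expect is the sign bookkeeping. The signed swaps send $\gamma\mapsto\gamma'$ and $\gamma'\mapsto-\gamma$, so after the circuit some pairs $i\gamma(a_{n-1})\gamma(b_{n+k+1})$ may carry the opposite sign from that required by $P_{k+2}$. A wrong sign produces a different pure state, and flipping the sign of a single Majorana is odd and therefore forbidden. To handle this, I will first compute the sign pattern directly; I expect it to be uniform in $n$ by translation invariance of the circuit. Any uniform wrong sign I will cure by choosing the rotation angles $-s\pi/2$ instead of $s\pi/2$ in one layer. Failing that, I will append a third layer of rotations by angle $s\pi$ within each pair $\{a_{n-1},b_{n+k+1}\}$, i.e. $\gamma\mapsto-\gamma$ on both members. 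That layer does not change the pair sign, so instead I will use a $\pi$-rotation mixing two neighbouring pairs, which flips one Majorana in each of two pairs and hence both pair signs. This move is even and local, and applied to all pairs it reverses every orientation. Once the signs match, $W(1)P_kW(1)^*=P_{k+2}$ holds, and the lift is the required symmetric LGA. This is consistent with \cref{claim:symmetric local automorphism invariance} and $(-1)^{k}=(-1)^{k+2}$.
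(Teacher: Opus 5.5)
Your proposal is correct and is essentially the paper's proof. Both use two time-separated layers of $\pi/2$ Majorana rotations (on-site, then nearest-neighbour), generated by even quadratic two-site terms. The composite Bogoliubov map shifts the two Majorana species in opposite directions (the paper's $V=\mathrm{diag}(R,R^*)$ with $V^*P_kV=P_{k+2}$), and the inverse circuit gives the other direction; coupling $b_n$ to $a_{n+1}$ rather than $\gamma_{x,1}$ to $\gamma_{x+1,2}$ is immaterial.

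The sign obstacle you anticipate never arises. Each $\pi/2$ plane rotation acts as $\gamma\mapsto\epsilon\gamma'$, $\gamma'\mapsto-\epsilon\gamma$, so with translation-invariant layers both members of every new pair pick up the same product of the two layer signs, and orientations are automatically preserved. For the same reason, flipping the rotation sense of one layer changes nothing. It is the order of the layers, not their rotation sense, that decides whether the offset moves by $+2$ or $-2$.
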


\begin{proof}
Write
\eq{
P_k
=
\frac{1}{2}
\begin{bmatrix}
\Id & -R^{k\ast}\\
-R^k & \Id
\end{bmatrix},
\qquad
\omega_k
:=
\omega_{P_k}.
}
Consider the one-particle unitary
\eq{
V
:=
\begin{bmatrix}
R & 0\\
0 & R^\ast
\end{bmatrix}.
}
A direct calculation gives
\eq{
V^\ast P_kV
&=
\frac{1}{2}
\begin{bmatrix}
\Id
&
-R^\ast R^{k\ast}R^\ast
\\
-RR^kR
&
\Id
\end{bmatrix}\\
&=
\frac{1}{2}
\begin{bmatrix}
\Id
&
-R^{(k+2)\ast}
\\
-R^{k+2}
&
\Id
\end{bmatrix}\\
&=
P_{k+2}.
}
It therefore remains to show that the Bogoliubov automorphism induced by
$V$ is the endpoint of an LGA with a $\theta$-even generator.

Let $e_1,e_2$ denote the standard basis of $\CC^2$ and set
\eq{
f_{x,1}
&:=
\ii\delta_x\otimes e_1,\\
f_{x,2}
&:=
\delta_x\otimes e_2.
}
Since
\eq{
\Xi
=
-\calC\br{\Id\otimes\sigma_3},
}
we have
\eq{
\Xi f_{x,j}
=
f_{x,j}
\qquad
\br{
x\in\ZZ,\ 
j\in\Set{1,2}
}.
}
Consequently, the operators
\eq{
\gamma_{x,j}
:=
\sqrt{2}B(f_{x,j})
}
are Majorana unitaries:
\eq{
\gamma_{x,j}^\ast
&=
\gamma_{x,j},\\
\gamma_{x,j}^2
&=
1_{\calA},\\
\theta\br{\gamma_{x,j}}
&=
-\gamma_{x,j}.
}

Define two translation-invariant, finite-range zero-chains by
\eq{
\Phi^{(0)}(x)
&:=
-\frac{\ii}{2}
\gamma_{x,1}\gamma_{x,2},\\
\Phi^{(1)}(x)
&:=
\frac{\ii}{2}
\gamma_{x,1}\gamma_{x+1,2}.
}
Each term is self-adjoint and $\theta$-even:
\eq{
\Phi^{(j)}(x)^\ast
&=
\Phi^{(j)}(x),\\
\theta\br{\Phi^{(j)}(x)}
&=
\Phi^{(j)}(x)
\qquad
\br{
j\in\Set{0,1}
}.
}

Choose continuous functions
\eq{
\lambda_0,\lambda_1
:
[0,1]
\longrightarrow
\RR
}
such that
\eq{
\supp(\lambda_0)
&\subseteq
\br{
0,\frac{1}{2}
},\\
\supp(\lambda_1)
&\subseteq
\br{
\frac{1}{2},1
},
}
with disjoint supports, and
\eq{
\int_0^1\lambda_0(t)\dif t
=
\int_0^1\lambda_1(t)\dif t
=
\frac{\pi}{2}.
}
Set
\eq{
\Phi_t(x)
:=
\lambda_0(t)\Phi^{(0)}(x)
+
\lambda_1(t)\Phi^{(1)}(x).
}
Then $\Phi$ is a continuous, strictly finite-range zero-chain satisfying
\eq{
\theta\br{\Phi_t(x)}
=
\Phi_t(x).
}
Let $t\mapsto\alpha_t$ be the corresponding LGA.

For the first part of the evolution, the CAR relations give
\eq{
\ii
\left[
\Phi^{(0)}(x),
\gamma_{x,1}
\right]
&=
-\gamma_{x,2},\\
\ii
\left[
\Phi^{(0)}(x),
\gamma_{x,2}
\right]
&=
\gamma_{x,1}.
}
Thus evolution through an angle $s$ induces the one-particle
transformation
\eq{
J_s
:=
\begin{bmatrix}
\cos(s)\Id
&
\sin(s)\Id
\\
-\sin(s)\Id
&
\cos(s)\Id
\end{bmatrix}.
}
In particular,
\eq{
J_{\pi/2}
=
\begin{bmatrix}
0 & \Id\\
-\Id & 0
\end{bmatrix}.
}

For the second part of the evolution, one similarly has
\eq{
\ii
\left[
\Phi^{(1)}(x),
\gamma_{x,1}
\right]
&=
\gamma_{x+1,2},\\
\ii
\left[
\Phi^{(1)}(x),
\gamma_{x+1,2}
\right]
&=
-\gamma_{x,1}.
}
The corresponding one-particle transformation through an angle $s$ is
therefore
\eq{
K_s
:=
\begin{bmatrix}
\cos(s)\Id
&
-\sin(s)R^\ast
\\
\sin(s)R
&
\cos(s)\Id
\end{bmatrix}.
}
At $s=\pi/2$,
\eq{
K_{\pi/2}
=
\begin{bmatrix}
0 & -R^\ast\\
R & 0
\end{bmatrix}.
}
Since the two evolutions occur in this order, their endpoint
one-particle transformation is
\eq{
J_{\pi/2}K_{\pi/2}
&=
\begin{bmatrix}
0 & \Id\\
-\Id & 0
\end{bmatrix}
\begin{bmatrix}
0 & -R^\ast\\
R & 0
\end{bmatrix}\\
&=
\begin{bmatrix}
R & 0\\
0 & R^\ast
\end{bmatrix}\\
&=
V.
}
It follows that
\eq{
\alpha_1\br{B(\psi)}
=
B(V\psi)
\qquad
\br{
\psi\in\calH
}.
}

Composition of a quasi-free state with this Bogoliubov automorphism
transforms its covariance projection according to
\eq{
\br{
\omega_{P_k}\circ\alpha_1
}
\br{
B(\psi)^\ast B(\varphi)
}
&=
\omega_{P_k}
\br{
B(V\psi)^\ast B(V\varphi)
}\\
&=
\langle
V\psi,
P_kV\varphi
\rangle\\
&=
\langle
\psi,
V^\ast P_kV\varphi
\rangle.
}
Hence
\eq{
\omega_{P_k}\circ\alpha_1
&=
\omega_{V^\ast P_kV}\\
&=
\omega_{P_{k+2}}.
}
Thus the $k$-Kitaev state and the $(k+2)$-Kitaev state are connected by
an LGA with a $\theta$-even generator.

Finally, the inverse Bogoliubov automorphism is induced by $V^\ast$ and
is likewise the endpoint of an LGA with a $\theta$-even generator.
Moreover,
\eq{
VP_kV^\ast
&=
\frac{1}{2}
\begin{bmatrix}
\Id
&
-RR^{k\ast}R
\\
-R^\ast R^kR^\ast
&
\Id
\end{bmatrix}\\
&=
\frac{1}{2}
\begin{bmatrix}
\Id
&
-R^{(k-2)\ast}
\\
-R^{k-2}
&
\Id
\end{bmatrix}\\
&=
P_{k-2}.
}
Therefore the $k$-Kitaev state and the $(k-2)$-Kitaev state are also
connected by a symmetric LGA.
\end{proof}

\begin{thm}[LGA-injectivity of the index] Let $\omega=\omega_{\rm p}\circ\beta_1$ be a $\theta$-invariant SRE state with a $\theta$-symmetric generator for its LGA $t\mapsto\beta_t$, such that $\calN(\omega_{\rm p},\omega_0)=+1$ where $\omega_0$ is the $k=0$ Kitaev state.

Then there exists an LGA $t\mapsto \alpha_t$ with a $\theta$-symmetric generator such that 
\eql{\omega = \omega_0\circ\alpha_1\,.} 
\end{thm}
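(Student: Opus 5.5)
Since LGAs with $\theta$-even generators form a group under composition and inversion (concatenate the zero-chains in time; for the inverse use the time-reversed, sign-flipped chain $-\Phi_{1-t}$, which is again $\theta$-even and exponentially localized), it suffices to connect $\omega_{\rm p}$ to $\omega_0$ by such an LGA. Indeed, if $\omega_{\rm p}=\omega_0\circ\gamma_1$ for a symmetric LGA $\gamma$, then
\eq{
\omega=\omega_0\circ\gamma_1\circ\beta_1,
}
and $\alpha_1:=\gamma_1\circ\beta_1$ is the endpoint of the concatenated symmetric LGA. The problem therefore reduces to the following: two pure, $\theta$-invariant product states $\omega_{\rm p}$ and $\omega_0$ (where $\omega_0$ is the $k=0$ Kitaev state, which is itself a product of on-site pure states with one complex mode per site) with trivial relative index $\calN(\omega_{\rm p},\omega_0)=+1$ are connected by a symmetric LGA.

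To build $\gamma$, I will use the product structure. For a product state, each on-site restriction $\omega_{\rm p}|_{\calA_{\{x\}}}$ is pure, since a product of states on $\calA_{\{x\}}\hat\otimes\calA_{\{x\}^c}$ can only be pure if each factor is. It is also even, so it is the vector state of a parity-homogeneous vector $\xi_x$ in the on-site Fock space $\CC^{2^N}$. The same holds for $\omega_0$, with vector $\eta_x$. Let $\epsilon_x\in\Set{\pm1}$ denote the parity of $\xi_x$ relative to $\eta_x$.

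\begin{enumerate}
\item \emph{Sites with $\epsilon_x=+1$.} Choose an even self-adjoint $h_x\in\calA_{\{x\}}^\theta$ with $\norm{h_x}\le\pi$ and $\ee^{\ii h_x}\eta_x=\xi_x$. This is possible because the even subalgebra acts transitively on each parity sector.
\item \emph{Sites with $\epsilon_x=-1$.} These sites are paired, and the odd defect is moved with a two-site even rotation $\ii\gamma_{x,1}\gamma_{y,2}$, exactly as in the Kitaev $k\to k+2$ claim.
\end{enumerate}

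Here is where the hypothesis enters. By the Hilbert-space-free parity argument in the proof of \cref{thm:main theorem for properties of the Z2 index of a pair}, local comparability of $\omega_{\rm p}$ and $\omega_0$ forces $\epsilon_x=-1$ at only finitely many sites: otherwise the two product states would be disjoint by the infinite tensor product criterion. Moreover, $\calN(\omega_{\rm p},\omega_0)=\prod_x\epsilon_x$, so the number of odd sites is even. The odd sites can therefore be paired among themselves. Each pair is fixed by a single even unitary $\ee^{\ii\frac{\pi}{2}\,\ii\gamma_{x,1}\gamma_{y,2}}$, which is supported on a finite interval.

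The zero-chain is then defined by
\eq{
\Phi_t(x):=h_x \quad\text{plus finitely many pair terms,}
}
and it is uniformly bounded, strictly on-site except for finitely many terms, and $\theta$-even. It is therefore exponentially localized. Because the on-site terms commute for distinct $x$ (they are even), the resulting LGA factorizes, and $\omega_0\circ\gamma_1=\omega_{\rm p}$ can be checked on local observables.

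The main obstacle is the local-comparability/finiteness step: showing rigorously that $\omega_{\rm p}\sim\omega_0$ forces $\epsilon_x=+1$ for all but finitely many $x$, and that the relative index equals $\prod_x\epsilon_x$. I would first attempt this via the product-vector criterion. If $\epsilon_x=-1$ at infinitely many sites, then $\prod_x|\ip{\eta_x}{\xi_x}|=0$, so the states are disjoint and cannot be locally comparable. If, on the other hand, $\omega_{\rm p}$ is not locally comparable to $\omega_0$, the hypothesis is vacuous as stated, and I would note this reading explicitly. The index formula itself then follows from multiplicativity under stacking in \cref{thm:main theorem for properties of the Z2 index of a pair}, applied to a finite block containing all the odd sites, with the complement having trivial relative index.
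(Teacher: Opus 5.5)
Your proposal is correct, and its architecture is the paper's. You reduce to the two product states by concatenating with $\beta$, record on-site relative parities, and show that the defect set $D$ is finite with $\calN(\omega_{\rm p},\omega_0)=(-1)^{|D|}$. You then use on-site even unitaries off $D$ and even two-site unitaries on a pairing of $D$. The differences are confined to the two steps you flagged.

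\emph{Finiteness.} The paper uses no infinite-tensor-product criterion. It approximates $u$ by a local unitary $v\in\calU(\calA_\Lambda)$ with $\norm{u-v}<1$, and notes that the even one-site parity $\Gamma_x$ commutes with $v$ for $x\notin\Lambda$. This gives $|\ve_{0,x}-\ve_{{\rm p},x}|\leq 2\norm{u-v}<2$, hence $D\subseteq\Lambda$. The argument is self-contained and needs no graded version of the product-state criterion. If you keep your route, restate the criterion. The equality $\prod_x|\ip{\eta_x}{\xi_x}|=0$ does not by itself give disjointness: a single flipped site already makes the product vanish, yet the two states are conjugate by one Majorana unitary. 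The relevant condition is divergence of $\sum_x\br{1-|\ip{\eta_x}{\xi_x}|}$, which does hold when infinitely many factors are orthogonal.

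\emph{The index.} The paper evaluates $\ip{\Omega_0}{\Theta_{\rm p}\Omega_0}$ in the incomplete-tensor-product realization; you use stacking instead. That works, but you must say why the complement is locally comparable with trivial index. Here it is immediate. The $k=0$ Kitaev state lives on $N=1$, so each on-site parity sector is one-dimensional. Hence $\epsilon_x=+1$ forces $\omega_{{\rm p},x}=\omega_{0,x}$, and the two states coincide on $\calA_{D^c}$. With larger on-site dimension you would take the block to be $\Lambda$, and use $\norm{(\omega_{\rm p}-\omega_0)|_{\calA_{\Lambda^c}}}\leq2\norm{u-v}<2$ together with item~4 of \cref{thm:main theorem for properties of the Z2 index of a pair}.

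The same one-dimensionality is what makes your explicit pair unitary, which equals $-\gamma_{x,1}\gamma_{y,2}$, land exactly on $\omega_{\rm p}$. The paper's arbitrary even $V_r$ does not rely on it. It also lets you skip stacking entirely. Indeed, $\omega_{\rm p}=\omega_0\circ\Ad{w}$ with $w:=\prod_{x\in D}\gamma_{x,1}$, so $\calN(\omega_{\rm p},\omega_0)=\calN(\omega_0,\omega_{\rm p})=\omega_0\br{w^*\theta(w)}=(-1)^{|D|}$ directly.
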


\begin{proof}
Since $P_0$ has no matrix elements coupling distinct sites, the $k=0$
Kitaev state $\omega_0$ is a pure product state.

For $j\in\Set{{\rm p},0}$, write
\eq{
\omega_j
=
\hat{\bigotimes}_{x\in\ZZ}
\omega_{j,x}.
}
Let $\Gamma_x\in\calA_{\Set{x}}$ denote the one-site parity unitary, so
that
\eq{
\Gamma_x^\ast
&=
\Gamma_x,\\
\Gamma_x^2
&=
\Id,\\
\theta(a)
&=
\Gamma_xa\Gamma_x
\qquad
\br{
a\in\calA_{\Set{x}}
}.
}
Since $\omega_{j,x}$ is pure and $\theta$-invariant, it is represented by
a unit vector $\xi_{j,x}$ of definite local parity. Thus there is some
$\ve_{j,x}\in\Set{\pm1}$ such that
\eq{
\Gamma_x\xi_{j,x}
=
\ve_{j,x}\xi_{j,x}.
}
Set
\eq{
D
:=
\Set{
x\in\ZZ:
\ve_{{\rm p},x}\neq\ve_{0,x}
}.
}

We first show that $D$ is finite. Since
$\calN(\omega_{\rm p},\omega_0)$ is defined, there is a unitary
$u\in\calU(\calA)$ such that
\eq{
\omega_0
=
\omega_{\rm p}\circ\Ad{u}.
}
Choose a finite set $\Lambda\subseteq\ZZ$ and a local unitary
$v\in\calU(\calA_\Lambda)$ satisfying
\eq{
\norm{u-v}
<
1.
}
If $x\notin\Lambda$, then $\Gamma_x$ is even and has support disjoint
from that of $v$, so
\eq{
v^\ast\Gamma_xv
=
\Gamma_x.
}
Consequently,
\eq{
\left|
\ve_{0,x}-\ve_{{\rm p},x}
\right|
&=
\left|
\omega_{\rm p}
\br{
u^\ast\Gamma_xu
}
-
\omega_{\rm p}
\br{
v^\ast\Gamma_xv
}
\right|\\
&\leq
\norm{
u^\ast\Gamma_xu
-
v^\ast\Gamma_xv
}\\
&\leq
2\norm{u-v}\\
&<
2.
}
Since $\ve_{0,x}$ and $\ve_{{\rm p},x}$ both belong to $\Set{\pm1}$, this
implies
\eq{
\ve_{0,x}
=
\ve_{{\rm p},x}
\qquad
\br{
x\notin\Lambda
}.
}
Thus
\eq{
D
\subseteq
\Lambda,
}
and in particular $D$ is finite.

We next relate the parity of $\lvert D\rvert$ to the relative index.
Let
\eq{
\br{
\pi_{\rm p},
\calH_{\rm p},
\Omega_{\rm p}
}
}
be the GNS representation of $\omega_{\rm p}$, and let
$\Theta_{\rm p}$ be the unitary implementing $\theta$, normalized by
\eq{
\Theta_{\rm p}\Omega_{\rm p}
=
\Omega_{\rm p}.
}
In the product realization of this representation,
\eq{
\Omega_{\rm p}
=
\hat{\bigotimes}_{x\in\ZZ}
\xi_{{\rm p},x},
}
and
\eq{
\Theta_{\rm p}
=
\hat{\bigotimes}_{x\in\ZZ}
\br{
\ve_{{\rm p},x}\Gamma_x
}.
}
Indeed, each local factor $\ve_{{\rm p},x}\Gamma_x$ fixes
$\xi_{{\rm p},x}$ and implements the local parity automorphism.

Set
\eq{
\Omega_0
:=
\pi_{\rm p}(u)\Omega_{\rm p}.
}
Then $\Omega_0$ represents $\omega_0$ in the same GNS representation.
Since $\omega_0$ is a pure product state, the standard incomplete
tensor-product description gives, up to an overall phase,
\eq{
\Omega_0
=
\hat{\bigotimes}_{x\in\ZZ}
\xi_{0,x}.
}
It follows that
\eq{
\Theta_{\rm p}\Omega_0
&=
\hat{\bigotimes}_{x\in\ZZ}
\br{
\ve_{{\rm p},x}\Gamma_x\xi_{0,x}
}\\
&=
\br{
\prod_{x\in\ZZ}
\ve_{{\rm p},x}\ve_{0,x}
}
\Omega_0\\
&=
\br{
-1
}^{\lvert D\rvert}
\Omega_0.
}
On the other hand,
\eq{
\calN(\omega_{\rm p},\omega_0)
&=
\omega_{\rm p}
\br{
u^\ast\theta(u)
}\\
&=
\ip{
\pi_{\rm p}(u)\Omega_{\rm p}
}{
\Theta_{\rm p}\pi_{\rm p}(u)\Omega_{\rm p}
}\\
&=
\ip{
\Omega_0
}{
\Theta_{\rm p}\Omega_0
}.
}
Therefore,
\eql{
\calN(\omega_{\rm p},\omega_0)
=
\br{
-1
}^{\lvert D\rvert}.
}
The hypothesis
\eq{
\calN(\omega_{\rm p},\omega_0)
=
1
}
thus implies that $\lvert D\rvert$ is even.

We now construct a symmetric LGA connecting $\omega_0$ to
$\omega_{\rm p}$. For every $x\notin D$, the vectors $\xi_{0,x}$ and
$\xi_{{\rm p},x}$ belong to the same parity eigenspace. Hence there is
an even one-site unitary
\eq{
U_x
\in
\calU\br{
\calA_{\Set{x}}
}
}
such that
\eq{
U_x\xi_{0,x}
&=
\xi_{{\rm p},x},\\
\theta(U_x)
&=
U_x.
}
Since the even unitary group of the finite-dimensional algebra
$\calA_{\Set{x}}$ is connected, we may write
\eq{
U_x
=
\ee^{-\ii h_x},
}
where
\eq{
h_x^\ast
&=
h_x,\\
\theta(h_x)
&=
h_x,\\
\norm{h_x}
&\leq
\pi.
}
Define the time-independent zero-chain
\eq{
\Phi^{(1)}_t(x)
:=
\begin{cases}
h_x,
&
x\notin D,\\
0,
&
x\in D.
\end{cases}\,.
}
This zero-chain is uniformly bounded, strictly on-site, and
$\theta$-even. Let $t\mapsto\tau_t$ be the corresponding LGA. Since the
terms at distinct sites are even and have disjoint supports, they
commute, and hence
\eq{
\omega_0\circ\tau_1
=
\br{
\hat{\bigotimes}_{x\notin D}
\omega_{{\rm p},x}
}
\hat{\otimes}
\br{
\hat{\bigotimes}_{x\in D}
\omega_{0,x}
}.
}

It remains to change the finitely many factors indexed by $D$. Since
$\lvert D\rvert$ is even, choose a pairing
\eq{
D
=
\Set{
x_1,y_1,\ldots,x_m,y_m
}.
}
For each $r\in\Set{1,\ldots,m}$, we have
\eq{
\ve_{{\rm p},x_r}
&=
-\ve_{0,x_r},\\
\ve_{{\rm p},y_r}
&=
-\ve_{0,y_r},
}
and therefore
\eq{
\ve_{{\rm p},x_r}\ve_{{\rm p},y_r}
=
\ve_{0,x_r}\ve_{0,y_r}.
}
Thus the vectors
\eq{
\xi_{0,x_r}
\hat{\otimes}
\xi_{0,y_r}
}
and
\eq{
\xi_{{\rm p},x_r}
\hat{\otimes}
\xi_{{\rm p},y_r}
}
belong to the same total-parity eigenspace of the two-site algebra
$\calA_{\Set{x_r,y_r}}$. Consequently, there is an even unitary
\eq{
V_r
\in
\calU\br{
\calA_{\Set{x_r,y_r}}
}
}
such that
\eq{
V_r
\br{
\xi_{0,x_r}
\hat{\otimes}
\xi_{0,y_r}
}
=
\xi_{{\rm p},x_r}
\hat{\otimes}
\xi_{{\rm p},y_r}.
}
As before, we may write
\eq{
V_r
=
\ee^{-\ii k_r},
}
where
\eq{
k_r^\ast
&=
k_r,\\
\theta(k_r)
&=
k_r,\\
\norm{k_r}
&\leq
\pi.
}

Define a second time-independent zero-chain by
\eq{
\Phi^{(2)}_t(x)
:=
\begin{cases}
k_r,
&
x=x_r
\text{ for some }
r\in\Set{1,\ldots,m},\\
0,
&
\text{otherwise}.
\end{cases}\,.
}
Since only finitely many terms occur, this is a finite-range,
$\theta$-even zero-chain. Let $t\mapsto\rho_t$ be its LGA. The pairs are
disjoint, and all the terms are even, so
\eq{
\br{
\omega_0\circ\tau_1
}
\circ\rho_1
=
\omega_{\rm p}.
}

After a continuous time reparametrization, we may concatenate
$t\mapsto\tau_t$ and $t\mapsto\rho_t$ to obtain an LGA
$t\mapsto\gamma_t$ with a $\theta$-even generator and endpoint
\eq{
\gamma_1
=
\tau_1\circ\rho_1.
}
It satisfies
\eql{
\omega_{\rm p}
=
\omega_0\circ\gamma_1.
}

Finally, concatenate $t\mapsto\gamma_t$ with the given LGA
$t\mapsto\beta_t$. Since both generators are $\theta$-even, the
concatenated evolution $t\mapsto\alpha_t$ also has a $\theta$-even
generator, and its endpoint is
\eq{
\alpha_1
=
\gamma_1\circ\beta_1.
}
Therefore,
\eq{
\omega_0\circ\alpha_1
&=
\omega_0\circ\gamma_1\circ\beta_1\\
&=
\omega_{\rm p}\circ\beta_1\\
&=
\omega.
}
This proves the claim.
\end{proof}
\section{Equivariant local adjustment}
\label{app:equivariant local adjustment}

\begin{proof}[Proof of \cref{lem:equivariant local adjustment}]
The fixed-point algebra is
\eq{
    \calB^\gamma
    :=
    \Set{
        b\in\calB:
        \gamma_g(b)=b
        \text{ for every }g\in G
    }.
}
Since \(\gamma\) is pointwise outer and \(\calB\) is simple,
\(\calB^\gamma\) is simple. It is also infinite-dimensional. Indeed,
the canonical conditional expectation
\eq{
    E_\gamma(a)
    :=
    \frac{1}{|G|}
    \sum_{g\in G}
    \gamma_g(a)
}
has finite index in this situation, so finite-dimensionality of
\(\calB^\gamma\) would imply finite-dimensionality of \(\calB\).

Let \((\pi,\calH,\Omega)\) be the GNS representation of \(\varphi\).
Denote by \(U_g\) the canonical implementation of \(\gamma_g\):
\eq{
    U_g\pi(a)U_g^*
    &=
    \pi(\gamma_g(a)),&
    U_g\Omega
    &=
    \Omega.
}
Let
\eq{
    \calH^G
    :=
    \Set{
        \xi\in\calH:
        U_g\xi=\xi
        \text{ for every }g\in G
    }.
}
If
\eq{
    P_G
    :=
    \frac{1}{|G|}
    \sum_{g\in G}U_g,
}
then
\eq{
    P_G\pi(a)\Omega
    =
    \pi(E_\gamma(a))\Omega.
}
It follows that
\eq{
    \overline{\pi(\calB^\gamma)\Omega}
    =
    \calH^G.
}
Moreover, Kaplansky density followed by averaging gives
\eq{
    \pi(\calB^\gamma)''
    =
    \Set{
        U_g:g\in G
    }'.
}
Compressing to the trivial isotypic subspace therefore gives
\eq{
    \left.
    \pi(\calB^\gamma)''
    \right|_{\calH^G}
    =
    \calB(\calH^G).
}
Thus the representation
\eq{
    \rho(d)
    :=
    \left.\pi(d)\right|_{\calH^G},
    \qquad
    d\in\calB^\gamma,
}
is irreducible, and \(\varphi|_{\calB^\gamma}\) is pure. Since
\(\calB^\gamma\) is simple, \(\rho\) is faithful. Furthermore,
\eq{
    \rho(\calB^\gamma)
    \cap
    \Compacts{\calH^G}
    =
    \Set{0}.
}
Indeed, otherwise simplicity would imply
\(\rho(\calB^\gamma)\subseteq\Compacts{\calH^G}\). Since
\(\calB^\gamma\) is unital, this would make \(\calH^G\)
finite-dimensional and hence make the faithful algebra
\(\rho(\calB^\gamma)\) finite-dimensional, a contradiction.

We first prove the corresponding fixed-vector statement. We may
assume that \(\calF\) is contained in the unit ball. Replacing it by
\eq{
    \bigcup_{g\in G}
    \gamma_g(\calF)\, ,
}
we may also assume that \(\calF\) is \(G\)-invariant. We shall choose a
finite set \(\calG_0\subset\calB^\gamma\) and \(\delta>0\) such that,
whenever \(\eta\in\calH^G\) is a unit vector satisfying
\eq{
    \left|
        \ip{\eta}{\rho(d)\eta}
        -
        \ip{\Omega}{\rho(d)\Omega}
    \right|
    <
    \delta
    \qquad
    (d\in\calG_0),
}
there exists a norm-continuous path
\eq{
    [0,1]\ni t
    \longmapsto
    u_t\in\calU(\calB^\gamma)
}
such that
\eq{
    u_0
    &=
    1_{\calB^\gamma},&
    \pi(u_1)\Omega
    &=
    \eta,
}
and
\eql{\label{eq:fixed sector local adjustment}
    \norm{
        \Ad{u_t}(a)-a
    }
    <
    \varepsilon
    \qquad
    (a\in\calF,\ t\in[0,1]).
}

Choose \(\kappa,\nu>0\) such that
\eq{
    2\pi\kappa+2\nu
    <
    \varepsilon.
}
By the norm-controlled form of Kadison transitivity
\cite[Theorem~3.4]{SpiegelEtAl2022Kadison} for the irreducible
representation \(\rho\), there exists \(s>0\) such that, whenever
\(\xi_0,\xi_1\in\calH^G\) are unit vectors satisfying
\eq{
    \norm{\xi_0-\xi_1}
    <
    s,
}
there is a self-adjoint \(q\in\calB^\gamma\) satisfying
\eq{
    \ee^{\ii\rho(q)}\xi_0
    &=
    \xi_1,&
    \norm{q}
    &<
    \nu.
}
Indeed, one first chooses a self-adjoint operator of small norm on
\(\operatorname{span}\Set{\xi_0,\xi_1}\) whose exponential sends
\(\xi_0\) to \(\xi_1\), and then applies Kadison transitivity with
norm control. Fix \(r>0\) such that
\eq{
    2\pi r
    <
    s.
}

Apply the row property
\cite[Property~1.3 and Section~4]{KishimotoOzawaSakai2003Homogeneity}
to the irreducible representation \(\pi\), the projection onto
\(\CC\Omega\), and the finite set \(\calF\). We obtain a row
\eq{
    x
    =
    (x_1,\ldots,x_n)
    \in
    M_{1,n}(\calB)
}
such that
\eq{
    \norm{xx^*}
    &\leq
    1,&
    \pi(xx^*)\Omega
    &=
    \Omega,
}
and
\eq{
    \norm{
        \operatorname{ad}_a
        \circ
        \operatorname{Ad}_x
    }
    <
    \kappa
    \qquad
    (a\in\calF),
}
where
\eq{
    \operatorname{Ad}_x(b)
    :=
    \sum_{j=1}^n
    x_jbx_j^*.
}
Set
\eq{
    d_{ij}
    &:=
    E_\gamma(x_ix_j^*)
    \in
    \calB^\gamma,\\
    \calG_0
    &:=
    \Set{
        d_{ij}:
        1\leq i,j\leq n
    }.
}
Choose \(\delta>0\) so small that the finite-dimensional perturbation
in the proof of
\cite[Theorem~3.1]{KishimotoOzawaSakai2003Homogeneity}, with tolerance
\(r\), applies to any two of the families below whose Gram matrices
differ by at most \(2\delta\).

Suppose that \(\eta\in\calH^G\) satisfies the preceding
\(\delta\)-estimate. Since \(\rho(\calB^\gamma)\) contains no nonzero
compact operators, the Glimm argument in the proof of
\cite[Theorem~3.1]{KishimotoOzawaSakai2003Homogeneity} gives a unit
vector \(\zeta\in\calH^G\) such that
\eq{
    \left|
        \ip{\zeta}{\rho(d)\zeta}
        -
        \ip{\Omega}{\rho(d)\Omega}
    \right|
    <
    \delta
    \qquad
    (d\in\calG_0),
}
and
\eq{
    \zeta
    \perp
    \operatorname{span}
    \Set{
        \rho(d_{ij})\Omega,
        \rho(d_{ij})\eta:
        1\leq i,j\leq n
    }.
}
Here exact orthogonality is obtained by projecting a sufficiently far
vector in the weakly-null Glimm sequence away from this
finite-dimensional subspace and normalizing; this changes the
preceding matrix coefficients arbitrarily little.

For fixed vectors \(\xi\in\calH^G\), we have
\eq{
    \ip{\xi}{\pi(x_ix_j^*)\xi}
    =
    \ip{\xi}{\rho(d_{ij})\xi}.
}
Moreover,
\eq{
    \ip{\pi(x_i^*)\zeta}
       {\pi(x_j^*)\Omega}
    &=
    \ip{\zeta}{\pi(x_ix_j^*)\Omega}\\
    &=
    \ip{\zeta}{\rho(d_{ij})\Omega}\\
    &=
    0,
}
and similarly
\eq{
    \ip{\pi(x_i^*)\zeta}
       {\pi(x_j^*)\eta}
    =
    0.
}
Consequently,
\eq{
    \operatorname{span}
    \Set{
        \pi(x_i^*)\zeta
    }_i
    \perp
    \operatorname{span}
    \Set{
        \pi(x_i^*)\Omega,
        \pi(x_i^*)\eta
    }_i.
}
The Gram matrices of the families associated with
\((\Omega,\zeta)\) differ by less than \(\delta\), while those
associated with \((\eta,\zeta)\) differ by less than \(2\delta\).
The finite-dimensional perturbation just cited therefore gives
positive contractions
\eq{
    \overline h_0,\overline h_1
    \in
    \calB
}
such that
\eq{
    \norm{
        [a,\overline h_j]
    }
    <
    \kappa
    \qquad
    (a\in\calF,\ j\in\Set{0,1}),
}
and
\eq{
    \norm{
        \pi(\overline h_0)(\Omega+\zeta)
    }
    &<
    r,\\
    \norm{
        \br{
            \Id-\pi(\overline h_0)
        }
        (\Omega-\zeta)
    }
    &<
    r,
}
as well as
\eq{
    \norm{
        \pi(\overline h_1)(\eta+\zeta)
    }
    &<
    r,\\
    \norm{
        \br{
            \Id-\pi(\overline h_1)
        }
        (\eta-\zeta)
    }
    &<
    r.
}
Concretely, one perturbs the map between the two families to an
isometry between their orthogonal spans, takes the positive
contraction associated with the self-adjoint flip, lifts it by
Kadison transitivity, and applies \(\operatorname{Ad}_x\). The
commutator estimates are then exactly the row estimate above.

The elements \(\overline h_j\) need not belong to
\(\calB^\gamma\). We therefore average the completed KOS generators
and set
\eq{
    k_j
    :=
    E_\gamma(\overline h_j)
    \in
    \calB^\gamma,
    \qquad
    j\in\Set{0,1}.
}
Since \(\calF\) is \(G\)-invariant,
\eq{
    \norm{
        [a,k_j]
    }
    &\leq
    \frac{1}{|G|}
    \sum_{g\in G}
    \norm{
        [
            \gamma_{g^{-1}}(a),
            \overline h_j
        ]
    }\\
    &<
    \kappa
    \qquad
    (a\in\calF).
}
Averaging also preserves the vector estimates because
\(\Omega,\eta,\zeta\in\calH^G\). For example,
\eq{
    \pi(k_0)(\Omega+\zeta)
    =
    \frac{1}{|G|}
    \sum_{g\in G}
    U_g\pi(\overline h_0)(\Omega+\zeta),
}
and hence
\eq{
    \norm{
        \pi(k_0)(\Omega+\zeta)
    }
    <
    r.
}
In the same way,
\eq{
    \norm{
        \br{
            \Id-\pi(k_0)
        }
        (\Omega-\zeta)
    }
    &<
    r,\\
    \norm{
        \pi(k_1)(\eta+\zeta)
    }
    &<
    r,\\
    \norm{
        \br{
            \Id-\pi(k_1)
        }
        (\eta-\zeta)
    }
    &<
    r.
}

For \(0\leq T\leq\Id\), functional calculus gives
\eq{
    \norm{
        \br{
            \ee^{\ii\pi T}-\Id
        }\xi
    }
    &\leq
    \pi\norm{T\xi},\\
    \norm{
        \br{
            \ee^{\ii\pi T}+\Id
        }\xi
    }
    &\leq
    \pi\norm{
        \br{
            \Id-T
        }\xi
    }.
}
Using the decompositions
\eq{
    \Omega
    &=
    \onehalf
    \br{
        \Omega+\zeta
    }
    +
    \onehalf
    \br{
        \Omega-\zeta
    },\\
    \zeta
    &=
    \onehalf
    \br{
        \Omega+\zeta
    }
    -
    \onehalf
    \br{
        \Omega-\zeta
    },
}
we obtain
\eq{
    \norm{
        \ee^{\ii\pi\pi(k_0)}\Omega-\zeta
    }
    <
    \pi r.
}
Similarly,
\eq{
    \norm{
        \ee^{\ii\pi\pi(k_1)}\eta-\zeta
    }
    <
    \pi r.
}
Define
\eq{
    \widetilde\Omega
    &:=
    \ee^{\ii\pi\pi(k_0)}\Omega,\\
    \widetilde\eta
    &:=
    \ee^{\ii\pi\pi(k_1)}\eta.
}
Then
\eq{
    \norm{
        \widetilde\Omega-\widetilde\eta
    }
    <
    2\pi r
    <
    s.
}
Both vectors belong to \(\calH^G\). By the choice of \(s\), there is
a self-adjoint \(q\in\calB^\gamma\) such that
\eq{
    \ee^{\ii\rho(q)}
    \widetilde\Omega
    &=
    \widetilde\eta,&
    \norm{q}
    &<
    \nu.
}
Consequently, the unitary
\eq{
    v
    :=
    \ee^{-\ii\pi k_1}
    \ee^{\ii q}
    \ee^{\ii\pi k_0}
    \in
    \calU(\calB^\gamma)
}
satisfies
\eq{
    \pi(v)\Omega
    =
    \eta.
}

Concatenate the three paths
\eq{
    t
    &\longmapsto
    \ee^{\ii\pi t k_0},\\
    t
    &\longmapsto
    \ee^{\ii t q}
    \ee^{\ii\pi k_0},\\
    t
    &\longmapsto
    \ee^{-\ii\pi t k_1}
    \ee^{\ii q}
    \ee^{\ii\pi k_0},
    \qquad
    t\in[0,1].
}
After reparametrization, this gives a norm-continuous path
\(t\mapsto u_t\) in \(\calU(\calB^\gamma)\) from
\(1_{\calB^\gamma}\) to \(v\). Furthermore,
\eq{
    \norm{
        \Ad{\ee^{\ii\pi t k_j}}(a)-a
    }
    &\leq
    \pi\norm{
        [k_j,a]
    }
    <
    \pi\kappa,\\
    \norm{
        \Ad{\ee^{\ii t q}}(a)-a
    }
    &\leq
    2\norm{q}
    <
    2\nu.
}
It follows that
\eq{
    \norm{
        \Ad{u_t}(a)-a
    }
    <
    2\pi\kappa+2\nu
    <
    \varepsilon
}
for \(a\in\calF\) and \(t\in[0,1]\). This proves
\cref{eq:fixed sector local adjustment}.

It remains to pass from fixed vectors to invariant pure states. Let
\(\psi\) be as in
\cref{lem:equivariant local adjustment}, and let
\(\calF'\subset\calB\) be finite and \(\varepsilon'>0\).
The same fixed-point argument shows that
\(\psi|_{\calB^\gamma}\) is pure. Since \(\calB^\gamma\) is simple,
the GNS representations of \(\varphi|_{\calB^\gamma}\) and
\(\psi|_{\calB^\gamma}\) have the same kernel. The ordinary
homogeneity theorem
\cite[Theorem~2.1]{KishimotoOzawaSakai2003Homogeneity}, applied to
\(\calB^\gamma\), therefore shows that
\(\psi|_{\calB^\gamma}\) is the image of
\(\varphi|_{\calB^\gamma}\) under an automorphism which is
asymptotically inner through a path in \(\calU(\calB^\gamma)\).

It follows that there is a unitary \(w\in\calU(\calB^\gamma)\) such
that, for
\eq{
    \eta
    :=
    \pi(w)\Omega
    \in
    \calH^G,
}
the associated vector state
\eq{
    \varphi_\eta(a)
    :=
    \ip{\eta}{\pi(a)\eta}
}
satisfies
\eq{
    \left|
        \varphi_\eta(d)-\psi(d)
    \right|
    &<
    \frac{\delta}{2}
    \qquad
    (d\in\calG_0),\\
    \left|
        \varphi_\eta(a)-\psi(a)
    \right|
    &<
    \varepsilon'
    \qquad
    (a\in\calF').
}
For the second estimate, we use that both states are
\(\gamma\)-invariant and hence
\eq{
    \varphi_\eta(a)
    &=
    \varphi_\eta(E_\gamma(a)),&
    \psi(a)
    &=
    \psi(E_\gamma(a)).
}
Combining the first estimate with the assumed closeness of
\(\psi\) and \(\varphi\) on \(\calG_0\) gives
\eq{
    \left|
        \varphi_\eta(d)-\varphi(d)
    \right|
    <
    \delta
    \qquad
    (d\in\calG_0).
}
The fixed-vector statement therefore gives a path
\(t\mapsto u_t\) in \(\calU(\calB^\gamma)\), with
\(u_0=1_{\calB^\gamma}\), such that
\eq{
    \varphi_\eta
    =
    \varphi\circ\Ad{u_1},
}
and
\eq{
    \norm{
        \Ad{u_t}(a)-a
    }
    <
    \varepsilon
    \qquad
    (a\in\calF,\ t\in[0,1]).
}
Consequently,
\eq{
    \left|
        \psi(a)
        -
        \br{
            \varphi\circ\Ad{u_1}
        }(a)
    \right|
    <
    \varepsilon'
    \qquad
    (a\in\calF').
}
This proves
\cref{lem:equivariant local adjustment}.
\end{proof}
%%%%%%%%%%%%%%%%%%%%%%%%%%%%%%%%%%%%%%%%%%%%%%%%%%%%%%%%%%%%%%%%%%%%%%%%%%%%%%%%%%%%
\newpage
\begingroup
\let\itshape\upshape
\printbibliography
\endgroup
\end{document}